\def\Gb{\Gamma_{\scriptsize\mbox{b}}}
\def\Gi{\Gamma_{\scriptsize\mbox{in}}}
\def\Go{\Gamma_{\scriptsize\mbox{out}}}
\def\Gf{\Gamma_{\scriptsize\mbox{fs}}}
\def\Gs{\Gamma_{\scriptsize\mbox{str}}}
\def\Gbh{\Gamma_{\scriptsize\mbox{b},h}}
\def\Gih{\Gamma_{\scriptsize\mbox{in},h}}
\def\Goh{\Gamma_{\scriptsize\mbox{out},h}}
\def\Gfh{\Gamma_{\scriptsize\mbox{fs},h}}
\def\Gsh{\Gamma_{\scriptsize\mbox{str},h}}
\def\tGsh{\tilde{\Gamma}_{\scriptsize\mbox{str},h}}
\def\Ls{\Lambda_{\scriptsize\mbox{str}}}
\def\Lfs{\Lambda_{\scriptsize\mbox{fs,str}}}
\def\Lsh{\Lambda_{\scriptsize\mbox{str},h}}
\def\Lj{\Lambda_{\scriptsize\mbox{j}}}
\def\Ekf{E_{\scriptsize\mbox{kin,flow}}}
\def\Epf{E_{\scriptsize\mbox{pot,flow}}}
\def\Eks{E_{\scriptsize\mbox{kin,str}}}
\def\Ees{E_{\scriptsize\mbox{ela,str}}}
\def\Et{E_{\scriptsize\mbox{total}}}
\def\Eth{E_{\scriptsize\mbox{total,h}}}
\def\uin{\text{u}_{\scriptsize\mbox{in}}}
\def\uout{\text{u}_{\scriptsize\mbox{out}}}
\def\u{\mathbf{u}}
\def\x{\mathbf{x}}
\def\n{\mathbf{n}}
\def\nL{\mathbf{n}_\Lambda}
\def\C{\mathbf{C}}
\def\nphi{\nabla\phi}
\def\Eq#1{(\ref{#1})}
\def\V{\mathcal{V}}
\def\hatV{\hat{\V}}
\def\Drho{D_\rho}
\def\gammaNB{\gamma_{\scriptsize\mbox{NB}}}
\def\betaNB{\beta_{\scriptsize\mbox{NB}}}
\newtheorem{assumption}{Assumption}
\newtheorem{remark}{Remark}
\newtheorem{definition}{Definition}
\newtheorem{theorem}{Theorem}
\newtheorem{corollary}{Corollary}
\newtheorem{proposition}{Proposition}
\newcommand{\vertiii}[1]{{\left\vert\kern-0.25ex\left\vert\kern-0.25ex\left\vert #1 
		\right\vert\kern-0.25ex\right\vert\kern-0.25ex\right\vert}}
\newcommand{\vertiiidg}[1]{{\left\vert\kern-0.25ex\left\vert\kern-0.25ex\left\vert #1 
		\right\vert\kern-0.25ex\right\vert\kern-0.25ex\right\vert_{\scriptsize DG}}}
\newcommand{\average}[1]{\ensuremath{\langle#1\rangle} }
\newcommand{\jump}[1]{\ensuremath{[\![#1]\!]} }
\begin{document}
\let\WriteBookmarks\relax
\def\floatpagepagefraction{1}
\def\textpagefraction{.001}

\shorttitle{A monolithic FE formulation for hydroelastic analysis of VLFS}
\shortauthors{O. Colom\'es et~al.}

\title [mode = title]{A monolithic Finite Element formulation for the hydroelastic analysis of Very Large Floating Structures}                      


\author[1]{Oriol Colom\'es}[type=author,
                        auid=000,
                        bioid=1,
                        prefix=,
                        role=,
                        orcid=0000-0002-5552-9695]
\cormark[1]
\ead{j.o.colomesgene@tudelft.nl}
\ead[url]{www.oriolcolomes.com}
\credit{Conceptualization of the formulation, analysis, software, numerical results, writing - original draft preparation}
\address[1]{Faculty of Civil Engineering and Geosciences, Delft University of Technology, Stevinweg 1, 2628 CN, Delft, The Netherlands}

\author[2]{Francesc Verdugo}[type=author,
	auid=000,
	bioid=1,
	prefix=,
	role=,
	orcid=0000-0003-3667-443X]
\ead{fverdugo@cimne.upc.edu}
\ead[url]{www.francescverdugo.com}
\credit{software, writing}
\address[2]{CIMNE -- Centre Internacional de M\`etodes Num\`erics a l'Enginyeria, Esteve Terrades 5, 08860 Castelldefels, Spain.}
\newcommand{\fvcom}[1]{{\color{blue}@francesc: #1}}

\author[3]{Ido Akkerman}[type=author,
auid=000,
bioid=1,
prefix=,
role=,
orcid=0000-0002-5937-0300]
\ead{i.akkerman@tudelft.nl}
\ead[url]{http://homepage.tudelft.nl/4x7d5}
\credit{analysis, writing}
\address[3]{Faculty of Mechanical, Maritime and Materials Engineering, Delft University of Technology, Mekelweg 2, 2628 CD, Delft, The Netherlands}



\cortext[cor1]{Corresponding author}


\begin{abstract}
	 In this work we present a novel monolithic Finite Element Method (FEM) for the hydroelastic analysis of Very Large Floating Structures (VLFS) with arbitrary shapes that is stable, energy conserving and overcomes the need of an iterative algorithm. The new formulation enables a fully monolithic solution of the linear free-surface flow, described by linear potential flow, coupled with floating thin structures, described by the Euler-Bernoulli beam or Poisson-Kirchhoff plate equations. 
	 The formulation presented in this work is general in the sense that solutions can be found in the frequency and time domains, it overcomes the need of using elements with $ C^1 $ continuity by employing a continuous/discontinuous Galerkin (C/DG) approach, and it is suitable for Finite Elements of arbitrary order.
	 We show that the proposed approach can accurately describe the hydroelastic phenomena of VLFS with a variety of tests, including structures with elastic joints, variable bathymetry and arbitrary structural shapes.
\end{abstract}



\begin{keywords}
	 Very Large Floating Structures, \\
	 Hydroelasticity, \\
	 Finite Elements, \\
	 Fluid-Structure Interaction, \\
	 Monolithic scheme, \\
	 Mixed-Dimensional PDEs	
\end{keywords}

\maketitle

\section{Introduction}
Floating offshore structures are of great interest for many applications. A particular type of floating structures are the so called Very Large Floating Structures (VLFS). One can find several examples of VLFS \cite{watanabe2004hydroelastic}, such as floating airports \cite{isobe1999research,zhang2017connection}, floating breakwaters \cite{nagata1998prediction,cheng2022wave}, floating solar energy installations \cite{trapani2015review,sahu2016floating}, or even futuristic floating modular cities \cite{takeuchi2022mega,dai2022modular}. The study of the behavior of VLFS is, therefore, relevant for a wide variety of industries and scientific disciplines. 
One of the main characteristics of VLFS is that the overall structural stiffness is relatively low, behaving like elastic thin plates. In addition, due to their large dimensions, the incoming waves are typically relatively short compared with the structure length. Thus, due to the low stiffness combined with short incoming waves, the response of VLFS is governed by a strong coupling between inertial, hydrodynamic and elastic responses, what is known as hydroelastic response.   

The study of hydroelastic phenomena entails several challenges, namely the strong coupling between the elastic deformation of the structure and its hydrodynamic response, the analysis of the structural response under the effect of nonlinear waves, the characterization of the behavior of finite structures or the nonlinear interaction between flow and structure, see \cite{korobkin2011mathematical}. During the last decades, several techniques have been developed to analyse hydroelastic phenomena for VLFS under the effect of waves. We refer the reader to \cite{chen2006review} for an in depth review on different methods used for the hydroelastic analysis of VLFS. Some studies have been carried out based on experimental analysis of floating elastic platforms, see for instance~\cite{yago1996hydoroelastic,liu2002time,schreier2020experimental}. However, experimental studies are limited in terms of structural size and wave conditions. Other studies are based on analytical or semi-analytical approaches, see~\cite{tsubogo1999dispersion,ohkusu2004hydroelastic,andrianov2005hydroelasticity,andrianov2006hydroelastic,xu2022theoretical}, where the fundamental behaviour of floating elastic structures is assessed, assuming infinite or finite floating platforms with regular shapes. Again, analytical approaches are limited to the study of VLFS with regular shapes, \textit{e.g.} rectangular or circular platforms. Hence, the use of numerical techniques is essential for the analysis of the hydroelastic behavior of VLFS of finite size with irregular shapes and subject to a variety of wave input conditions. 

One of the most popular numerical approaches for the hydroelastic analysis of VLFS relies on the mode expansion framework for linear potential flow theory, based on the Boundary Element Method. Assuming a negligible structure draft, this approach models the effect of the structure by a pressure distribution on the free surface. The dynamic pressure distribution is approximated by a set of panels with different accuracies, \textit{e.g.} piece-wise constant, linear, quadratic  or cubic, see~\cite{maeda1995hydroelastic,yasuzawa1997dynamic,hamamoto19983d,kashiwagi1998b}. Most of the works based on the panel method, or mode expansion, assume regular shaped structures. For the analysis of irregular shaped VLFS, the Finite Element Method (FEM) becomes predominant in the literature due to its suitability to model the structural behaviour of arbitrary shaped platforms. Some works have developed a coupling strategy between BEM and FEM frameworks, see for instance~\cite{hamamoto19963d,hamamoto19983d,wang2004higher,shirkol2018coupled,pal2018fully}.
The study of the hydroelastic phenomena in VLFS can be done in either the frequency domain or the time domain. The former assumes a linear response of the transient effects described as a time-harmonic motion, see for instance \cite{karperaki2021hydroelastic,liu2020dmm,ding2019simplified}. The time-domain analysis avoids the linear assumption, making it a suitable approach for the analysis of steep wave fronts and cases with highly nonlinear effects, \cite{liu2002time,karperaki2016time}. In this work we develop a formulation that is suitable for both approaches, frequency and time domain.

VLFS hydroelasticity is inherently a Fluid-Structure Interaction (FSI) problem, that is a  fluid flow problem coupled with a structural elasticity problem. There are two main frameworks that can be used to solve this type of problems: a monolithic approach, where a unified set of coupled governing equations are solved, and a partitioned approach, where the equations for the fluid and for the structure are solved separately and a coupling strategy is used to ensure compatibility between the two solutions. 
One coupling strategy that can be used for a partitioned scheme is a weak coupling (also called a staggered scheme) strategy. Here, a new fluid solution is solved using the structural displacement of the previous time step.  While the obtained fluid solution is used to determine the loading on the structure. One of the advantages of partitioned schemes is that, if the software is available separately for the flow and structural deformation, one can reuse them and just develop an interface that exchanges the coupling variables. However, stable behavior of this coupling method is not guaranteed due to the added-mass effect~\cite{causin2005added}. In case of incompressible fluids, selecting a smaller time step does not resolve this problem \cite{vanBrummelen2009}. 
 
Alternatively, one could use a strong coupling strategy, where  the compatibility conditions are enforced simultaneously. This is done by iteratively solving fluid and structural updates until convergence is achieved. 
For a strongly coupled the aforementioned instabilities due added-mass effect can be avoided by using sufficient relaxation in the iteration procedure. However,  this may lead to excessive iterations even if the two separate problems are linear, see for instance~\cite{kyoung2006fem,kim2009analysis,weir2011nonlinear}. 
Otherwise, monolithic approaches are stable and do not require additional iterations to resolve the coupling. Mayor drawback is the fact that \textit{ad hoc} software has to be developed in case it is  not available for the coupled problem at hand. A particular feature of monolithic VLFS simulations is that they lead to a mixed-dimensional problem, \textit{i.e.} a problem described by a system of Partial Differential Equations (PDEs) defined in domains of different topological dimension. The solution of mixed-dimensional PDEs in a FEM framework presents an additional challenge, related to the coupling of Finite Element (FE) spaces defined in different dimensions, see \cite{DaversinCatty2021}.

In this work we present a novel monolithic FEM framework for the hydroelastic analysis of VLFS with arbitrary shapes that is stable, energy conserving and overcomes the need of an iterative algorithm. The formulation is based on the monolithic FEM approach proposed by Akkerman \textit{et al.} in~\cite{akkerman2020isogeometric} for linear free-surface potential flow, which is here extended to floating structures modeled as Euler-Bernoulli beams or Poisson-Kirchhoff plates. Moreover, the formulation presented in this manuscript is general in the sense that solutions can be found in the frequency and time domains. In addition, the proposed formulation overcomes the need of using elements with $ C^1 $ continuity, \textit{i.e.} continuity of rotations on the element boundaries, by using a continuous/discontinuous Galerkin (C/DG) approach formulated in~\cite{engel2002continuous}. Note that we restrict this manuscript to the analysis of linear problems, \textit{i.e.} linear potential flow coupled with linear Euler-Bernoulli/Poisson-Kirchhoff structural formulations. However, the framework would still hold for nonlinear potential flow theory and/or nonlinear structural models. For the sake of completeness, in this manuscript we focus on a detailed analysis of the formulation for linear problems, keeping its extension to nonlinear problems as a future work.

The manuscript is organized as follows: in Section~\ref{sec:problem_setting} we describe the problem setting, with the definition of the governing equations for the fluid, structure in two and three-dimensional cases, as well as the coupling conditions. In Section~\ref{sec:formulation} we develop the novel monolithic formulation for VLFS for the most general case, \textit{i.e.} a floating Poisson-Kirchhoff plate in a three-dimensional domain, giving the expression of the fully discrete problem for both, frequency and time domains. Section~\ref{sec:numerical_analysis} is dedicated to the numerical analysis of the method, where we prove consistency, energy conservation, stability and convergence statements. The numerical results are shown in Section~\ref{sec:numerical_results}, where we analyse the behavior of the method for a variety of cases in two and three dimensions and for the frequency and time domains. The final conclusions are given in Section~\ref{sec:conclusions}.

\section{Problem setting}\label{sec:problem_setting}
Let us consider a thin structure floating in a fluid. We denote the fluid domain as $ \Omega $, bounded by the bottom surface, $ \Gb $, the inlet surface, $ \Gi $, the outlet surface, $ \Go $, the free surface, $ \Gf $, and the interface with the floating structure, $ \Gs $. The floating structure is bounded by $\delta\Gs\equiv\Lfs$, a set of entities of dimension $d{-}2$ with $d$ the topological dimension of $\Omega$. We might also consider the case in which the structure has a set of internal joints with different structural properties, denoted as $ \Lj $. The geometry of the idealized problem is given in Figure~\ref{fig:domain}. 
\begin{figure}[pos=h!]
	\centering
	\includegraphics[width=0.6\textwidth]{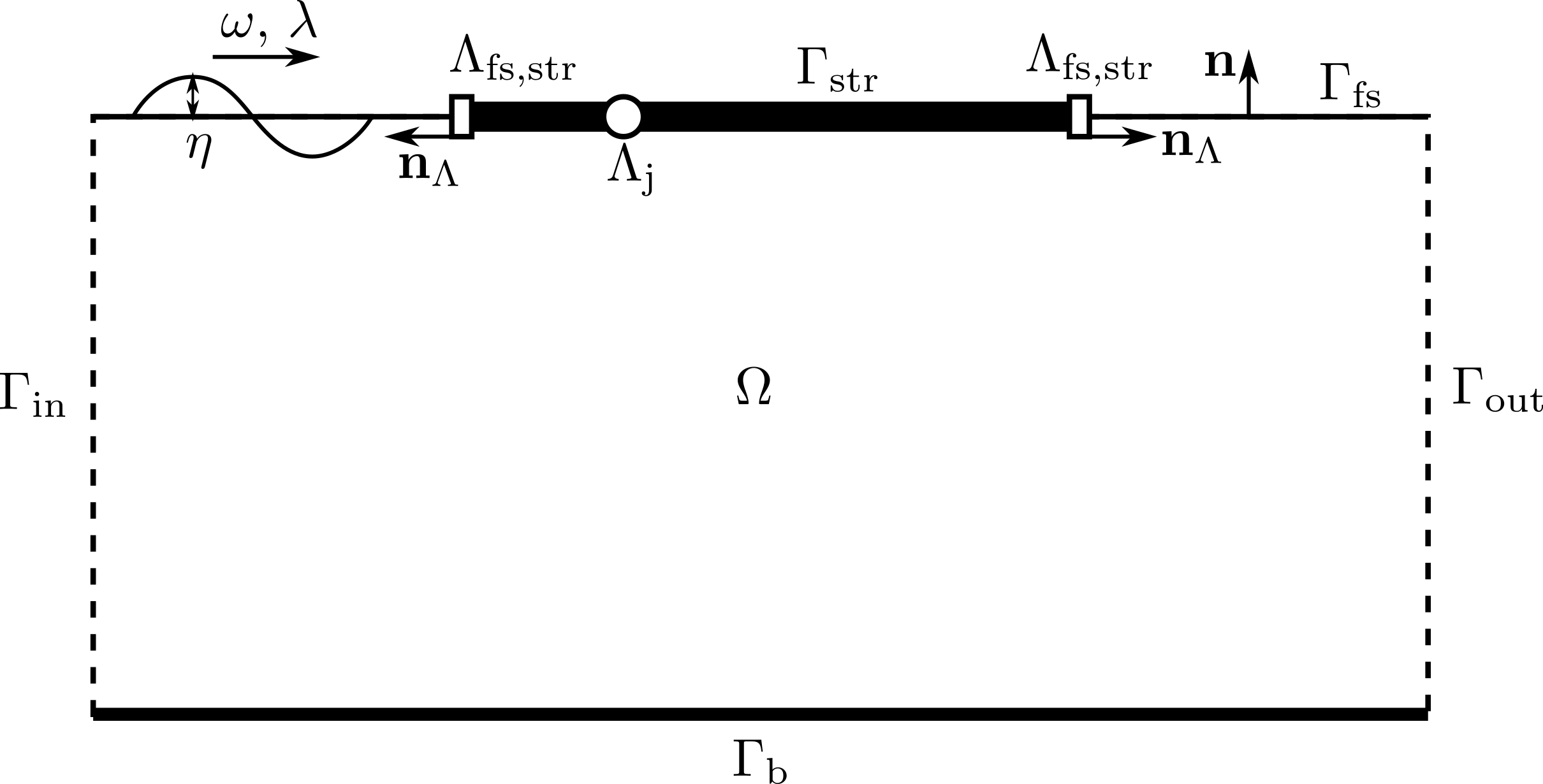}
	\caption{Sketch with the definition of the geometrical entities.}
	\label{fig:domain}	
\end{figure}
We use  $ \n $ for the  normal vector to the surface, while
$ \nL $ to indicate the normal to the joint on the structure plane. 
Note also that the normal $ \nL $ appears naturally in the formulation when integrating by parts on the structural domain, see Section~\ref{subsec:weak_form}.

Let us also consider the following assumptions:
\begin{assumption}
	\label{ass:pflow}
	The fluid in $ \Omega $ is inviscid, incompressible and irrotational and can be described well by  (linear) potential flow.
	We also assume that there is no cavitation, \textit{i.e.} detachment of the structure with respect to the fluid.
\end{assumption}
\begin{assumption}
	\label{ass:linear}
	The incoming waves have a small steepness and can be well described well by (linear) Airy wave theory.
\end{assumption}
\begin{assumption}
\label{ass:beam}
The floating structures are thin and can be modeled by the linear Euler-Bernoulli equations in the case of a 2-dimensional domain and by the linear Poisson-Kirchhoff theory for the 3-dimensional domain. 
\end{assumption}
\subsection{Linear potential flow theory}
Let us denote the fluid velocity as $ \u:\Omega\rightarrow\mathbb{R}^d $. From the incompressibility condition assumed in Assumption \ref{ass:pflow}, we know that
\begin{equation}\label{eq:incompressibility}
	\nabla\cdot\u=0\quad\mbox{in }\Omega.
\end{equation}
We also know that for an inviscid and irrotational flow, there exists a potential field, $ \phi:\Omega\rightarrow\mathbb{R} $, that satisfies
\begin{equation}\label{eq:potential}
	\u=\nabla\phi.
\end{equation}
Combining equations \Eq{eq:incompressibility} and \Eq{eq:potential}, we reach the governing equation for a potential flow:
\begin{equation}\label{eq:pflow}
	\Delta\phi = 0\quad\mbox{in }\Omega.
\end{equation}
Equation \Eq{eq:pflow} is supplemented with the appropriate boundary and interface conditions. Before detailing such conditions, we introduce some notation. Let us denote by $ \eta $ the surface elevation with respect to the surface of the fluid at rest, and by $ \n $ the normal vector to any surface pointing outwards, see Figure~\ref{fig:domain}. We also use the notation $ (\cdot)_t$  and $ (\cdot)_{tt}$ for the first and second order  time derivative. Using this notation, the kinematic boundary conditions are given by
\begin{subequations}
	\label{eq:kinematic_bc}
	\begin{align}
		\label{eq:kinematic_bc_b}
		\n\cdot\nphi&=0\quad\mbox{on }\Gb,\\
		\label{eq:kinematic_bc_i}
		\n\cdot\nphi&=\uin\quad\mbox{on }\Gi,\\
		\label{eq:kinematic_bc_o}
			\n\cdot\nphi&=\uout\quad\mbox{on }\Go,\\
		\label{eq:kinematic_bc_fs}
		\n\cdot\nphi&=\eta_t\quad\mbox{on }\Gf\cup\Gs.
	\end{align}
\end{subequations}
Equation \Eq{eq:kinematic_bc_b} is enforcing a \textit{no penetration} boundary condition, where the velocity normal to the boundary is zero. In equations \Eq{eq:kinematic_bc_i} and \Eq{eq:kinematic_bc_o} we enforce a prescribed inlet and outlet normal velocities, $ \uin $ and $ \uout $, respectively. At the free surface and at the beam interface, equation \Eq{eq:kinematic_bc_fs}, we enforce that the time derivative of the fluid surface is equal to the normal component of the velocity. Note that this last condition is only valid under Assumption \ref{ass:linear}, where due to small wave steepness the normal component of the velocity is well approximated by the vertical velocity component. When this assumption is not valid, equation \Eq{eq:kinematic_bc_fs} should be replaced by $ \phi_z-\nabla\phi\nabla\eta=\eta_t $, where $ \phi_z $ is the directional derivative of $ \phi $ in the vertical direction.

In addition, the fluid satisfies the dynamic boundary conditions at the free surface and at the interface between the fluid and the structure. This condition is given by the Bernoulli's equation for pressure in a potential flow, which at the free surface and interface boundary reads
\begin{equation}\label{eq:dynamic_bc}
	p=-\rho_w\phi_t-g\rho_w\eta-\frac{1}{2}\left(\nphi\right)^2\quad\mbox{on }\Gf\cup\Gs.
\end{equation}
Where $ p $ is the pressure, $ \rho_w $ the fluid density and $ g=9.81\ \mbox{m/s$ ^2 $} $ the gravity acceleration. Under Assumption \ref{ass:linear}, the quadratic terms in equation \Eq{eq:dynamic_bc} can be neglected, leading to the linearized condition
\begin{equation}\label{eq:dynamic_linear_bc}
p=-\rho_w\phi_t-g\rho_w\eta\quad\mbox{on }\Gf\cup\Gs.
\end{equation}
At the free surface, we enforce that the pressure is equal to the atmospheric pressure, which we assume to be zero, \textit{i.e.} $ p=p_a=0 $. At the fluid-structure interface, the pressure is in equilibrium with the structure dynamics. Using Assumption \ref{ass:beam}, \textit{i.e.} no cavitation and thin beam/plate theories, the structural motion is governed either by the Euler-Bernoulli equation form beams or the Poisson-Kirchoff plate theory. In what follows we describe the two cases.

\subsection{Euler-Bernoulli beam theory}\label{subsec:euler_bernoulli}

We start by defining the 2-dimensional case, where the floating structure is modeled as a 1-dimensional Euler-Bernoulli beam. In that case, we have that the surface elevation $\eta:\Gs\rightarrow\mathbb{R}$, which from Assumption~\ref{ass:beam} is equivalent to the beam deflection, satisfies
\begin{equation}\label{eq:beam_eqn}
	\rho_bh_b\eta_{tt} + D\Delta^2\eta=p\quad\mbox{on }\Gs.
\end{equation}
With $ \rho_b $ the structure density, $ h_b $ the structure thickness and $ D $ the structural rigidity, given in terms of the Young modulus $E$ and the moment of inertia as$ D\eqdef EI=Eh_b^3/12 $.  Here, $ D $ is assumed to be constant.  Adding equations \Eq{eq:dynamic_linear_bc} and \Eq{eq:beam_eqn} together and using the fact that the atmospheric pressure is zero at the free surface, we have that the potential field and the fluid surface elevation satisfy the following conditions at the fluid free surface and fluid-structure interface
\begin{subequations}
	\label{eq:dynamic_bc_f_s}
	\begin{align}
		\label{eq:dynamic_linear_bc_f}
		\phi_t+g\eta=0\quad\mbox{on }\Gf,\\
		\label{eq:dynamic_linear_bc_s}
		d_0\eta_{tt} + \Drho\Delta^2\eta+\phi_t+g\eta=0\quad\mbox{on }\Gs.
	\end{align}
\end{subequations}

Where $ d_0\eqdef\frac{\rho_bh_b}{\rho_w} $ is the draft (submerged section) of the structure and $ D_\rho\eqdef\frac{D}{\rho_w} $. Equation \Eq{eq:pflow} together with the kinematic boundary conditions given in \Eq{eq:kinematic_bc} and the dynamic boundary conditions given in \Eq{eq:dynamic_bc_f_s}, define the systems of equations that will be analysed in subsequent sections.

In addition, we consider that the structure is allowed to have free motion at the boundaries, \textit{i.e.} zero moment, $M\eqdef D\Delta\eta=0$, and shear $V\eqdef\nabla(D\Delta\eta)\cdot\nL=0$. We also assume that in the case of having joints, these will act as linear rotational springs, meaning that the mean moment at the join will depend linearly on the relative rotation angle, with a spring constant of $k_\varphi$. Since there are not any external point moment or shear force applied to the joint, the shear and moment will be continuous at these points. These conditions can be summarized as
\begin{subequations}
	\label{eq:boundary_conditions_beam}
	\begin{align}
	\label{eq:zero_moment}
	D\Delta{\eta} &= 0&\mbox{on }\Lfs,\\
	\label{eq:zero_shear}
	\nabla(D\Delta\eta)\cdot\nL&=0&\mbox{on }\Lfs,\\
	\label{eq:moment_mean}	
	\average{D\Delta{\eta}} &= -k_\varphi\jump{\nabla\eta\cdot\nL}&\mbox{on }\Lj,\\
	\label{eq:zero_moment_jump}	
	\jump{D\Delta{\eta}\nL} &= 0&\mbox{on }\Lj,\\
	\label{eq:zero_shear_jump}
	\jump{\nabla(D\Delta\eta)\cdot\nL}&=0&\mbox{on }\Lj.	
	\end{align}
\end{subequations}


\subsection{Poisson-Kirchhoff plate theory}\label{subsec:poisson_kirchhoff}
For the 3-dimensional case, the floating structure is modeled as a 2-dimensional thin plate governed by the Poisson-Kirchhoff theory. In this case, the plate deflection, \textit{i.e.} surface elevation, follows the relation
\begin{equation}\label{eq:plate_eqn}
	\rho_bh_b\eta_{tt} + \nabla^2\colon(\C\colon\nabla^2\eta)=p\quad\mbox{on }\Gs.
\end{equation}
Where $ \nabla^2 $ is the Hessian operator and $ \C $ is a fourth-order symmetric tensor containing the elastic coefficients. Note that the problem is defined in $ \mathbb{R}^3 $, so the Hessian operator will be a $ 3\times3 $ matrix, while the elastic tensor $ \C $ has dimension $ 2 $ on each component. Assuming that the structure remains contained in a plane aligned with the coordinate system, \textit{e.g.} normal to the $ z $-direction, one can extend the 2-dimensional definition of the tensor as follows
\begin{equation}\label{eq:elastic_tensor}
	C_{ijkl}=\begin{cases}
		\frac{h_b^3}{12}[\mu(\delta_{ik}\delta_{jl}+\delta_{il}\delta_{jk})+\lambda\delta_{ij}\delta_{kl}]\qquad1\leq i,j,k,l\leq 2,\\
		0\qquad\mbox{otherwise}.
	\end{cases}
\end{equation}
In equation \Eq{eq:elastic_tensor}, $ \delta_{\alpha\beta} $ is the Kronecker delta, $ \mu\eqdef\frac{E}{2(1+\nu)} $ and $ \lambda\eqdef \frac{\nu E}{(1-\nu^2)} $, with $ E $ the Young's modulus and $ \nu $ the Poisson's ratio. 

Following the same procedure as in Section~\ref{subsec:euler_bernoulli}, we introduce equation~\Eq{eq:plate_eqn} into equation~\Eq{eq:dynamic_linear_bc}, resulting into the set of governing equations on the free and structure surfaces that read
\begin{subequations}
	\label{eq:dynamic_bc_f_s_3d}
	\begin{align}
		\label{eq:dynamic_linear_bc_f_3d}
		\phi_t+g\eta=0\quad\mbox{on }\Gf,\\
		\label{eq:dynamic_linear_bc_s_3d}
		d_0\eta_{tt} + \nabla^2\colon(\C_\rho\colon\nabla^2\eta)+\phi_t+g\eta=0\quad\mbox{on }\Gs.
	\end{align}
\end{subequations}
Where we have used the notation $ C_{\rho,ijkl}\eqdef \frac{C_{ijkl}}{\rho_w} $. Again, we consider that the structure is free of reactions on its boundaries $\Lfs$, \textit{i.e.} zero moment and zero shear. We also assume that we can have linear rotational springs on 1-dimensional joints, $ \Lj $, where the normal is linearly dependent on the rotation to the joint, with a spring constant of $ k_\phi $, assumed scalar for simplicity. These conditions are summarized in the following set of equations
 \begin{subequations}
 	\label{eq:boundary_conditions_plate}
 	\begin{align}
 		\label{eq:zero_moment_plate}
 		(\C_\rho:\nabla^2\eta)\cdot\nL &= 0&\mbox{on }\Lfs,\\
 		\label{eq:zero_shear_plate}
 		(\nabla\cdot(\C_\rho:\nabla^2\eta))\cdot\nL&=0&\mbox{on }\Lfs,\\
 		\label{eq:moment_mean_plate}	
 		\average{\C_\rho:\nabla^2\eta} &= -k_\varphi\jump{\nabla\eta\otimes\nL}&\mbox{on }\Lj,\\
 		\label{eq:zero_moment_jump_plate}	
 		\jump{(\C_\rho:\nabla^2\eta)\cdot\nL} &= 0&\mbox{on }\Lj,\\
 		\label{eq:zero_shear_jump_plate}
 		\jump{(\nabla\cdot(\C_\rho:\nabla^2\eta))\cdot\nL}&=0&\mbox{on }\Lj.	
 	\end{align}
 \end{subequations}

Note that for 2-dimensional problems the Poisson-Kirchhoff formulation presented in equations~\Eq{eq:dynamic_bc_f_s_3d}-\Eq{eq:boundary_conditions_plate} collapses to the Euler-Bernoulli formulation given in~\Eq{eq:dynamic_bc_f_s}-\Eq{eq:boundary_conditions_beam}. Hereinafter we will restrict the derivations for the most general Poisson-Kirchhoff case. We refer the reader to Appendix~\ref{sec:appendix_A} for the equivalent formulation for the floating the Euler-Bernoulli beam case.

\section{A monolithic Finite Element formulation}\label{sec:formulation}
\subsection{Weak form}\label{subsec:weak_form}
In order to derive the weak form of the problem given by equations \Eq{eq:pflow}, \Eq{eq:kinematic_bc} and \Eq{eq:dynamic_bc_f_s}, we first introduce some notation that will be used hereinafter. Let us denote by $L^r(\Omega)$, $1\leq r<\infty$, the spaces of functions such that their $r$-th power is absolutely integrable in $\Omega$. For the case in which $r=2$, we have a Hilbert space with inner product 
\begin{equation}
\label{eq:scalar_product_O}
(u,v)_\Omega\eqdef\int_\Omega u(\x) \, v(\x)d\Omega
\end{equation}
and induced norm $\|u\|_{L^2(\Omega)}\equiv\|u\|_\Omega\eqdef(u,u)_\Omega^{1/2}$. Abusing of the notation, the same symbol as in (\ref{eq:scalar_product_O}) will be used for the integral of the product of two functions, even if these are not in $L^2(\Omega)$, and both for scalar and vector fields. The space of functions whose distributional derivatives up to order $m$ are in $L^2(\Omega)$ are denoted by $H^m(\Omega)$. We will focus on the case of $m=1$, which is also a Hilbert space. Given a Banach space $X$,  $L^r(0,T;X)$ is the space of time dependent functions such that their $X$-norm is in $L^r(0,T)$, being $ [0,T] $ the time interval where such functions are defined. In addition, we will also use the inner product on a given boundary, $ \Gamma \subset \Omega $, defined as
\begin{equation}
\label{eq:scalar_product_G}
(u,v)_\Gamma\eqdef\int_\Gamma u(\x) \, v(\x)d\Gamma,
\end{equation}
with the associated norm $\|u\|_{L^2(\Gamma)}\equiv\|u\|_\Gamma\eqdef(u,u)_\Gamma^{1/2}$.

Let $ \V\eqdef L^2(0,T;H^2(\Omega)) $ be a functional space, $ \V_{\Gf} $ the trace space of $ \V $ on the free surface $ \Gf $, i.e. $ \V_{\Gf}\eqdef\{v|_{\Gf}:v\in\V\} $, and $ \V_{\Gs} $ the trace space of $ \V $ on the structure $ \Gs $. The weak form of the problem reads: find $ [\phi,\kappa,\eta] \in\V\times\V_\Gamma$ such that
\begin{equation}\label{eq:weak_form}
B([\phi,\kappa,\eta],[w,v,u])=L([w,v,u])\quad\forall[w,v,u]\in\V\times\V_{\Gf}\times\V_{\Gs}.
\end{equation}
Where the bilinear form, assuming that the structure is continuous and satisfies the boundary conditions \eqref{eq:zero_moment_plate}-\eqref{eq:zero_shear_plate}, is given by
\begin{align}\label{eq:bilinear_modified}
B([\phi,\kappa,\eta],[w,v,u])\eqdef&(\nphi,\nabla w)_\Omega - (\kappa_t,w)_{\Gf} + \beta\left(\phi_t+g\kappa,\alpha_fw+v\right)_{\Gf} \\\nonumber
-&(\eta_t,w)_{\Gs} +\left(d_0\eta_{tt} + \phi_t+g\eta,u\right)_{\Gs} + \left( \C_\rho:\nabla^2\eta,\nabla^2 u\right)_{\Gs},
\end{align}
with $ \alpha_f $ and $ \beta$ scaling parameters introduced for stability and dimensional consistency purposes, as discussed in Section \ref{sec:numerical_analysis}, defined in equations \Eq{eq:alpha_f}. Without loss of generality, we assume that there is no external loading acting on the structure, leading to
\begin{equation}\label{eq:linear}
L([w,v])\eqdef(\uin,w)_{\Gi}  + (\uout,w)_{\Go} .
\end{equation}

For the sake of completeness, we proceed with the description of the steps followed to reach the bilinear form \Eq{eq:bilinear_modified}. The first row is obtained by multiplying equation \Eq{eq:potential} against the test function, $ w $, integrating over the domain $ \Omega $, integrating by parts and replacing the normal velocity at the boundaries by the respective kinematic boundary condition as stated in \Eq{eq:kinematic_bc}. 

Following the monolithic approach described in \cite{akkerman2020isogeometric}, the last term in the first row of \Eq{eq:bilinear_modified} incorporates the dynamic boundary condition on the free surface, equation \Eq{eq:dynamic_linear_bc_f} and \Eq{eq:dynamic_linear_bc_f_3d}. Here we multiply such condition against a modified test function, $ \alpha_fw+v $, integrate over the free surface boundary $ \Gf $ and weight this contribution by a parameter, $ \beta$. The term $ \alpha_fw $ is added to the test function $ v $ to guarantee coercivity of the system, see Section \ref{subsec:continuous}. The main difference with respect to \cite{akkerman2020isogeometric} is that in the cited work the authors select $ \beta=\frac{1}{2} $ as a fixed parameter, while here we analyse the relation between $ \alpha_{f} $ and $ \beta $ that results in a stable formulation.

Similarly, the second and third terms of the second row of \Eq{eq:bilinear_modified} enforces the dynamic boundary condition on the beam surface, equation \Eq{eq:dynamic_linear_bc_s} and \Eq{eq:dynamic_linear_bc_s_3d}. In this case, however, the fourth order term is integrated by parts twice. In this process, we assume that the functions belong to $ H^2(\Omega) $, so that the rotations are continuous across the structure. In addition, the lower-dimensional integrals that appear in the integration by parts cancel when we enforce the boundary conditions \eqref{eq:zero_moment_plate}-\eqref{eq:zero_shear_plate}, \textit{i.e.}
\begin{align*}
\left((\nabla\cdot(\C_\rho:\nabla^2\eta))\cdot\nL,u\right)_{\Lfs}=0,\\
\left((\C_\rho:\nabla^2\eta)\cdot\nL,\nabla u\right)_{\Lfs}=0.
\end{align*}

\newpage
\begin{remark}[Structure with joints]
	The weak problem given in \Eq{eq:weak_form} assumes the variational space $ \V_{\Gs} $ belongs to the trace space of $ H^2(\Gs) $. However, when we have structural joints, the rotations (deflection gradients) are not continuous at the joint location. Thus, the space of functions $ \eta $ can be defined as the space of trace functions of $ H^2(\Omega) $ on $ \Gs\setminus\Lj $, with continuous surface elevation, but discontinuous gradients on $ \Lj $. 
	With this definition of the space $ \V_{\Gs} $, integrating by parts the fourth order term appearing in \eqref{eq:dynamic_linear_bc_s_3d}, we have that
	\begin{align}\label{eq:by_parts_joint}
	\left(\nabla^2:(\C_\rho:\nabla^2\eta),u\right)_{\Gs}=
	&-\left(\nabla\cdot(\C_\rho:\nabla^2\eta),\nabla u\right)_{\Gs}
	+\left((\nabla\cdot(\C_\rho:\nabla^2\eta))\cdot\nL,u\right)_{\Ls}\\\nonumber
	&+\left(\jump{(\nabla\cdot(\C_\rho:\nabla^2\eta))\cdot\nL},\average{u}\right)_{\Lj}
	+\left(\average{\nabla\cdot(\C_\rho:\nabla^2\eta)},\jump{u\nL}\right)_{\Lj}\\\nonumber
	\overset{\eqref{eq:zero_shear_plate},\eqref{eq:zero_shear_jump_plate},\jump{u\n_{\Lj}}=0}{=}
	&\left(\C_\rho:\nabla^2\eta,\nabla^2 u\right)_{\Gs}-
	\left(\left(\C_\rho:\nabla^2\eta\right)\cdot\nL,\nabla u\right)_{\Ls}\\\nonumber
	&-\left(\jump{\left(\C_\rho:\nabla^2\eta\right)\cdot\nL},\average{\nabla u}\right)_{\Lj} 
	- \left(\average{\C_\rho:\nabla^2\eta},\jump{\nabla u\otimes\nL}\right)_{\Lj}\\\nonumber
	\overset{\eqref{eq:zero_moment_plate},\eqref{eq:zero_moment_jump_plate},\eqref{eq:moment_mean_plate}}{=}&\left(\C_\rho:\nabla^2\eta,\nabla^2 u\right)_{\Gs} + \left(k_\rho\jump{\nabla\eta\otimes\nL},\jump{\nabla u\otimes\nL}\right)_{\Lj}.
	\end{align}
	
	With $k_\rho\eqdef\frac{k_\varphi}{\rho_w}$. Then, the bilinear form equivalent to \eqref{eq:bilinear_modified} will read
	\begin{align}\label{eq:bilinear_modified_joints}
	B_{\scriptsize\mbox{j}}([\phi,\kappa,\eta],[w,v,u])\eqdef B([\phi,\kappa,\eta],[w,v,u])+\left(k_\rho\jump{\nabla\eta\otimes\nL},\jump{\nabla u\otimes\nL}\right)_{\Lj}.
	\end{align}
\end{remark}

Since \eqref{eq:bilinear_modified} is a particular case of \eqref{eq:bilinear_modified_joints} for $ \Lj=\emptyset $, hereinafter we will use the later bilinear form. 

\subsection{Spatial discretization}
Let us consider a FE partition $\Omega_h$ of the domain $\Omega$ from which we can construct conforming finite dimensional spaces for the potential $\V_{h} \subset \V$, for the surface elevation at the free surface $\V_{\Gf,h}\subset \V_{\Gf}$ and for the surface elevation at the structure $\V_{\Gs,h}\subset \V_{\Gs}$. We denote by $ \mathcal{E}_h $ the set of facets (entities of one dimension lower than the dimension of $ \Omega $) generated by the FE partition $\Omega_h$. We define as $ \Gamma_h $ the set of facets of $ \mathcal{E}_h $ that lie on the boundary of $ \Omega $, $ \partial\Omega $, i.e. $ \Gamma_h\eqdef\mathcal{E}_h\cap\partial\Omega $. Following this notation, we also define the discrete boundary parts $ \Gbh\eqdef\mathcal{E}_h\cap\Gb $, $ \Gih\eqdef\mathcal{E}_h\cap\Gi $, $ \Goh\eqdef\mathcal{E}_h\cap\Go $, $ \Gfh\eqdef\mathcal{E}_h\cap\Gf $ and $ \Gsh\eqdef\mathcal{E}_h\cap\Gs $. In addition, we denote by $ \Lsh $ the set of edges or points between facets of $ \Gsh $ that are not joints and do not belong to the boundary of $\Gs $, see Figure~\ref{fig:domain_discretized}. We assume that in the case that the structure has one or multiple joints, these will lie in an edge or point between the FE partition facets $ \mathcal{E}_h $.
\begin{figure}[pos=h!]
	\centering
	\includegraphics[width=0.6\textwidth]{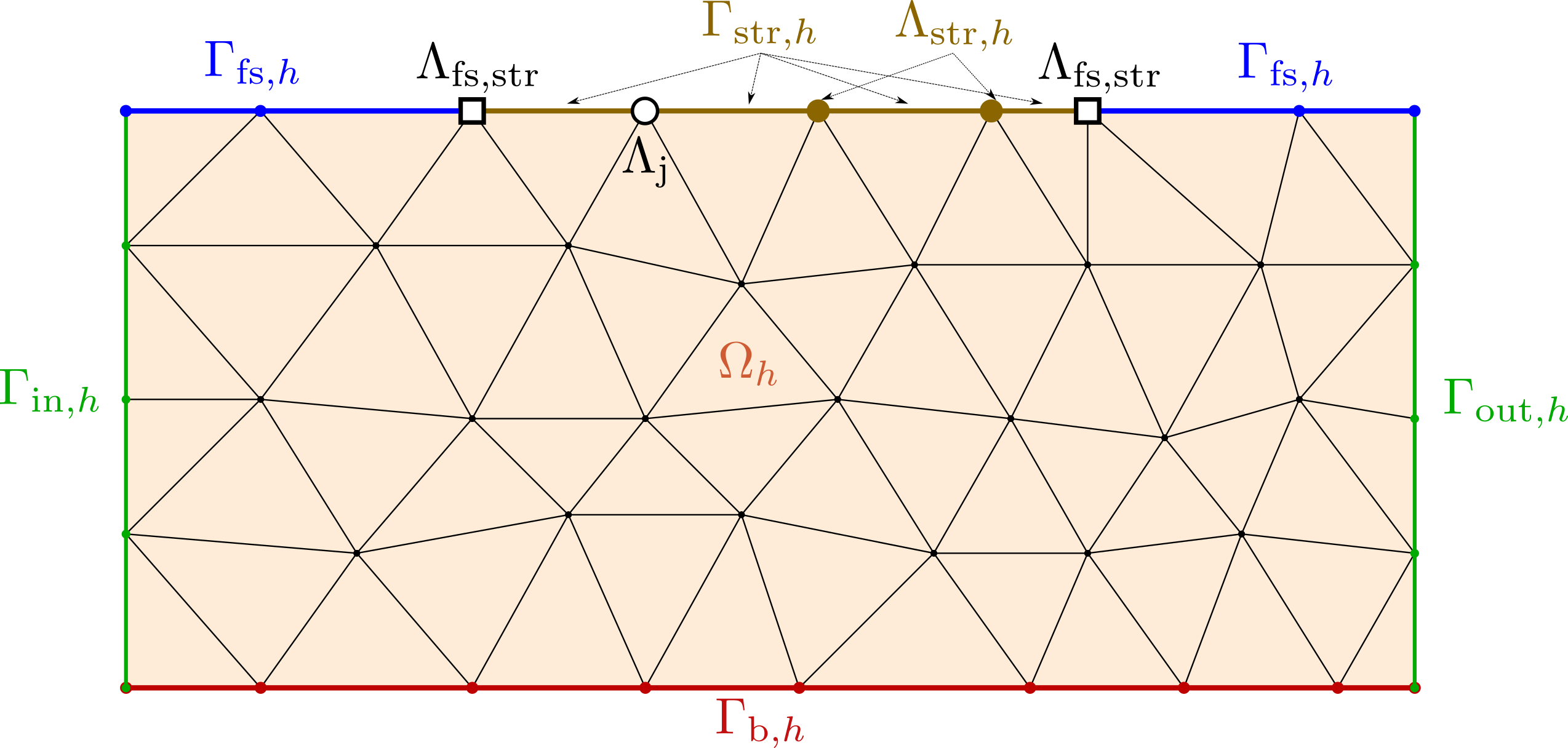}
	\caption{Sketch with the definition of the discrete geometrical entities.}	
	\label{fig:domain_discretized}
\end{figure}

Using this notation, the Galerkin FE formulation equivalent to \Eq{eq:weak_form} reads: find $ [\phi_h,\kappa_h,\eta_h] \in\V_h\times\V_{\Gf,h}\times\V_{\Gs,h}$ such that
\begin{equation}\label{eq:galerkin_form}
	B_h([\phi_h,\kappa_h,\eta_h],[w_h,v_h,u_h])=L_h([w_h,v_h,u_h])\quad\forall[w_h,v_h,u_h]\in\V_h\times\V_{\Gf,h}\times\V_{\Gs,h},
\end{equation}
where the bilinear form is given by
\newpage
\begin{align}\label{eq:bilinear_modified_h}
	B_h([\phi_h,\kappa_h,\eta_h],[w_h,v_h,u_h])\eqdef
	&(\nabla\phi_h,\nabla w_h)_{\Omega_h} - (\kappa_{h,t},w_h)_{\Gfh} 
	+ \beta\left(\phi_{h,t}+g\kappa_h,\alpha_fw_h+v_h\right)_{\Gfh}\\\nonumber
	-&(\eta_{h,t},w_h)_{\Gsh} 
	+\left(d_0\eta_{h,tt} + \phi_{h,t}+g\eta_h,u_h\right)_{\Gsh} \\\nonumber
	+&\left( \C_\rho:\nabla^2\eta_h,\nabla^2 u_h\right)_{\Gsh} +\left(k_\rho\jump{\nabla\eta_h\otimes\nL},\jump{\nabla u_h\otimes\nL}\right)_{\Lj},
\end{align}
and the linear right hand side is
\begin{equation}\label{eq:linear_h}
	L_h([w_h,v_h])\eqdef(\uin,w_h)_{\Gih}  + (\uout,w_h)_{\Goh} .
\end{equation}

Note that the formulation defined by \Eq{eq:galerkin_form}-\Eq{eq:linear_h} assumes that the FE spaces are defined in $ H^2(\Gs\setminus\Lj) $, which requires $ \mathcal{C}^1 $ continuity across elements in the structure, i.e continuous gradients between elements. This condition is satisfied by certain FE types such as the Morley~\cite{morley1971constant}, Argyris~\cite{argyris1968tuba} or NURBS-based FE~\cite{cottrell2006isogeometric}. Nonetheless, we propose an alternative formulation that can be generalized to $ H^1(\Gs) $ FE defined, for instance, by Lagrange polynomials. This formulation is based on a Continuous/Discontinuous Galerkin (C/DG) approach for fourth order operators, \cite{engel2002continuous}, where the discrete functions are continuous at the element nodes, but the gradient is discontinuous. The continuity of first order derivatives is weakly enforced via an interior penalty approach.

The resulting C/DG formulation reads:  find $ [\phi_h,\kappa_h,\eta_h] \in\hatV_h\times\hatV_{\Gf,h}\times\hatV_{\Gs,h}$ such that
\begin{equation}\label{eq:cgdg_form}
	\hat{B}_h([\phi_h,\kappa_h,\eta_h],[w_h,v_h])=L_h([w_h,v_h,u_h])\quad\forall[w_h,v_h,u_h]\in\hatV_h\times\hatV_{\Gf,h}\times\hatV_{\Gs,h},
\end{equation}
where
\begin{align}\label{eq:bilinear_modified_h_cgdg}
\hat{B}_h([\phi_h,\kappa_h,\eta_h],[w_h,v_h,u_h])\eqdef
&B_h([\phi_h,\kappa_h,\eta_h],[w_h,v_h,u_h])\\\nonumber
-&\left( \average{\C_\rho:\nabla^2\eta_h},\jump{\nabla u_h\otimes\nL}\right)_{\Lsh} 
- \left(\jump{\nabla\eta_h\otimes\nL},\average{\C_\rho:\nabla^2u_h}\right)_{\Lsh}\\\nonumber
+&\frac{\gamma\hat{D}_{\rho}}{h}\left(\jump{\nabla\eta_h\otimes\nL},\jump{\nabla u_h\otimes\nL}\right)_{\Lsh},
\end{align}
With $ h $ the characteristic element size, $ \hat{D}_{\rho}\eqdef\frac{EI}{(1-\nu^2)} $ a constant that depends on the material properties and $ \gamma $ a constant that can be tuned to guaranty stability, see Section~\ref{sec:numerical_analysis}. The first term of the second row in \Eq{eq:bilinear_modified_h_cgdg} is required for consistence. It originates from the integration by parts of the fourth order term in the plate equation after assuming continuity of test functions across boundary elements and weakly enforcing
\begin{subequations}\label{eq:weak_continuity}
\begin{align}
	\jump{(\nabla\cdot(\C_\rho:\nabla^2\eta_h))\cdot\nL}=0&\qquad\mbox{on $ \Lsh $},\\
	\jump{(\C_\rho:\nabla^2\eta_h)\cdot\nL}=0&\qquad\mbox{on $ \Lsh $},
\end{align}
\end{subequations} 
\textit{i.e.} continuity of shear forces and moments across structure elements. 
The second term in the second row enforces dual consistency, and improves overall accuracy of the method. 
It penalizes the jump of surface elevation gradients across structural elements. 
The term appearing in the last row of \Eq{eq:bilinear_modified_h_cgdg} is added for stability purposes, while also penalizing the jump of gradients across the structural elements.

Hereinafter we will work with the formulation defined by equations \Eq{eq:cgdg_form}-\Eq{eq:bilinear_modified_h_cgdg}. The variational spaces $ \hat{\V}_h $, $ \hatV_{\Gf,h} $ and $ \hatV_{\Gs,h} $ will be given by
\begin{align}
\hatV_h&\eqdef\left\{w_h\in\mathcal{C}^0(\Omega):\ w_h\vert_K\in\mathbb{P}_r(K),\forall K\in\Omega_h\right\},\\
\hatV_{\Gf,h}&\eqdef\left\{w_h\vert_E:\ w_h\in\hatV_h,\forall E\in\Gf\right\},\\
\hatV_{\Gs,h}&\eqdef\left\{w_h\vert_E:\ w_h\in\hatV_h,\forall E\in\Gsh\right\},
\end{align}
where $ \mathbb{P}_r(K) $ is the space of Lagrange polynomials of degree $ r\geq 2 $ in an element $ K $.

\subsection{Time discretization}
The forms given in \Eq{eq:galerkin_form} or \Eq{eq:cgdg_form} describe a semi-discrete system of equations, i.e. discrete in space and continuous in time, resulting in a system of 2nd order ordinary differential equations (ODEs). In this work we consider two approaches to find the solution to the transient problem:  a frequency domain  and a time domain approach. 

\subsubsection{Frequency domain}\label{subsubsec:freq_domain}

The analysis of free surface flows in the Airy wave theory, described by linear potential flow, is suitable for a frequency domain formulation. This is also the case of dynamic analysis of linear structural response. When defining the formulation in the frequency domain, the response, \textit{i.e} surface elevation and velocity potential, is assumed to be harmonic. That is, a space and time dependent variable, $ \xi(\x,t) $, can be given in terms of a prescribed frequency, $ \omega $, and a time-independent variable, $ \bar{\xi}(\x) $ as
\begin{equation}\label{eq:freq_transform}
	\xi(\x,t)=\bar{\xi}(\x)\exp(-i\omega t).
\end{equation}

Under this assumption, given an incoming wave frequency, $ \omega $, the fully discrete problem reduces to find the set of time-independent and complex-valued fields, in our case $ \bar{\phi}(\x) $, $ \bar{\kappa}(\x) $ and $ \bar{\eta}(\x) $. To simplify notation, in this subsection we will neglect the bar and hereinafter assume that $ \eta\eqdef\bar{\eta} $, $ \kappa\eqdef\bar{\kappa} $ and $ \phi\eqdef\bar{\phi} $. Hence, the final discrete problem in the frequency domain will read: find $ [\phi_h,\kappa_h,\eta_h] \in\hatV^\omega_h\times\hatV^\omega_{\Gf,h}\times\hatV^\omega_{\Gs,h}$ such that
\begin{equation}\label{eq:cgdg_form_freq_domain}
\hat{B}^\omega_h([\phi_h,\kappa_h,\eta_h],[w_h,v_h,u_h])=L^\omega_h([w_h,v_h,u_h])\quad\forall[w_h,v_h,u_h]\in\hatV^\omega_h\times\hatV^\omega_{\Gf,h}\times\hatV^\omega_{\Gs,h},
\end{equation}
with
\begin{align}\label{eq:bilinear_modified_h_cgdg_freq_domain}
\hat{B}^\omega_h([\phi_h,\kappa_h,\eta_h],[w_h,v_h,u_h])\eqdef
&(\nabla\phi_h,\nabla w_h)_{\Omega_h} 
+ (i\omega\kappa_{h},w_h)_{\Gfh}
+\beta\left(g\kappa_h-i\omega\phi_{h},\alpha^\omega_fw_h+v_h\right)_{\Gfh} \\\nonumber
+&(i\omega\eta_{h},w_h)_{\Gsh}
+\left((g-\omega^2d_0)\eta_{h} - i\omega\phi_{h},u_h\right)_{\Gsh}\\\nonumber
+&\left( \C_\rho:\nabla^2\eta_h,\nabla^2 u_h\right)_{\Gsh} 
+ \left(k_\rho\jump{\nabla\eta_h\otimes\nL},\jump{\nabla u_h\otimes\nL}\right)_{\Lj} \\\nonumber
-&\left( \average{\C_\rho:\nabla^2\eta_h},\jump{\nabla u_h\otimes\nL}\right)_{\Lsh} 
- \left(\jump{(\nabla\eta_h\otimes\nL},\average{\C_\rho:\nabla^2 u_h}\right)_{\Lsh}\\\nonumber
+&\frac{\gamma\hat{D}_{\rho}}{h}\left(\jump{\nabla\eta_h\otimes\nL},\jump{\nabla u_h\otimes\nL}\right)_{\Lsh} ,
\end{align}
and
\begin{equation}\label{eq:linear_h_freq_domain}
L^\omega_h([w_h,v_h,u_h])\eqdef L_h([w_h,v_h,u_h]) .
\end{equation}

Here, the stabilization parameter $ \alpha^\omega_f $ appearing in \eqref{eq:bilinear_modified_h_cgdg_freq_domain} is defined as $ \alpha^\omega_f \eqdef\frac{-i\omega}{g}\frac{1-\beta}{\beta} $, see Section~\ref{sec:numerical_analysis} for further details on the justification of this definition.

It is important to highlight that the FE spaces $ \hatV^\omega_h $  and $\hatV^\omega_{\Gamma,h}$ are finite-dimensional spaces of complex-valued functions, composed by real-valued shape functions and complex-valued degrees of freedom.

\subsubsection{Time domain}\label{subsubsec:time_domain}

	As an alternative to the frequency domain approach, instead of assuming an harmonic response, one can discretize in time the semi-discrete system given by equation~\Eq{eq:cgdg_form}. Here, since we have a second-order ODE, we use the so called Newmark-beta time discretization scheme~\cite{newmark1959method}. Let us consider a uniform discretization of the time domain, with a constant time step size $ \Delta t $. At a given time step $ n+1 $, with $ t^{n+1}=t^n+\Delta t $, an unknown, $ x^{n+1} $, and its first time derivative,   are defined by
\begin{align}
\label{eq:newmar_velocity}
x_t^{n+1} &\eqdef x_t^{n}+\Delta t\left[(1-\gammaNB)x_{tt}^n+\gammaNB x_{tt}^{n+1}\right],\\
\label{eq:newmar_unknown}
x^{n+1} &\eqdef x^{n}+\Delta t x_t^{n}+\Delta t^2\left[\left(\frac{1}{2}-\betaNB\right)x_{tt}^n+\betaNB x_{tt}^{n+1}\right],
\end{align}
where $ \gammaNB $ and $ \betaNB $ are two coefficients that determine the stability and accuracy of the scheme~\cite{goudreau1973evaluation}. In this work we will use the pair $ \gammaNB=0.5 $ and $ \betaNB=0.25 $, which results in a second order accurate and unconditionally stable scheme.

Doing some manipulations, we can obtain an expression for the first and second time derivatives at $ n+1 $. 
These derivatives  depend on the unknown solution $ x^{n+1} $, and the known solution and  derivatives, $ \left\{x^n,x_t^{n},x_{tt}^n\right\}  $,
\begin{align}
\label{eq:newmark_velocity_2}
	x_t^{n+1} &= \delta_t \left(x^{n+1}-x^n\right) + \frac{1-\gammaNB}{\betaNB}x_t^n + \Delta t \frac{1-\gammaNB}{2\betaNB}x_{tt}^n,\\
	\label{eq:newmark_acceleration}
	x_{tt}^{n+1} &= \delta_{tt}\left(x^{n+1}-x^n\right) -\frac{1}{\betaNB\Delta t}x_t^n + \frac{1-2\betaNB}{2\betaNB}x_{tt}^n
\end{align} 
where $ \delta_t=\frac{\gammaNB}{\betaNB\Delta t} $ and $ \delta_{tt}=\frac{1}{\betaNB\Delta t^2} $. 

Using the time discretization given by equations \Eq{eq:newmark_velocity_2}-\Eq{eq:newmark_acceleration} for the potential and surface elevation time derivatives, we obtain the following fully discrete problem in the time domain: find $ [\phi_h^{n+1},\kappa_h^{n+1},\eta_h^{n+1}] \in\hatV_h\times\hatV_{\Gf,h}\times\hatV_{\Gs,h}$ such that
\begin{equation}\label{eq:cgdg_t_form}
\hat{B}_h^{n+1}([\phi_h^{n+1},\kappa_h^{n+1},\eta_h^{n+1}],[w_h,v_h,u_h])=L_h^{n+1}([w_h,v_h,u_h])\quad\forall[w_h,v_h,u_h]\in\hatV_h\times\hatV_{\Gf,h}\times\hatV_{\Gs,h},
\end{equation}
where
\begin{align}\label{eq:bilinear_modified_h_t_cgdg}
\hat{B}_h^{n+1}([\phi_h^{n+1},\kappa_h^{n+1}\eta_h^{n+1}],[w_h,v_h,u_h])\eqdef
&(\nabla\phi_h^{n+1},\nabla w_h)_{\Omega_h} 
- (\delta_t\kappa_{h}^{n+1},w_h)_{\Gfh}\\\nonumber
+&\beta\left(\delta_t\phi_{h}^{n+1}+g\kappa_h^{n+1},\alpha_fw_h+v_h\right)_{\Gfh}\\\nonumber
-&(\delta_t\eta_{h}^{n+1},w_h)_{\Gsh}
+\left(\delta_{tt}d_0\eta_{h}^{n+1} + \delta_t\phi_{h}^{n+1}+g\eta_h^{n+1},u_h\right)_{\Gsh} \\\nonumber 
+&\left( \C_\rho:\nabla^2\eta_h^{n+1},\nabla^2 u_h\right)_{\Gsh}
+\left(k_\rho\jump{\nabla\eta_h^{n+1}\otimes\nL},\jump{\nabla u_h\otimes\nL}\right)_{\Lj}\\\nonumber
-&\left( \average{\C_\rho:\nabla^2\eta_h^{n+1}},\jump{\nabla u_h\otimes\nL}\right)_{\Lsh} \\\nonumber
-&\left(\jump{\nabla\eta_h^{n+1}\otimes\n_{\Lsh}},\average{\C_\rho:\nabla^2 u_h}\right)_{\Lsh}\\\nonumber
+&\frac{\gamma\hat{D}_{\rho}}{h}\left(\jump{\nabla\eta_h^{n+1}\otimes\nL},\jump{\nabla u_h\otimes\nL}\right)_{\Lsh},
\end{align}
and
\begin{align}\label{eq:linear_h_t}
L_h^{n+1}([w_h,v_h,u_h])\eqdef&(\uin,w_h)_{\Gih}  + (\uout,w_h)_{\Goh} \\\nonumber
 -&\left(\delta_t\kappa_{h}^n - \frac{1-\gammaNB}{\betaNB}\kappa_{h,t}^n - \Delta t \frac{1-\gammaNB}{2\betaNB}\kappa_{h,tt}^n,w_h\right)_{\Gfh}\\\nonumber
+&\beta\left(\delta_t\phi_{h}^n- \frac{1-\gammaNB}{\betaNB}\phi_{h,t}^n - \Delta t \frac{1-\gammaNB}{2\betaNB}\phi_{h,tt}^n,\alpha_fw_h+v_h\right)_{\Gfh} \\\nonumber
-&\left(\delta_t\eta_{h}^n - \frac{1-\gammaNB}{\betaNB}\eta_{h,t}^n - \Delta t \frac{1-\gammaNB}{2\betaNB}\eta_{h,tt}^n,w_h\right)_{\Gsh}\\\nonumber
+&\left(d_0\left(\delta_{tt}\eta_h^n+ \frac{1}{\betaNB\Delta t}\eta_{h,t}^n - \frac{1-2\betaNB}{2\betaNB}\eta_{h,tt}^n\right),u_h\right)_{\Gsh} \\\nonumber
+&\left(\delta_t\phi_{h}^n- \frac{1-\gammaNB}{\betaNB}\phi_{h,t}^n - \Delta t \frac{1-\gammaNB}{2\betaNB}\phi_{h,tt}^n,u_h\right)_{\Gsh}.
\end{align}

In equation \eqref{eq:bilinear_modified_h_t_cgdg}, the stabilization parameter is defined as $ \alpha_f\eqdef\frac{\delta_t}{g}\frac{(1-\beta)}{\beta} $ for the system to be stable, see Section~\ref{sec:numerical_analysis}.

\section{Numerical analysis}\label{sec:numerical_analysis}

In this section we prove statements of consistency, coercivity, boundedness and energy conservation for the discrete formulation proposed in this work. In Section~\ref{subsec:continuous} we will first demonstrate these properties for the formulation without discontinuities, i.e. equation~\eqref{eq:galerkin_form}. After, in Section~\ref{subsec:discontinuous}, we will extend the analysis to include the C/DG formulation~~\eqref{eq:cgdg_form}.


\subsection{Preliminary definitions and theorems}\label{ssec:prelim}
Let us establish some definitions and theorems that will be later used in the numerical analysis of the formulation.


\begin{definition}
	For any $w$ we define the $ H^1 $-norm in $ \Omega $ as 
	 \begin{equation}\label{eq:H1_def}
		\left\| w \right\|^2_{H^1(\Omega)} =  \left\| w \right\|^2_{\Omega} + \left\| \nabla w \right\|^2_{\Omega} .
	\end{equation}
\end{definition}

\begin{corollary} For $w\in H^1(\Omega)$ we can bound the gradient. That is,
\begin{equation}\label{eq:H1_equiv}
\|\nabla w\|_{\Omega_h}\leq\|w\|_{H^1(\Omega_h)},\qquad\forall w\in H^1(\Omega).
\end{equation}
\begin{proof}
The statement is a direct consequence of the definition of the $H_1{\Omega}$-norm.
\end{proof}
\end{corollary}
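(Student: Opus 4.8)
The plan is to derive the inequality directly from the definition of the $H^1$-norm in \eqref{eq:H1_def}, so there is essentially nothing to prove beyond unwinding the notation. First I would write, for any $w\in H^1(\Omega)$,
\[
\|\nabla w\|_\Omega^2 \;=\; \|w\|_{H^1(\Omega)}^2 - \|w\|_\Omega^2 \;\le\; \|w\|_{H^1(\Omega)}^2,
\]
where the inequality uses only the nonnegativity $\|w\|_\Omega^2\ge 0$. Taking square roots of both sides — each side is nonnegative and the square root is monotone — then yields $\|\nabla w\|_\Omega \le \|w\|_{H^1(\Omega)}$, which is the claimed bound.

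The only point worth flagging is the passage from the physical domain $\Omega$ to the mesh $\Omega_h$ that actually appears in the statement: here $\|\cdot\|_{\Omega_h}$ and $\|\cdot\|_{H^1(\Omega_h)}$ should be read as the broken norms obtained by summing the elementwise $L^2$- and $H^1$-contributions over $K\in\Omega_h$. Since that definition is again purely additive, $\|w\|_{H^1(\Omega_h)}^2=\|w\|_{\Omega_h}^2+\|\nabla w\|_{\Omega_h}^2$, the identical one-line argument applies verbatim with $\Omega$ replaced by $\Omega_h$; and for $w\in H^1(\Omega)$ the broken gradient coincides with the distributional gradient restricted elementwise, so no ambiguity arises. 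I do not anticipate any real obstacle: the statement is a bookkeeping consequence of the norm definition, recorded here only so that it can be invoked cleanly in the coercivity and boundedness estimates that follow.
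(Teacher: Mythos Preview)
Your argument is correct and matches the paper's own one-line proof, which simply records that the bound is immediate from the definition of the $H^1$-norm. Your additional remark about the $\Omega$ versus $\Omega_h$ notation is a reasonable clarification, but the core reasoning is identical.
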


\begin{definition}
	For any $w\in H^1(\Omega)$ define by extension the trace operator $ \gamma_{\partial\Omega}:H^1(\Omega)\rightarrow L^2(\partial\Omega) $ such that  
	 \begin{equation}\label{eq:trace}
		\gamma_{\partial\Omega} w=w|_{\partial\Omega},\forall w \in C^\infty(\Omega).
	\end{equation}
\end{definition}

\begin{theorem}[Trace theorem of Sobolev spaces]\label{theorem:trace} 
	Let $ \Omega $ be a bounded simply connected Lipschitz domain. Then, the trace operator $ \gamma_{\partial\Omega} $ is a bounded linear operator from $ H^1(\Omega) $ to $ L^2(\partial\Omega) $. That is, 
	\begin{equation}\label{eq:sob_trace}
	 \left\|\gamma_{\partial\Omega}w\right\|_{\partial\Omega}\leq C_{\partial\Omega} \left\|w\right\|_{H^1(\Omega)}. 
	 	\end{equation}
	With $ C_{\partial\Omega} $ a constant that only depends on $ \partial\Omega $.
\end{theorem}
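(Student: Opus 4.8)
The plan is to establish the estimate first for functions in $C^\infty(\overline{\Omega})$ and then extend it to all of $H^1(\Omega)$ by density. Since $\Omega$ is a bounded Lipschitz domain it satisfies the segment property, so $C^\infty(\overline{\Omega})$ is dense in $H^1(\Omega)$; hence it suffices to prove $\|w|_{\partial\Omega}\|_{\partial\Omega}\leq C_{\partial\Omega}\|w\|_{H^1(\Omega)}$ for smooth $w$. The operator $\gamma_{\partial\Omega}$ is then obtained as the unique continuous extension of $w\mapsto w|_{\partial\Omega}$ to $H^1(\Omega)$, and it agrees with the pointwise restriction on smooth functions, which is exactly the defining property demanded in the Definition preceding the statement.

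For the smooth case I would first localize. By the definition of a Lipschitz domain, $\partial\Omega$ is covered by finitely many open sets $U_1,\dots,U_N$ such that, in suitably rotated coordinates, $\Omega\cap U_j$ lies strictly below the graph of a Lipschitz function $\varphi_j:\mathbb{R}^{d-1}\to\mathbb{R}$ with Lipschitz constant $L_j$. Choosing a smooth partition of unity $\{\chi_j\}$ subordinate to $\{U_j\}$ with $\sum_j\chi_j\equiv 1$ near $\partial\Omega$, one has $w|_{\partial\Omega}=\sum_j(\chi_j w)|_{\partial\Omega}$, so by the triangle inequality it is enough to bound each $\|(\chi_j w)|_{\partial\Omega}\|_{\partial\Omega}$ by $C\|w\|_{H^1(\Omega\cap U_j)}$ with $C$ depending only on $L_j$ and $\|\nabla\chi_j\|_\infty$.

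Next I would flatten the boundary via the bi-Lipschitz change of variables $y=(x',\,x_d-\varphi_j(x'))$, which maps $\Omega\cap U_j$ into $\{y_d<0\}$ and $\partial\Omega\cap U_j$ into $\{y_d=0\}$, has unit Jacobian, and — being Lipschitz — is differentiable a.e., so that $\|\nabla_y\tilde v\|_{L^2}\leq C(1+L_j)\|\nabla_x v\|_{L^2}$ for the transformed function $\tilde v$. For a smooth $v$ compactly supported in $y'$ and vanishing for $y_d$ large negative, the fundamental theorem of calculus gives $|v(y',0)|^2=\int_{-\infty}^{0}\partial_{y_d}\bigl(|v(y',y_d)|^2\bigr)\,dy_d\leq\int_{-\infty}^{0}\bigl(v^2+|\partial_{y_d}v|^2\bigr)\,dy_d$, and integrating over $y'\in\mathbb{R}^{d-1}$ yields $\|v(\cdot,0)\|_{L^2(\mathbb{R}^{d-1})}^2\leq\|v\|_{H^1(\{y_d<0\})}^2$. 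Undoing the flattening (the surface measure on $\partial\Omega$ differs from $dy'$ by the bounded factor $\sqrt{1+|\nabla\varphi_j|^2}\leq\sqrt{1+L_j^2}$) and summing the $N$ localized estimates gives the claim with $C_{\partial\Omega}$ depending only on $N$, the $L_j$, the charts and the partition of unity — hence only on $\partial\Omega$.

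The main obstacle is technical rather than conceptual: justifying the change of variables when $\varphi_j$ is merely Lipschitz. This is handled by Rademacher's theorem, which guarantees that $\varphi_j$ and the flattening map are differentiable a.e. with essentially bounded gradients, together with the change-of-variables formula for bi-Lipschitz maps; these ensure that the integral identities and $H^1$-norm comparisons used above hold with constants controlled solely by the $L_j$. A second point to treat with care is the density of $C^\infty(\overline{\Omega})$ in $H^1(\Omega)$, which for Lipschitz domains follows from the segment/uniform-cone property; once that is in place, the extension argument is routine.
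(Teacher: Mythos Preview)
Your argument is correct and follows the classical route: density of $C^\infty(\overline{\Omega})$ in $H^1(\Omega)$ for Lipschitz domains, localization via a partition of unity subordinate to boundary charts, bi-Lipschitz flattening of each chart to a half-space, the one-line fundamental-theorem-of-calculus estimate $|v(y',0)|^2\leq\int_{-\infty}^0(v^2+|\partial_{y_d}v|^2)\,dy_d$, and then summation. The technical caveats you flag (Rademacher for a.e.\ differentiability of the flattening map, the change-of-variables formula for bi-Lipschitz maps, and the bounded Jacobian factor $\sqrt{1+|\nabla\varphi_j|^2}$ relating the surface measure to $dy'$) are exactly the right ones, and they are all standard.

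As for comparison with the paper: there is nothing to compare. The paper does not prove this theorem at all; immediately after stating it, it simply writes ``See \cite{ding1996proof} for a proof of Theorem~\ref{theorem:trace}.'' So your proposal is strictly more than what the paper provides --- you supply a self-contained proof where the paper only quotes the result as a known tool for the subsequent coercivity and boundedness propositions.
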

See \cite{ding1996proof} for a proof of Theorem~\ref{theorem:trace}.
\begin{theorem}\label{theorem:generalized_poincare} 
	Let $ \Omega $ be a bounded connected Lipschitz domain and $ f $ be a linear form from $ H^1(\Omega) $ 
	with a non-zero restriction on non-zero constant functions.
	Then, there is a constant $ C_\Omega > 0 $ such that 
	\begin{equation}\label{key}
		C_\Omega\|w\|_{H^1(\Omega)}\leq \|\nabla w\|_{\Omega} + |f(w)|, \qquad\forall w\in H^1(\Omega).
	\end{equation}
\end{theorem}

Let us define the function $ f $ appearing in Theorem~\ref{theorem:generalized_poincare} as 
\[ f(w) \eqdef \alpha_{f}\beta\delta_t \|\gamma_{\Gf}w\|_{\Gf}  , \] 
with $ \gamma_{\Gf} $ the trace operator as defined in equation~\eqref{eq:trace}. Note that $ f $ is a linear form on $ H^1(\Omega) $ and its restriction on non-zero constant functions is non-zero, which holds for any $ \alpha_{f},\beta,\delta_t>0 $ and any open boundary portion $ \Gf\subseteq\partial\Omega $ with non-zero measure, \textit{i.e.} $ |\Gf|\eqdef \mbox{meas}(\Gf)>0 $. Then, we have that 
\begin{equation}\label{eq:f_bound}
	|f(w)| = \alpha_f\beta\delta_t  \|\gamma_{\Gf}w\|_{\Gf} =\alpha_f\beta\delta_t\|w\|_{\Gf}, \qquad\forall w\in H^1(\Omega).
\end{equation}

\begin{corollary} For $w\in H^1(\Omega)$ we can bound the $H^1(\Omega)$-norm as follows,
\begin{equation}\label{eq:seminorm_bound}
 C_\Omega\|w\|_{H^1(\Omega)}\leq
	\|\nabla w\|_{\Omega}+\alpha_f\beta\delta_t\|w\|_{\Gf},\qquad\forall w\in H^1(\Omega).
\end{equation}
\begin{proof}
Introducing~\eqref{eq:f_bound} into Theorem~\ref{theorem:generalized_poincare} proofs the statement.
\end{proof}
\end{corollary}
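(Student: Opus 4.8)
The plan is to obtain~\eqref{eq:seminorm_bound} as a direct specialization of Theorem~\ref{theorem:generalized_poincare}, using the concrete form $ f $ introduced just above the statement. First I would check that the hypotheses of Theorem~\ref{theorem:generalized_poincare} are satisfied for this choice: the fluid domain $ \Omega $ is a bounded connected Lipschitz domain by the standing geometric setup, and $ f(w)=\alpha_f\beta\delta_t\|\gamma_{\Gf}w\|_{\Gf} $ is a form on $ H^1(\Omega) $ whose restriction to the nonzero constants is nonzero, since for a constant $ c\neq 0 $ one has $ f(c)=\alpha_f\beta\delta_t\,|c|\,|\Gf|^{1/2}\neq 0 $ because $ |\Gf|>0 $ and the scaling parameters $ \alpha_f,\beta,\delta_t $ are strictly positive.

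Next I would apply Theorem~\ref{theorem:generalized_poincare} verbatim, which yields a constant $ C_\Omega>0 $, depending only on $ \Omega $, such that $ C_\Omega\|w\|_{H^1(\Omega)}\leq\|\nabla w\|_{\Omega}+|f(w)| $ for every $ w\in H^1(\Omega) $. Finally I would replace $ |f(w)| $ by its explicit value: since $ f(w)\geq 0 $ and, by the definition of the trace operator in~\eqref{eq:trace}, $ \|\gamma_{\Gf}w\|_{\Gf}=\|w\|_{\Gf} $, identity~\eqref{eq:f_bound} gives $ |f(w)|=\alpha_f\beta\delta_t\|w\|_{\Gf} $. Substituting this into the previous inequality produces exactly~\eqref{eq:seminorm_bound}.

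The argument involves no real analytic difficulty: it is a repackaging of the generalized Poincar\'e inequality with a boundary term tailored to the coercivity analysis of Section~\ref{sec:numerical_analysis}. The only step requiring any attention is the verification that $ f $ is an admissible form for Theorem~\ref{theorem:generalized_poincare}, specifically its non-degeneracy on nonzero constant functions, which hinges on $ |\Gf|>0 $ and on the positivity of $ \alpha_f,\beta,\delta_t $; once this is granted, the conclusion follows by mere substitution.
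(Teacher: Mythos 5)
Your proposal is correct and follows essentially the same route as the paper: both apply Theorem~\ref{theorem:generalized_poincare} with the specific choice of $f$ given in the text and then substitute the identity~\eqref{eq:f_bound} for $|f(w)|$. Your version merely makes explicit the verification that $f$ does not vanish on nonzero constants (via $|\Gf|>0$ and the positivity of $\alpha_f,\beta,\delta_t$), which the paper states just before the corollary rather than inside the proof.
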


\begin{theorem}\label{theorem:trace_inequality} 
	Let $ \Omega $ be a bounded simply connected Lipschitz domain and $ \Omega_e $ an element of the FE triangulation of $\Omega$ with characteristic element size $ h_e $. Then there is a constant $C_I$ such that 
	\begin{equation}
		\|w\|^2_{\partial\Omega_e}\leq C_I\left(h_e^{-1}\|w\|^2_{\Omega_e}+h_e\|\nabla\|^2_{\Omega_e}\right).
	\end{equation}
	See \cite{arnold1982interior} for more details.
\end{theorem}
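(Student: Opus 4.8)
The plan is to reduce the estimate to the continuous Sobolev trace inequality, Theorem~\ref{theorem:trace}, applied on a single fixed reference element, and then track the powers of $h_e$ that are produced by the affine change of variables. Let $\hat\Omega$ be the reference element (the unit simplex or unit cube) and let $F_e:\hat\Omega\to\Omega_e$, $F_e(\hat x)=B_e\hat x+b_e$, be the affine map onto $\Omega_e$. Under the standard shape-regularity assumption on the family of triangulations, the Jacobian matrix $B_e$ satisfies $\|B_e\|\le C h_e$, $\|B_e^{-1}\|\le C h_e^{-1}$ and $C^{-1}h_e^{d}\le|\det B_e|\le C h_e^{d}$ with constants independent of $e$; the corresponding surface Jacobian on $\partial\Omega_e$ is likewise bounded above and below by constant multiples of $h_e^{d-1}$.

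First I would pull $w$ back to the reference element, writing $\hat w=w\circ F_e\in H^1(\hat\Omega)$, and apply Theorem~\ref{theorem:trace} on $\hat\Omega$ to obtain $\|\hat w\|^2_{\partial\hat\Omega}\le C_{\hat\Omega}^2\big(\|\hat w\|^2_{\hat\Omega}+\|\hat\nabla\hat w\|^2_{\hat\Omega}\big)$. Next I would rewrite each of the three norms through the change of variables: $\|\hat w\|^2_{\hat\Omega}$ is comparable to $h_e^{-d}\|w\|^2_{\Omega_e}$; since $\hat\nabla\hat w=B_e^{\top}(\nabla w)\circ F_e$ and $\|B_e^{\top}\|\le C h_e$, the term $\|\hat\nabla\hat w\|^2_{\hat\Omega}$ is bounded by $C h_e^{2-d}\|\nabla w\|^2_{\Omega_e}$; and $\|\hat w\|^2_{\partial\hat\Omega}$ is comparable to $h_e^{-(d-1)}\|w\|^2_{\partial\Omega_e}$. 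Substituting these into the reference inequality and multiplying through by $h_e^{d-1}$ yields exactly
\[
\|w\|^2_{\partial\Omega_e}\le C_I\big(h_e^{-1}\|w\|^2_{\Omega_e}+h_e\|\nabla w\|^2_{\Omega_e}\big),
\]
with $C_I$ collecting $C_{\hat\Omega}$ and the shape-regularity constants. If Theorem~\ref{theorem:trace} is used only in its $C^\infty$ form, a density argument (smooth functions dense in $H^1(\Omega_e)$) extends the bound to all $w\in H^1(\Omega_e)$.

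An equivalent route that avoids the reference element uses the divergence theorem directly on a simplex: for a face $f\subset\partial\Omega_e$, pick the affine vector field $\mathbf{v}$ with $\mathbf{v}\cdot\n=1$ on $f$ and $\mathbf{v}\cdot\n=0$ on the remaining faces; one checks $\|\mathbf{v}\|_{L^\infty(\Omega_e)}\le C$ and $\|\nabla\cdot\mathbf{v}\|_{L^\infty(\Omega_e)}\le C h_e^{-1}$, so that $\int_f w^2=\int_{\Omega_e}\nabla\cdot(w^2\mathbf{v})=\int_{\Omega_e}\big((\nabla\cdot\mathbf{v})w^2+2w\,\mathbf{v}\cdot\nabla w\big)$, and Young's inequality absorbs the cross term to give the claim after summing over the faces of $\Omega_e$.

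The only point that genuinely requires care — and hence the main obstacle — is the uniformity of $C_I$ across all elements $\Omega_e$: every estimate above rests on shape-regularity of the mesh family (uniform bounds on $\|B_e\|\,\|B_e^{-1}\|$ and on the ratio $h_e^{d}/|\Omega_e|$), so this must be stated as a standing hypothesis before the scaling argument is carried out. Everything else is routine bookkeeping of Jacobian factors.
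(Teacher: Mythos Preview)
Your argument is correct and is exactly the classical scaling/pull-back proof of the local trace inequality (with a nice alternative via the divergence-theorem vector field). There is nothing to compare against, however: the paper does not supply a proof of this theorem but simply records the estimate and defers to \cite{arnold1982interior}. Your reference-element argument is precisely the standard one underlying that citation, and your remark that uniformity of $C_I$ hinges on shape-regularity is the right caveat to flag.
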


\subsection{Continuous formulation}\label{subsec:continuous}
We start this section by demonstrating the consistency and energy conservation of the semi-discrete form~\eqref{eq:galerkin_form}. That is, we assume that we use a set of FE spaces such that $ \eta_h\in H^2(\Gs\setminus\Lj) $, i.e. functions with continuous gradients.

\begin{proposition}[Consistency]\label{prop:consistency}
	The semi-discrete problem~\eqref{eq:galerkin_form} is consistent. That is, the exact solution $ [\phi,\kappa,\eta]\in\V\times\V_{\Gf}\times\V_{\Gs}$ satisfies the approximate problem
	\begin{equation}\label{eq:galerkin_form_exact}
		B_h([\phi,\kappa,\eta],[w_h,v_h,u_h])=L_h([w_h,v_h,u_h])\quad\forall[w_h,v_h,u_h]\in\V_h\times\V_{\Gf,h}\times\V_{\Gs,h}.
	\end{equation}
\end{proposition}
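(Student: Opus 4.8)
The plan is to verify directly that the exact solution, which under Assumptions~\ref{ass:pflow}--\ref{ass:beam} is regular enough to possess the traces and Hessians appearing in $B_h$, reproduces the identity $B_h=L_h$ by undoing, step by step, the integrations by parts performed in Section~\ref{subsec:weak_form}. Since the FE partition is boundary-fitted we may identify $\Omega_h$ with $\Omega$ and $\Gfh,\Gsh,\Gih,\Goh$ with $\Gf,\Gs,\Gi,\Go$ throughout.

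First I would handle the volume term. Because $\phi\in H^2(\Omega)$, $\nphi\in H^1(\Omega)^d$ has no interfacial jumps, so elementwise integration by parts introduces no interior-facet contributions and gives $(\nphi,\nabla w_h)_{\Omega_h}=-(\Delta\phi,w_h)_\Omega+(\n\cdot\nphi,w_h)_{\partial\Omega}$. By \Eq{eq:pflow} the first term vanishes; splitting $\partial\Omega=\Gb\cup\Gi\cup\Go\cup\Gf\cup\Gs$ and inserting the kinematic conditions \Eq{eq:kinematic_bc} (in particular $\n\cdot\nphi=0$ on $\Gb$ and $\n\cdot\nphi=\kappa_t$ on $\Gf$, $\n\cdot\nphi=\eta_t$ on $\Gs$) turns the boundary integral into $(\uin,w_h)_{\Gi}+(\uout,w_h)_{\Go}+(\kappa_t,w_h)_{\Gf}+(\eta_t,w_h)_{\Gs}$. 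The first two summands are exactly $L_h$, while the last two cancel the terms $-(\kappa_{h,t},w_h)_{\Gfh}$ and $-(\eta_{h,t},w_h)_{\Gsh}$ of $B_h$ evaluated at the exact solution.

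Next I would dispose of the remaining free-surface and structure contributions. On $\Gf$, the dynamic condition \Eq{eq:dynamic_linear_bc_f_3d} gives $\phi_t+g\kappa=0$, so $\beta(\phi_t+g\kappa,\alpha_f w_h+v_h)_{\Gf}=0$. On $\Gs$, I would integrate the Hessian term by parts twice, reading the chain of identities in \Eq{eq:by_parts_joint} in reverse: the clamped-free conditions \Eq{eq:zero_moment_plate}--\Eq{eq:zero_shear_plate} kill the integrals over $\Lfs$, and the jump/average terms over $\Lj$ recombine with $(k_\rho\jump{\nabla\eta\otimes\nL},\jump{\nabla u_h\otimes\nL})_{\Lj}$ through \Eq{eq:moment_mean_plate}, \Eq{eq:zero_moment_jump_plate}, \Eq{eq:zero_shear_jump_plate} and $\jump{u_h\nL}=0$, leaving $(\C_\rho:\nabla^2\eta,\nabla^2 u_h)_{\Gs}+(k_\rho\jump{\nabla\eta\otimes\nL},\jump{\nabla u_h\otimes\nL})_{\Lj}=(\nabla^2\colon(\C_\rho:\nabla^2\eta),u_h)_{\Gs}$. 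Hence the structure block of $B_h$ collapses to $(d_0\eta_{tt}+\nabla^2\colon(\C_\rho:\nabla^2\eta)+\phi_t+g\eta,u_h)_{\Gs}$, which is zero by \Eq{eq:dynamic_linear_bc_s_3d}.

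Collecting the surviving contributions leaves precisely $B_h([\phi,\kappa,\eta],[w_h,v_h,u_h])=(\uin,w_h)_{\Gi}+(\uout,w_h)_{\Go}=L_h([w_h,v_h,u_h])$, which is the claim. The only delicate point is the double integration by parts of the plate operator combined with the joint conditions; but this is exactly the computation already displayed (in the forward direction) in \Eq{eq:by_parts_joint}, so no new work is required beyond noting that the exact solution has the regularity $\eta\in H^2(\Gs\setminus\Lj)$ with the prescribed trace behaviour on $\Lfs$ and $\Lj$ that legitimizes each step, and that boundary-fittedness of the mesh is what allows $\Gamma_h$ and $\Omega_h$ to be replaced by $\Gamma$ and $\Omega$ above.
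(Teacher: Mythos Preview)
Your argument is correct and follows exactly the route sketched in the paper's own (very terse) proof: integrate by parts the gradient term over $\Omega$ and the Hessian term over $\Gs$, then invoke the strong equations \Eq{eq:pflow}, \Eq{eq:dynamic_bc_f_s_3d} and the boundary/joint conditions \Eq{eq:kinematic_bc}, \Eq{eq:boundary_conditions_plate} to annihilate every term except the inlet/outlet fluxes constituting $L_h$. The only cosmetic slip is calling \Eq{eq:zero_moment_plate}--\Eq{eq:zero_shear_plate} ``clamped-free'' when they are in fact free-edge (zero moment, zero shear) conditions; the equations you cite are the right ones, so this does not affect the argument.
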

\begin{proof}
	The consistency statement results from integrating by parts on each element the terms $ (\nabla\phi,\nabla w_h)_{\Omega_h} $ and $ \left( \C_\rho:\nabla^2\eta,\nabla^2 u_h\right)_{\Gsh} $ appearing in \eqref{eq:bilinear_modified_h}, using the strong form of the equations \eqref{eq:potential} and \eqref{eq:plate_eqn}, and boundary conditions \eqref{eq:kinematic_bc}, \eqref{eq:dynamic_bc} and \eqref{eq:boundary_conditions_plate}.
\end{proof}

\begin{proposition}[Energy conservation]\label{prop:energy}
	The semi-discrete problem~\eqref{eq:galerkin_form} is energy conserving for any $ \beta $ such that $ 0<\beta<1$. That is, 
	\begin{equation}\label{eq:energy}
		\frac{d\Et}{dt} = 0.
	\end{equation}
	With
	\begin{align}
		\Et\eqdef&\Ekf+\Epf+\Eks+\Ees,\\
		\Ekf\eqdef&\frac{1}{2}\left\|\nabla\phi\right\|^2_\Omega,\\
		\Epf\eqdef&\frac{g}{2}\left(\left\|\kappa\right\|^2_{\Gf}+\left\|\eta\right\|^2_{\Gs}\right),\\
		\Eks\eqdef&\frac{1}{2}\left\|d_0^{1/2}\eta_t\right\|^2_{\Gs},\\
		\Ees\eqdef&\frac{1}{2}\left\|\C^{1/2}:\nabla^2\eta\right\|^2_{\Gs}+\frac{1}{2}\left\|k^{1/2}_\rho\jump{\nabla\eta_h\otimes\nL}\right\|^2_{\Lj}.
	\end{align}
\end{proposition}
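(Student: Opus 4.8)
The plan is to test the semi-discrete bilinear form against the solution itself, exploiting the $C^1$-conforming setting so that $\eta_h \in H^2(\Gs\setminus\Lj)$ and all the interface jump terms on $\Lsh$ are absent (only the joint term on $\Lj$ survives). Concretely, I would substitute into \eqref{eq:bilinear_modified_h} the particular test functions $w_h = \phi_h$, $v_h = \kappa_{h,t}$ (up to the $\alpha_f w_h$ correction — see below), and $u_h = \eta_{h,t}$, and then argue that each resulting term is either a perfect time derivative of one of the four energy contributions or cancels against another term. Since $L_h$ has no loading ($\uin = \uout = 0$ is the relevant case for a closed energy balance; with forcing one gets a power input instead), the right-hand side vanishes and we are left with $\frac{d}{dt}\Et = 0$.

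First I would collect the "easy" matches. The term $(\nabla\phi_h,\nabla w_h)_{\Omega_h}$ with $w_h = \phi_h$ gives $\|\nabla\phi_h\|_\Omega^2$, which is \emph{not} itself a derivative; this is the term that must be killed by cancellation, not integrated. The structural elastic term $(\C_\rho:\nabla^2\eta_h,\nabla^2 u_h)_{\Gsh}$ with $u_h = \eta_{h,t}$ yields $\frac{1}{2}\frac{d}{dt}\|\C_\rho^{1/2}:\nabla^2\eta_h\|_{\Gs}^2$, and likewise the joint term gives $\frac{1}{2}\frac{d}{dt}\|k_\rho^{1/2}\jump{\nabla\eta_h\otimes\nL}\|_{\Lj}^2$; the inertial term $(d_0\eta_{h,tt},u_h)_{\Gs} = (d_0\eta_{h,tt},\eta_{h,t})_{\Gs}$ gives $\frac{1}{2}\frac{d}{dt}\|d_0^{1/2}\eta_{h,t}\|_{\Gs}^2$. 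The gravitational terms $g(\kappa_h, \kappa_{h,t})_{\Gf}$ and $g(\eta_h, \eta_{h,t})_{\Gs}$ produce $\frac{g}{2}\frac{d}{dt}(\|\kappa_h\|_{\Gf}^2 + \|\eta_h\|_{\Gs}^2)$. (Note the $\C_\rho$-vs-$\C$ discrepancy in the statement of $\Ees$ is a density-normalization bookkeeping point one should be careful with but which does not affect the structure of the argument.)

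The crux — and the reason the hypothesis $0<\beta<1$ enters — is the bookkeeping of the coupling terms on $\Gf$ and $\Gs$ involving $\phi_{h,t}$, $\kappa_{h,t}$, $\eta_{h,t}$ paired against $w_h$. With $w_h=\phi_h$ the three kinematic-type terms contribute $-(\kappa_{h,t},\phi_h)_{\Gf} - (\eta_{h,t},\phi_h)_{\Gs}$ plus, from the $\beta(\phi_{h,t}+g\kappa_h,\alpha_f\phi_h)_{\Gf}$ piece and the $(\phi_{h,t},\eta_{h,t})$-free parts, further boundary products; with $v_h=\kappa_{h,t}$ and $u_h=\eta_{h,t}$ the dynamic terms contribute $\beta(\phi_{h,t}+g\kappa_h,\kappa_{h,t})_{\Gf} + (\phi_{h,t},\eta_{h,t})_{\Gs}$. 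The idea is that the $\beta\,g(\kappa_h,\kappa_{h,t})_{\Gf}$ cross-term is \emph{not} the same as the $g(\kappa_h,\kappa_{h,t})_{\Gf}$ needed for $\Epf$ unless one also uses the $\alpha_f w_h = \alpha_f \phi_h$ term, whose definition $\alpha_f = \frac{\delta_t}{g}\frac{1-\beta}{\beta}$ (time domain) or $\alpha_f^\omega = \frac{-i\omega}{g}\frac{1-\beta}{\beta}$ (frequency domain) — i.e. $\alpha_f\beta g = (1-\beta)\cdot(\text{time-derivative factor})$ — is chosen precisely so that $\beta(\cdot,v_h)_{\Gf} + \beta(\cdot,\alpha_f w_h)_{\Gf}$ reconstitutes the full, correctly-weighted $(\phi_{h,t}+g\kappa_h,\kappa_{h,t})_{\Gf}$ while the leftover $\phi_h$-pairings cancel against $-(\kappa_{h,t},\phi_h)_{\Gf}$ from the first row. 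The remaining $-(\eta_{h,t},\phi_h)_{\Gs} + (\phi_{h,t},\eta_{h,t})_{\Gs}$ together with $\|\nabla\phi_h\|_\Omega^2$: here one uses that, \emph{by consistency / the kinematic condition tested appropriately}, $(\phi_{h,t},\eta_{h,t})_{\Gs} + (\phi_{h,t},\kappa_{h,t})_{\Gf}$-type terms recombine with $\|\nabla\phi_h\|_\Omega^2$ via the identity $\frac{d}{dt}\frac12\|\nabla\phi_h\|_\Omega^2 = (\nabla\phi_h,\nabla\phi_{h,t})_\Omega = (\n\cdot\nabla\phi_h,\phi_{h,t})_{\partial\Omega}$, i.e. $= (\kappa_{h,t},\phi_{h,t})_{\Gf} + (\eta_{h,t},\phi_{h,t})_{\Gs}$ using the kinematic BCs \eqref{eq:kinematic_bc_fs} and vanishing normal flux elsewhere. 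That is really the device: $\|\nabla\phi_h\|_\Omega^2$ appearing from $w_h=\phi_h$ is \emph{cancelled} against the boundary potential-power terms, and separately $\Ekf = \frac12\|\nabla\phi_h\|_\Omega^2$ is recognized as a stored energy whose time derivative equals those same boundary terms — so the accounting closes.

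I expect the main obstacle to be exactly this last recombination: keeping straight which occurrences of $\|\nabla\phi_h\|^2_\Omega$ and of the $\Gf/\Gs$ boundary products come from "testing the equation" versus from "differentiating the energy," and verifying that the $\beta$- and $\alpha_f$-weights make the two sets identical. The cleanest write-up is probably: (i) compute $\frac{d\Et}{dt}$ directly, expressing $\frac{d\Ekf}{dt}$ via integration by parts in $\Omega$ and the kinematic BCs as a sum of $\Gf,\Gs$ boundary integrals; (ii) substitute the chosen test functions into $B_h(\cdot)=L_h(\cdot)=0$; (iii) observe the two expressions coincide term-by-term, with the identity $\alpha_f\beta g = (1-\beta)\delta_t$ (equivalently $\alpha_f^\omega\beta g = -i\omega(1-\beta)$) being the one place $0<\beta<1$ (so that $\alpha_f>0$, or at least $\alpha_f$ is well-defined and the weights are consistent) is actually used. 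A brief remark should note that in the frequency domain the same computation, with $\partial_t \mapsto -i\omega$ and taking real parts, yields the analogous balance, and that with nonzero $\uin,\uout$ one obtains $\frac{d\Et}{dt} = (\uin,\phi_{h,t})_{\Gi} + (\uout,\phi_{h,t})_{\Go}$, the expected power-in identity.
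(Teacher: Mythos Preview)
Your proposal has a genuine gap stemming from the choice $w_h=\phi_h$. With that choice the volume term produces $\|\nabla\phi_h\|_{\Omega_h}^2$, which is $2\Ekf$ and not $\tfrac{d}{dt}\Ekf$; you then try to rescue this via the identity
\[
\tfrac{d}{dt}\tfrac12\|\nabla\phi_h\|_{\Omega}^2=(\nabla\phi_h,\nabla\phi_{h,t})_\Omega=(\n\cdot\nabla\phi_h,\phi_{h,t})_{\partial\Omega}=(\kappa_{h,t},\phi_{h,t})_{\Gf}+(\eta_{h,t},\phi_{h,t})_{\Gs}.
\]
The second equality requires $\Delta\phi_h=0$ pointwise, and the third requires the kinematic condition $\n\cdot\nabla\phi_h=\kappa_{h,t}$ (resp.\ $\eta_{h,t}$) to hold strongly on the boundary. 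Neither is true for the Galerkin solution: both the interior equation and the kinematic conditions are only imposed \emph{weakly} through $B_h$. So this ``consistency'' step is not available at the semi-discrete level, and the accounting you describe does not close. The same issue contaminates the $\Gf,\Gs$ bookkeeping: terms like $-(\kappa_{h,t},\phi_h)_{\Gf}$ and $(\phi_{h,t},\kappa_{h,t})_{\Gf}$ do not cancel, since one carries $\phi_h$ and the other $\phi_{h,t}$.

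The paper's remedy is simply to test with $w_h=\phi_{h,t}$, so that $(\nabla\phi_h,\nabla w_h)_{\Omega_h}=\tfrac12\tfrac{d}{dt}\|\nabla\phi_h\|_{\Omega_h}^2$ directly, with no appeal to strong equations or boundary conditions. The second key point you miss is the choice $v_h=\tfrac{1}{\beta}\kappa_{h,t}-\alpha_f\phi_{h,t}$: then $\alpha_f w_h+v_h=\tfrac{1}{\beta}\kappa_{h,t}$, the prefactor $\beta$ cancels, and one obtains the unweighted $(\phi_{h,t}+g\kappa_h,\kappa_{h,t})_{\Gf}$. The cross-term $(\phi_{h,t},\kappa_{h,t})_{\Gf}$ now cancels \emph{exactly} against $-(\kappa_{h,t},w_h)_{\Gf}=-(\kappa_{h,t},\phi_{h,t})_{\Gf}$, and likewise $(\phi_{h,t},\eta_{h,t})_{\Gs}$ against $-(\eta_{h,t},\phi_{h,t})_{\Gs}$. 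Every remaining term is a perfect time derivative. Note that with this test the actual value of $\alpha_f$ drops out entirely; the role of the hypothesis on $\beta$ is only that $\beta\neq 0$ so that $v_h$ is well defined, not the algebraic identity $\alpha_f\beta g=(1-\beta)\delta_t$ you invoke (that identity matters for coercivity of the fully discrete form, not for the semi-discrete energy balance).
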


\begin{proof}
	Let us select the set of test functions as $ [w_h,v_h,u_h]=\left[\phi_{h,t},\frac{1}{\beta}\kappa_{h,t}-\alpha_f\phi_{h,t},\eta_{h,t}\right]\in\V_h\times\V_{\Gf,h}\times\V_{\Gs,h} $. Introducing them into \eqref{eq:galerkin_form} we have that, for $ \uin=\uout=0 $, the following statement holds
	\begin{align}
		0=&B_h([\phi_h,\kappa_h,\eta_h],[\phi_{h,t},\frac{1}{\beta}\kappa_{h,t}-\alpha_f\phi_{h,t},\eta_{h,t}])-L_h([\phi_{h,t},\frac{1}{\beta}\kappa_{h,t}-\alpha_f\phi_{h,t},\eta_{h,t}])\\\nonumber
		=&(\nabla\phi_h,\nabla \phi_{h,t})_{\Omega_h} - (\kappa_{h,t},\phi_{h,t})_{\Gfh} 
		+ \beta\left(\phi_{h,t}+g\kappa_h,\alpha_f\phi_{h,t}+\frac{1}{\beta}\kappa_{h,t}-\alpha_f\phi_{h,t}\right)_{\Gfh}\\\nonumber
		&-(\eta_{h,t},\phi_{h,t})_{\Gsh} 
		+\left(d_0\eta_{h,tt} + \phi_{h,t}+g\eta_h,\eta_{h,t}\right)_{\Gsh} \\\nonumber
		&+\left( \C_\rho:\nabla^2\eta_h,\nabla^2 \eta_{h,t}\right)_{\Gsh} +\left(k_\rho\jump{\nabla\eta_h\otimes\nL},\jump{\nabla \eta_{h,t}\otimes\nL}\right)_{\Lj}\\\nonumber
		=&\frac{1}{2}\frac{d}{dt}\left\|\nabla\phi_h\right\|^2_{\Omega_h} 
		+ \frac{g}{2}\frac{d}{dt}\left\|\kappa_h\right\|^2_{\Gfh}
		+\frac{d_0}{2}\frac{d}{dt}\left\|\eta_{h,t}\right\|^2_{\Gsh}+\frac{g}{2}\frac{d}{dt}\left\|\eta_h\right\|^2_{\Gsh} \\\nonumber
		&+\frac{1}{2}\frac{d}{dt}\left\|\C^{1/2}:\nabla^2\eta\right\|^2_{\Gs}+\frac{1}{2}\frac{d}{dt}\left\|k^{1/2}_\rho\jump{\nabla\eta_h\otimes\nL}\right\|^2_{\Lj}=\frac{d\Et}{dt}.
	\end{align}
Note, that the selection  for $v_h$ imposes a mild compatibility requirement on the discretization spaces in order for the proof to hold, viz.
$\gamma_{\Gf}(\V) \in \V_{\Gf}$.

\end{proof}

Let us now consider the fully discrete problem in the time domain given by equation~\eqref{eq:cgdg_t_form}. To simplify notation we will omit the super-index related to the time step $ (\cdot)^{n+1} $. We also note that the analysis is done for the fully discrete formulation in the time domain, but the same derivations also hold for the frequency domain. In the later case, instead of the constants $ \delta_t $ and $ \delta_{tt} $, we have $ -i\omega $ and $ -\omega^2 $, respectively. The fully discrete bilinear form in time domain for the CG case is given by
\begin{align}\label{eq:fully_discrete_b_cg}
	B_h([\phi_h,\kappa_h,\eta_h],[w_h,v_h,u_h])\eqdef
	&(\nabla\phi_h,\nabla w_h)_{\Omega_h} 
	- (\delta_t\kappa_{h},w_h)_{\Gfh}
	+\beta\left(\delta_t\phi_{h}+g\kappa_h,\alpha_fw_h+v_h\right)_{\Gfh}\\\nonumber
	-&(\delta_t\eta_{h},w_h)_{\Gsh}
	+\left(\delta_{tt}d_0\eta_{h} + \delta_t\phi_{h}+g\eta_h,u_h\right)_{\Gsh}  
	+\left( \C_\rho:\nabla^2\eta_h,\nabla^2 u_h\right)_{\Gsh}\\\nonumber
	+&\left(k_\rho\jump{\nabla\eta_h\otimes\nL},\jump{\nabla u_h\otimes\nL}\right)_{\Lj}.
\end{align}

Let us define the stabilization parameter $ \alpha_f $ as
\begin{equation}\label{eq:alpha_f}
	\alpha_f\eqdef\frac{(1-\beta)\delta_t}{\beta g}.
\end{equation}
This choice for $ \alpha_f $ ensures in the coercivity proof  the second term in Eq (\ref{eq:fully_discrete_b_cg}) cancels a similar term originating from the third term.

\begin{corollary} Selecting $\alpha_f $ as in equation~\eqref{eq:alpha_f} and introducing it into \eqref{eq:fully_discrete_b_cg} results in the following bilinear form,
\begin{align}\label{eq:fully_discrete_b_cg_2}
	B_h([\phi_h,\kappa_h,\eta_h],[w_h,v_h,u_h])=
	&(\nabla\phi_h,\nabla w_h)_{\Omega_h} \\\nonumber
	-& \beta\delta_t(\kappa_{h},w_h)_{\Gfh}
	+\frac{(1-\beta)\delta_t^2}{g}\left(\phi_{h},w_h\right)_{\Gfh}
	+\beta\delta_t\left(\phi_{h},v_h\right)_{\Gfh} 
	+\beta g\left(\kappa_h,v_h\right)_{\Gfh}
	\\\nonumber
	-&\delta_t(\eta_{h},w_h)_{\Gsh}
	+(\delta_{tt}d_0+g)\left(\eta_{h},u_h\right)_{\Gsh}  
	+\delta_t\left(\phi_{h},u_h\right)_{\Gsh}  \\\nonumber
	+&\left( \C_\rho:\nabla^2\eta_h,\nabla^2 u_h\right)_{\Gsh}
	+\left(k_\rho\jump{\nabla\eta_h\otimes\nL},\jump{\nabla u_h\otimes\nL}\right)_{\Lj}.
\end{align}
\end{corollary}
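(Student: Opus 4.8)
The claim is purely an algebraic substitution, so the plan is to expand the single term of \eqref{eq:fully_discrete_b_cg} in which $\alpha_f$ appears, insert the definition \eqref{eq:alpha_f}, and collect. Concretely, the only occurrence of $\alpha_f$ is in the free-surface term $\beta(\delta_t\phi_h+g\kappa_h,\alpha_f w_h+v_h)_{\Gfh}$. Using bilinearity of $(\cdot,\cdot)_{\Gfh}$, I would first split this into the four products
\[
\beta\delta_t\alpha_f(\phi_h,w_h)_{\Gfh}+\beta\delta_t(\phi_h,v_h)_{\Gfh}+\beta g\alpha_f(\kappa_h,w_h)_{\Gfh}+\beta g(\kappa_h,v_h)_{\Gfh}.
\]

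Next I would substitute $\alpha_f=\frac{(1-\beta)\delta_t}{\beta g}$ into the two terms carrying $w_h$. In the first, $\beta\delta_t\alpha_f=\beta\delta_t\cdot\frac{(1-\beta)\delta_t}{\beta g}=\frac{(1-\beta)\delta_t^2}{g}$, giving $\frac{(1-\beta)\delta_t^2}{g}(\phi_h,w_h)_{\Gfh}$, which is exactly the $(\phi_h,w_h)_{\Gfh}$ contribution appearing in \eqref{eq:fully_discrete_b_cg_2}. In the third, $\beta g\alpha_f=\beta g\cdot\frac{(1-\beta)\delta_t}{\beta g}=(1-\beta)\delta_t$, producing $(1-\beta)\delta_t(\kappa_h,w_h)_{\Gfh}$.

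The one place where any combining happens is the $(\kappa_h,w_h)_{\Gfh}$ contribution: adding the newly obtained $(1-\beta)\delta_t(\kappa_h,w_h)_{\Gfh}$ to the pre-existing term $-(\delta_t\kappa_h,w_h)_{\Gfh}$ of \eqref{eq:fully_discrete_b_cg} yields $\big((1-\beta)\delta_t-\delta_t\big)(\kappa_h,w_h)_{\Gfh}=-\beta\delta_t(\kappa_h,w_h)_{\Gfh}$, which is precisely the cancellation the choice \eqref{eq:alpha_f} is designed to produce and matches the corresponding term in \eqref{eq:fully_discrete_b_cg_2}. All remaining terms—$\beta\delta_t(\phi_h,v_h)_{\Gfh}$, $\beta g(\kappa_h,v_h)_{\Gfh}$, the volume term $(\nabla\phi_h,\nabla w_h)_{\Omega_h}$, and every structure-surface and joint term—contain no $\alpha_f$ and are carried over verbatim. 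Collecting the terms in the order displayed in \eqref{eq:fully_discrete_b_cg_2} completes the proof.

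There is no genuine obstacle here; the only point requiring care is sign bookkeeping when merging the two $(\kappa_h,w_h)_{\Gfh}$ contributions, and verifying that the factor $\beta$ (respectively $\beta g$) in front of the cross terms exactly cancels the denominator of $\alpha_f$ so that no spurious $\beta$-dependence is left in the $(\phi_h,w_h)_{\Gfh}$ and $(\kappa_h,w_h)_{\Gfh}$ coefficients.
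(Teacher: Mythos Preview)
Your proposal is correct and matches the paper's approach: the paper presents this corollary without an explicit proof, treating the result as an immediate algebraic substitution of \eqref{eq:alpha_f} into \eqref{eq:fully_discrete_b_cg}, which is precisely the expansion and cancellation you carry out in detail. The only minor point you gloss over is that the $\Gsh$ term $\left(\delta_{tt}d_0\eta_{h} + \delta_t\phi_{h}+g\eta_h,u_h\right)_{\Gsh}$ is also trivially split into $(\delta_{tt}d_0+g)(\eta_h,u_h)_{\Gsh}+\delta_t(\phi_h,u_h)_{\Gsh}$ in \eqref{eq:fully_discrete_b_cg_2}, but this requires no comment.
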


\begin{proposition}[Coercivity]\label{prop:coercivity}
If  $ \alpha_f $ is given by \eqref{eq:alpha_f}, the bilinear form~\eqref{eq:fully_discrete_b_cg} is coercive for any $ \beta $ such that $ 0<\beta<1$. That is, there exists a constant $ C>0 $ such that 
	\begin{equation}\label{eq:coercivity}
		B_h([w_h,v_h,u_h],[w_h,v_h,u_h])\geq C_c \vertiii{[w_h,v_h,u_h]}^2,\qquad\forall[w_h,v_h,u_h]\in\V_h\times\V_{\Gf,h}\times\V_{\Gs,h},
	\end{equation}
	with 
	\begin{align}\label{eq:norm}
		\vertiii{[w_h,v_h,u_h]}^2 \eqdef& \|w_h\|^2_{H^1(\Omega_h)} 
		+\left\|\left(\beta g\right)^{1/2}v_h\right\|^2_{\Gfh}
		+\left\|\left(\delta_{tt}d_0+g\right)^{1/2}u_h\right\|^2_{\Gsh} 
		\\\nonumber &
		+\left\|\C^{1/2}_\rho:\nabla^2u_h\right\|^2_{\tGsh}
		+\left\|k_\rho^{1/2}\jump{\nabla u_h\otimes\nL}\right\|^2_{\Lj}.
	\end{align}
\end{proposition}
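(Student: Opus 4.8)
The plan is to test the bilinear form against the \emph{diagonal} argument $[\phi_h,\kappa_h,\eta_h]=[w_h,v_h,u_h]$ and to exploit the fact that the definition~\eqref{eq:alpha_f} of $\alpha_f$ is tailored exactly so that the sign-indefinite cross terms disappear. First I would start from the equivalent form~\eqref{eq:fully_discrete_b_cg_2} (in which~\eqref{eq:alpha_f} is already substituted) and evaluate $B_h([w_h,v_h,u_h],[w_h,v_h,u_h])$. The two pairs of off-diagonal free-surface and structure terms, $-\beta\delta_t(v_h,w_h)_{\Gfh}$ together with $+\beta\delta_t(w_h,v_h)_{\Gfh}$, and $-\delta_t(u_h,w_h)_{\Gsh}$ together with $+\delta_t(w_h,u_h)_{\Gsh}$, cancel by symmetry of the (real) $L^2$ inner product. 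Using $\alpha_f\beta\delta_t=\tfrac{(1-\beta)\delta_t^2}{g}$, the positive-definiteness of $\C_\rho$ on symmetric second-order tensors (so that $(\C_\rho:\nabla^2 u_h,\nabla^2 u_h)_{\Gsh}=\|\C^{1/2}_\rho:\nabla^2 u_h\|^2_{\Gsh}$), and $\delta_t,\delta_{tt},d_0,g>0$ with $0<\beta<1$, this leaves the manifestly non-negative expression
\begin{align*}
& B_h([w_h,v_h,u_h],[w_h,v_h,u_h]) = \|\nabla w_h\|^2_{\Omega_h} + \alpha_f\beta\delta_t\|w_h\|^2_{\Gfh} + \beta g\|v_h\|^2_{\Gfh} \\
& \qquad\qquad + (\delta_{tt}d_0+g)\|u_h\|^2_{\Gsh} + \|\C^{1/2}_\rho:\nabla^2 u_h\|^2_{\Gsh} + \|k^{1/2}_\rho\jump{\nabla u_h\otimes\nL}\|^2_{\Lj}.
\end{align*}

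Next I would compare this, term by term, with $\vertiii{[w_h,v_h,u_h]}^2$ from~\eqref{eq:norm}. In the conforming $H^2(\Gs\setminus\Lj)$ setting $\tGsh$ coincides with $\Gsh$, so the free-surface elevation term $\beta g\|v_h\|^2_{\Gfh}$, the structural-inertia term $(\delta_{tt}d_0+g)\|u_h\|^2_{\Gsh}$, the bending term $\|\C^{1/2}_\rho:\nabla^2 u_h\|^2_{\Gsh}$ and the joint term $\|k^{1/2}_\rho\jump{\nabla u_h\otimes\nL}\|^2_{\Lj}$ all appear identically on both sides. The only discrepancy is that $B_h$ contributes $\|\nabla w_h\|^2_{\Omega_h}+\alpha_f\beta\delta_t\|w_h\|^2_{\Gfh}$ whereas the triple norm asks for the full $\|w_h\|^2_{H^1(\Omega_h)}=\|w_h\|^2_{\Omega_h}+\|\nabla w_h\|^2_{\Omega_h}$. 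Here I would invoke Corollary~\eqref{eq:seminorm_bound} (the generalized Poincar\'e inequality with the functional $f$ supported on $\Gf$), which applies precisely because $\alpha_f\beta\delta_t>0$ and $|\Gf|>0$, to obtain $C_\Omega\|w_h\|_{H^1(\Omega_h)}\leq\|\nabla w_h\|_{\Omega_h}+\alpha_f\beta\delta_t\|w_h\|_{\Gfh}$; squaring and using $(p+q)^2\leq 2(p^2+q^2)$ gives $\|\nabla w_h\|^2_{\Omega_h}+\alpha_f\beta\delta_t\|w_h\|^2_{\Gfh}\geq c_0\|w_h\|^2_{H^1(\Omega_h)}$ with an explicit $c_0>0$ depending only on $C_\Omega$ and $\alpha_f\beta\delta_t$. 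Collecting constants then yields~\eqref{eq:coercivity} with $C_c=\min\{c_0,1\}>0$.

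I do not expect a genuine obstacle: once one recognizes that~\eqref{eq:alpha_f} forces the cross terms to vanish, the statement reduces to a short verification followed by one application of an already-established Poincar\'e-type inequality. The only points requiring mild care are (i) bookkeeping the square-versus-first-power passage in the Poincar\'e estimate so that $c_0$ stays positive and mesh-independent, and (ii) noting that the (weak) gradient of a conforming $w_h$ equals its element-wise gradient, so that Corollary~\eqref{eq:seminorm_bound} (stated on $\Omega$) transfers verbatim to $\Omega_h$. The positive-definiteness of $\C_\rho$ needed for the bending term is immediate from the positivity of the Lam\'e parameters in~\eqref{eq:elastic_tensor}.
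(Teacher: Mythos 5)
Your proposal is correct and follows essentially the same route as the paper: evaluate \eqref{eq:fully_discrete_b_cg_2} on the diagonal, observe that the choice \eqref{eq:alpha_f} makes the $(w_h,v_h)_{\Gfh}$ and $(w_h,u_h)_{\Gsh}$ cross terms cancel, and then absorb $\|\nabla w_h\|^2_{\Omega_h}+\alpha_f\beta\delta_t\|w_h\|^2_{\Gfh}$ into $\|w_h\|^2_{H^1(\Omega_h)}$ via the generalized Poincar\'e corollary \eqref{eq:seminorm_bound}. Your explicit handling of the first-power-versus-squared passage in that last step is in fact slightly more careful than the paper's, which applies \eqref{eq:seminorm_bound} directly to the squared quantities.
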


\begin{proof}
	From the bilinear form defined in equation \eqref{eq:fully_discrete_b_cg_2} and using  equation \eqref{eq:seminorm_bound}, we can write
	\begin{align}\label{eq:coercivity1}
		B_h([w_h,v_h,u_h],[w_h,v_h,u_h])\eqdef
		&\|\nabla w_h\|^2_{\Omega_h} 
		+\left(\frac{(1-\beta)\delta_t^2}{g}\right)\left\|w_h\right\|^2_{\Gfh} \\\nonumber &
		+\left\|\left(\beta g\right)^{1/2}v_h\right\|^2_{\Gfh}
		+\left\|\left(\delta_{tt}d_0+g\right)^{1/2}u_h\right\|^2_{\Gsh}\\\nonumber
		&+\left\| \C^{1/2}_\rho:\nabla^2u_h\right\|^2_{\Gsh}
		+\left\|k_\rho^{1/2}\jump{\nabla u_h\otimes\nL}\right\|^2_{\Lj}\\\nonumber
		\geq&\ C_\Omega\|w_h\|^2_{H^1(\Omega_h)} \\\nonumber
		&+\left\|\left(\beta g\right)^{1/2}v_h\right\|^2_{\Gfh}
		+\left\|\left(\delta_{tt}d_0+g\right)^{1/2}u_h\right\|^2_{\Gsh} \\\nonumber
		&+\left\| \C^{1/2}_\rho:\nabla^2u_h\right\|^2_{\Gsh}
		+\left\|k_\rho^{1/2}\jump{\nabla u_h\otimes\nL}\right\|^2_{\Lj}\\\nonumber
		\geq &\ C_c\vertiii{[w_h,v_h,u_h]}^2.
	\end{align}
	Defining the coercivity constant as $ C_c\eqdef\min(C_\Omega,1)>0 $ proves Proposition \ref{prop:coercivity}.
\end{proof}

\begin{proposition}[Boundedness]\label{prop:bound}
 If  $ \alpha_f $ is given by \eqref{eq:alpha_f}, the bilinear form~\eqref{eq:fully_discrete_b_cg} is bounded for any $ \beta $ such that $ 0<\beta<1$. That is, there exists a constant $ C_b>0 $ such that 
	\begin{equation}\label{eq:bound}
		B_h([\phi_h,\kappa_h,\eta_h],[w_h,v_h,u_h])\leq C_b\vertiii{[\phi_h,\kappa_h,\eta_h]} \vertiii{[w_h,v_h,u_h]} ,\quad\forall[\phi_h,\kappa_h,\eta_h],[w_h,v_h,u_h]\in\V_h\times\V_{\Gf,h}\times\V_{\Gs,h},
	\end{equation}
	with $\vertiii{\cdot}$ defined in eq \eqref{eq:norm}.
\end{proposition}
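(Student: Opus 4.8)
The plan is to obtain \eqref{eq:bound} by bounding the ten terms of the expanded bilinear form \eqref{eq:fully_discrete_b_cg_2} one at a time, in each case splitting the pairing by Cauchy--Schwarz so that the two factors are controlled by (a constant times) $\vertiii{[\phi_h,\kappa_h,\eta_h]}$ and $\vertiii{[w_h,v_h,u_h]}$ respectively, and then summing the constants. Only three tools are needed: Corollary~\eqref{eq:H1_equiv}, used to pass from a gradient $L^2(\Omega_h)$ norm to the full $H^1(\Omega_h)$ norm; the trace theorem (Theorem~\ref{theorem:trace}), which, since $\Gfh,\Gsh\subseteq\partial\Omega$ and the conforming functions are globally $H^1$, gives $\|w_h\|_{\Gfh}\le\|w_h\|_{\partial\Omega}\le C_{\partial\Omega}\|w_h\|_{H^1(\Omega_h)}$ and the analogue on $\Gsh$; and the factorization $(\C_\rho:\nabla^2\eta_h,\nabla^2 u_h)_{\Gsh}=(\C_\rho^{1/2}:\nabla^2\eta_h,\C_\rho^{1/2}:\nabla^2 u_h)_{\Gsh}$, valid because $\C_\rho$ is symmetric positive semi-definite, which turns the Hessian pairing into an inner product against the $\C_\rho$-weighted norm appearing in $\vertiii{\cdot}$.

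First I would dispose of the five terms that already match $\vertiii{\cdot}$ up to a unit factor: $(\nabla\phi_h,\nabla w_h)_{\Omega_h}\le\|\phi_h\|_{H^1(\Omega_h)}\|w_h\|_{H^1(\Omega_h)}$, and, reading off the built-in weights $\beta g$, $\delta_{tt}d_0+g$, $\C_\rho$ and $k_\rho$ of the triple norm, the pairings $\beta g(\kappa_h,v_h)_{\Gfh}$, $(\delta_{tt}d_0+g)(\eta_h,u_h)_{\Gsh}$, $(\C_\rho:\nabla^2\eta_h,\nabla^2 u_h)_{\Gsh}$ and $(k_\rho\jump{\nabla\eta_h\otimes\nL},\jump{\nabla u_h\otimes\nL})_{\Lj}$, each of which is $\le\vertiii{[\phi_h,\kappa_h,\eta_h]}\,\vertiii{[w_h,v_h,u_h]}$ by Cauchy--Schwarz. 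For the Hessian pairing the triple norm is written over the broken domain $\tGsh$; this equals the integral over $\Gsh$ here because in the CG setting $\eta_h,u_h\in H^2(\Gs\setminus\Lj)$, so their element-wise Hessians coincide with the global ones.

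For the remaining five cross terms the recipe is to route each factor through the matching weighted component of $\vertiii{\cdot}$: $\|\kappa_h\|_{\Gfh}\le(\beta g)^{-1/2}\vertiii{[\phi_h,\kappa_h,\eta_h]}$, $\|v_h\|_{\Gfh}\le(\beta g)^{-1/2}\vertiii{[w_h,v_h,u_h]}$, $\|\eta_h\|_{\Gsh}\le(\delta_{tt}d_0+g)^{-1/2}\vertiii{[\phi_h,\kappa_h,\eta_h]}$, $\|u_h\|_{\Gsh}\le(\delta_{tt}d_0+g)^{-1/2}\vertiii{[w_h,v_h,u_h]}$, and any trace of $\phi_h$ or $w_h$ on $\Gfh$ or $\Gsh$ through the trace theorem as above. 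Applying this to $-\beta\delta_t(\kappa_h,w_h)_{\Gfh}$, $\frac{(1-\beta)\delta_t^2}{g}(\phi_h,w_h)_{\Gfh}$, $\beta\delta_t(\phi_h,v_h)_{\Gfh}$, $-\delta_t(\eta_h,w_h)_{\Gsh}$ and $\delta_t(\phi_h,u_h)_{\Gsh}$ produces bounds with constants $C_{\partial\Omega}\beta\delta_t(\beta g)^{-1/2}$, $C_{\partial\Omega}^2(1-\beta)\delta_t^2 g^{-1}$, $C_{\partial\Omega}\beta\delta_t(\beta g)^{-1/2}$, $C_{\partial\Omega}\delta_t(\delta_{tt}d_0+g)^{-1/2}$ and $C_{\partial\Omega}\delta_t(\delta_{tt}d_0+g)^{-1/2}$, respectively. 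Summing the five unit contributions with these five yields \eqref{eq:bound}, with $C_b$ the resulting finite sum, depending only on $\beta$, $g$, $\delta_t$, $\delta_{tt}$, $d_0$ and $C_{\partial\Omega}$ (and additionally on $\omega$ in the frequency-domain version, where $\delta_t\mapsto-i\omega$, $\delta_{tt}\mapsto-\omega^2$ and one takes moduli throughout).

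I do not expect a genuine obstacle: the argument is routine Cauchy--Schwarz bookkeeping. The two points that need care are, first, that $\vertiii{\cdot}$ carries no boundary contribution from $\phi_h$ (or $w_h$), so the traces of $\phi_h$ on $\Gfh$ and $\Gsh$ must be absorbed via the trace theorem and the $H^1(\Omega_h)$ component of the norm rather than by a non-existent boundary term; and second, that every scalar coefficient in \eqref{eq:fully_discrete_b_cg_2} must be folded into one of the weighted norms of $\vertiii{\cdot}$, so that the final $C_b$ is genuinely independent of the functions involved.
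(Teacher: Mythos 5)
Your proposal is correct and follows essentially the same route as the paper: term-by-term Cauchy--Schwarz on the expanded form \eqref{eq:fully_discrete_b_cg_2}, absorbing the boundary traces of $\phi_h$ and $w_h$ via \eqref{eq:H1_equiv} and the trace theorem, and folding every coefficient into the weighted components of $\vertiii{\cdot}$. The only differences are cosmetic: the paper assembles the resulting constants into a symmetric matrix $\mathbf{A}$ and takes $C_b$ as a Gershgorin-type row-sum bound rather than a plain sum, and it is slightly more careful with a spatially varying draft by using $d_0^{\min}\eqdef\min_{\x\in\Gs}d_0(\x)$ where you write $(\delta_{tt}d_0+g)^{-1/2}$.
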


\begin{proof}
Let us define the minimum draft $ d_0^{\min} $ as the minimum value of the draft for any point in the structure, i.e. $ d_0^{\min}\eqdef\min_{\x\in\Gs}d_0(\x) $. Starting from the bilinear form defined in equation~\eqref{eq:fully_discrete_b_cg_2} and using Schwarz inequality, we have that
\begin{align}\label{eq:boundedness1}
	B_h([\phi_h,\kappa_h,\eta_h],[w_h,v_h,u_h])\leq
	&\left\|\nabla\phi_h\right\|_{\Omega_h}\left\|\nabla w_h\right\|_{\Omega_h}
	+ \delta_t \left(\frac{\beta}{g}\right)^{1/2}\left\|(\beta g)^{1/2}\kappa_{h}\right\|_{\Gfh}\left\|w_h\right\|_{\Gfh} \\\nonumber
	+&(1-\beta)\delta_t^2 g^{-1}\left\|\phi_{h}\right\|_{\Gfh}\left\|w_h\right\|_{\Gfh}
	+\delta_t\left(\frac{\beta}{g}\right)^{1/2}\left\|\phi_{h}\right\|_{\Gfh}\left\|(\beta g)^{1/2}v_h\right\|_{\Gfh} \\\nonumber
	+&\left\|(\beta g)^{1/2}\kappa_h\right\|_{\Gfh}\left\|(\beta g)^{1/2}v_h\right\|_{\Gfh}
	\\\nonumber
	+&\delta_t(\delta_{tt}d_0^{\min}+g)^{-1/2}\left\|(\delta_{tt}d_0+g)^{1/2}\eta_{h}\right\|_{\Gsh}\left\|w_h\right\|_{\Gsh} \\\nonumber
	+&\left\|(\delta_{tt}d_0+g)^{1/2}\eta_{h}\right\|_{\Gsh}\left\|(\delta_{tt}d_0+g)^{1/2}u_h\right\|_{\Gsh}  \\\nonumber
	+&\delta_t(\delta_{tt}d_0^{\min}+g)^{-1/2}\left\|\phi_{h}\right\|_{\Gsh}\left\|(\delta_{tt}d_0+g)^{1/2}u_h\right\|_{\Gsh}  \\\nonumber
	+&\left\| \C_\rho^{1/2}:\nabla^2\eta_h\right\|_{\Gsh}\left\|\C_\rho^{1/2}:\nabla^2 u_h\right\|_{\Gsh} \\\nonumber
	+&\left\|k_\rho^{1/2}\jump{\nabla\eta_h\otimes\nL}\right\|_{\Gsh}\left\|k_\rho^{1/2}\jump{\nabla u_h\otimes\nL}\right\|_{\Lj}.
\end{align}
Using equations \eqref{eq:H1_equiv} and \eqref{eq:sob_trace}, and noting that $ \|w\|_{\Gf}\leq\|w\|_{\partial\Omega} $ and $ \|w\|_{\Gs}\leq\|w\|_{\partial\Omega} $, we find that
\begin{align}\label{eq:boundedness2}
	B_h([\phi_h,\kappa_h,\eta_h],[w_h,v_h,u_h])\leq
	&\left\|\phi_h\right\|_{H^1(\Omega_h)}\left\|w_h\right\|_{H^1(\Omega_h)}
	+ C_{\partial\Omega}\delta_t \left(\frac{\beta}{g}\right)^{1/2}\left\|(\beta g)^{1/2}\kappa_{h}\right\|_{\Gfh}\left\|w_h\right\|_{H^1(\Omega_h)} \\\nonumber
	+&C_{\partial\Omega}^2(1-\beta)\delta_t^2 g^{-1}\left\|\phi_{h}\right\|_{H^1(\Omega_h)}\left\|w_h\right\|_{H^1(\Omega_h)} \\\nonumber
	+&C_{\partial\Omega}\delta_t\left(\frac{\beta}{g}\right)^{1/2}\left\|\phi_{h}\right\|_{H^1(\Omega_h)}\left\|(\beta g)^{1/2}v_h\right\|_{\Gfh} \\\nonumber
	+&\left\|(\beta g)^{1/2}\kappa_h\right\|_{\Gfh}\left\|(\beta g)^{1/2}v_h\right\|_{\Gfh}
	\\\nonumber
	+&C_{\partial\Omega}\delta_t(\delta_{tt}d_0^{\min}+g)^{-1/2}\left\|(\delta_{tt}d_0+g)^{1/2}\eta_{h}\right\|_{\Gsh}\left\|w_h\right\|_{H^1(\Omega_h)} \\\nonumber
	+&\left\|(\delta_{tt}d_0+g)^{1/2}\eta_{h}\right\|_{\Gsh}\left\|(\delta_{tt}d_0+g)^{1/2}u_h\right\|_{\Gsh}  \\\nonumber
	+&C_{\partial\Omega}\delta_t(\delta_{tt}d_0^{\min}+g)^{-1/2}\left\|\phi_{h}\right\|_{H^1(\Omega_h)}\left\|(\delta_{tt}d_0+g)^{1/2}u_h\right\|_{\Gsh}  \\\nonumber
	+&\left\| \C_\rho^{1/2}:\nabla^2\eta_h\right\|_{\Gsh}\left\|\C_\rho^{1/2}:\nabla^2 u_h\right\|_{\Gsh} \\\nonumber
	+&\left\|k_\rho^{1/2}\jump{\nabla\eta_h\otimes\nL}\right\|_{\Gsh}\left\|k_\rho^{1/2}\jump{\nabla u_h\otimes\nL}\right\|_{\Lj}.
\end{align}
Which can be written in matrix vector form as follows
\begin{align}\label{eq:boundedness3}
	B_h([\phi_h,\kappa_h,\eta_h],[w_h,v_h,u_h])\leq
	\left(\begin{matrix}
		\|\phi_h\|_{H^1(\Omega_h)}\\
		\left\|\left(\beta g\right)^{1/2}\kappa_h\right\|_{\Gfh}\\
		\left\|\left(\delta_{tt}d_0+g\right)^{1/2}\eta_h\right\|_{\Gsh} \\
		\left\|\C^{1/2}_\rho:\nabla^2\eta_h\right\|_{\tGsh}\\
		\left\|k_\rho^{1/2}\jump{\nabla \eta_h\otimes\nL}\right\|_{\Lj}
	\end{matrix} \right)^T
		\mathbf{A}
	\left(\begin{matrix}
		 \|w_h\|_{H^1(\Omega_h)}\\
		\left\|\left(\beta g\right)^{1/2}v_h\right\|_{\Gfh}\\
		\left\|\left(\delta_{tt}d_0+g\right)^{1/2}u_h\right\|_{\Gsh} \\
		\left\|\C^{1/2}_\rho:\nabla^2u_h\right\|_{\tGsh}\\
		\left\|k_\rho^{1/2}\jump{\nabla u_h\otimes\nL}\right\|_{\Lj}
	\end{matrix} \right),
\end{align}
with $ \mathbf{A} $ a symmetric matrix defined by
\begin{align}\label{eq:boundedness4}
		\mathbf{A} =
	\left(\begin{matrix}
	1+\frac{(1-\beta)\delta^2_t}{g}C_{\partial\Omega}^2&
	C_{\partial\Omega}\delta_t \left(\frac{\beta}{g}\right)^{1/2} &
	C_{\partial\Omega}\delta_t(\delta_{tt}d_0^{\min}+g)^{-1/2}&
	0&0 \\
	C_{\partial\Omega}\delta_t \left(\frac{\beta}{g}\right)^{1/2}&1&0&0&0 \\
	C_{\partial\Omega}\delta_t(\delta_{tt}d_0^{\min}+g)^{-1/2}& 0&1&0&0 \\
	0&0&0&1&0 \\
	0&0&0&0&1
	\end{matrix} \right).
\end{align}	
Using arguments analogous in the proof of the Gershgorin circle theorem, if we define the boundedness constant $ C_b $ as the maximum eigenvalue of the matrix $ \mathbf{A} $, that is
\begin{align}
C_b =\max_i \left ( \sum_j | \mathbf{A}_{ij}| \right ),
\end{align}	
we arrive at the bound \eqref{eq:bound}.
\end{proof}

Using the Lax-Milgram theorem, see for instance \cite{ern2004theory}, together with Proposition \ref{prop:coercivity} and Proposition \ref{prop:bound}, we can conclude that the semi-discrete problem \eqref{eq:galerkin_form} has a unique solution.

%
%

\subsection{Discontinuous formulation}\label{subsec:discontinuous}
Let us now analyse the CDG formulation as stated in the semi-discrete form~\eqref{eq:cgdg_form}. 
\begin{proposition}[Consistency of the C/DG formulation]\label{prop:consistency_cdg}
	The semi-discrete problem~\eqref{eq:cgdg_form} is consistent. That is, the exact solution $ [\phi,\kappa,\eta]\in\V\times\V_{\Gf}\times\V_{\Gs}$ satisfies the approximate problem
	\begin{equation}\label{eq:cgdg_t_form_exact}
		\hat{B}_h([\phi,\kappa,\eta],[w_h,v_h,u_h])=L_h([w_h,v_h,u_h])\quad\forall[w_h,v_h,u_h]\in\hatV_h\times\hatV_{\Gf,h}\times\hatV_{\Gs,h}.
	\end{equation}
\end{proposition}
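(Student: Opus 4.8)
The plan is to follow the consistency argument of Proposition~\ref{prop:consistency}, but now to carry along the interior structural facets $\Lsh$, on which the trial and test functions have a discontinuous gradient, and to show that the additional C/DG terms in $\hat{B}_h$ are exactly what is needed to absorb the facet contributions generated by integrating the plate bending term by parts. Throughout I use that the exact solution is regular enough for the strong equations to make sense: $\phi$ is harmonic in $\Omega$ and smooth up to $\partial\Omega$, while $\eta$ solves the plate equation and, away from the joints $\Lj$, has continuous deflection, continuous gradient, continuous moment $\C_\rho:\nabla^2\eta$ and continuous shear $(\nabla\cdot(\C_\rho:\nabla^2\eta))\cdot\nL$.

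First I would use this regularity to discard two of the three C/DG terms. On every facet $e\in\Lsh$ one has $\jump{\nabla\eta\otimes\nL}=0$, so both the dual-consistency term $-(\jump{\nabla\eta\otimes\nL},\average{\C_\rho:\nabla^2 u_h})_{\Lsh}$ and the penalty term $\tfrac{\gamma\hat{D}_\rho}{h}(\jump{\nabla\eta\otimes\nL},\jump{\nabla u_h\otimes\nL})_{\Lsh}$ vanish when evaluated at $[\phi,\kappa,\eta]$. Hence $\hat{B}_h([\phi,\kappa,\eta],[w_h,v_h,u_h]) = B_h([\phi,\kappa,\eta],[w_h,v_h,u_h]) - (\average{\C_\rho:\nabla^2\eta},\jump{\nabla u_h\otimes\nL})_{\Lsh}$, with $B_h$ as in \eqref{eq:bilinear_modified_h}.

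Next I would integrate by parts, element by element. For the flow term $(\nphi,\nabla w_h)_{\Omega_h}$, continuity of $w_h$ together with single-valuedness of $\n\cdot\nphi$ across interior facets of $\Omega_h$ makes the interior-facet contributions cancel, leaving $-(\Delta\phi,w_h)_{\Omega_h}$ plus boundary integrals on $\Gbh,\Gih,\Goh,\Gfh,\Gsh$, into which the kinematic conditions \eqref{eq:kinematic_bc} are substituted: the $\Gih$, $\Goh$ terms reproduce $L_h$, the $\Gbh$ term vanishes, and the $\Gfh$, $\Gsh$ terms cancel $-(\kappa_{h,t},w_h)_{\Gfh}$ and $-(\eta_{h,t},w_h)_{\Gsh}$ of $B_h$ once the exact solution is inserted. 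For the bending term $(\C_\rho:\nabla^2\eta,\nabla^2 u_h)_{\Gsh}$ I would integrate by parts twice. The first integration by parts produces $-(\nabla\cdot(\C_\rho:\nabla^2\eta),\nabla u_h)_{\Gsh}$ plus facet terms; since $\nabla u_h$ jumps on $\Lsh$ while $\C_\rho:\nabla^2\eta$ is single-valued, the $\Lsh$ contributions collapse to exactly $+(\average{\C_\rho:\nabla^2\eta},\jump{\nabla u_h\otimes\nL})_{\Lsh}$, which cancels the symmetrization term isolated in the previous paragraph; the remaining $\Lfs$ and $\Lj$ facet terms are treated exactly as in \eqref{eq:by_parts_joint}. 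The second integration by parts produces $(\nabla^2{:}(\C_\rho{:}\nabla^2\eta),u_h)_{\Gsh}$, whose $\Lsh$ facet terms vanish because $u_h$ is continuous and the exact shear is continuous, whose $\Lfs$ terms vanish by the free-edge conditions \eqref{eq:zero_moment_plate}--\eqref{eq:zero_shear_plate}, and whose $\Lj$ terms — together with the $k_\rho$ term already present in $B_h$ — reduce through \eqref{eq:moment_mean_plate}--\eqref{eq:zero_shear_jump_plate} to a multiple of $\jump{\nabla\eta\otimes\nL}$ and hence vanish for the exact solution.

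Collecting all the pieces, $\hat{B}_h([\phi,\kappa,\eta],[w_h,v_h,u_h]) - L_h([w_h,v_h,u_h])$ reduces to
\[
-(\Delta\phi,w_h)_{\Omega_h} + \beta\left(\phi_t+g\kappa,\alpha_f w_h+v_h\right)_{\Gfh} + \left(d_0\eta_{tt}+\nabla^2{:}(\C_\rho{:}\nabla^2\eta)+\phi_t+g\eta,\,u_h\right)_{\Gsh},
\]
and each of the three factors vanishes by \eqref{eq:pflow}, \eqref{eq:dynamic_linear_bc_f_3d} and \eqref{eq:dynamic_linear_bc_s_3d}, respectively, which proves \eqref{eq:cgdg_t_form_exact}. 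The step I expect to be the main obstacle is the bookkeeping of the $\Lsh$ facet terms in the double integration by parts of the fourth-order operator with a test function that is only $H^1$ across element boundaries: one must verify that the single facet term surviving the first integration by parts is exactly, sign included, the symmetrization term built into $\hat{B}_h$, and that every other would-be $\Lsh$ term vanishes, either by continuity of $u_h$ or by continuity of the exact moment and shear.
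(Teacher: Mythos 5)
Your proof is correct and follows essentially the same route as the paper's: reduce to the argument of Proposition~\ref{prop:consistency}, observe that the dual-consistency and penalty terms vanish because $\jump{\nabla\eta\otimes\nL}=0$ on $\Lsh$ for the exact solution, and check that the remaining symmetrization term $-\left(\average{\C_\rho:\nabla^2\eta},\jump{\nabla u_h\otimes\nL}\right)_{\Lsh}$ is cancelled by the facet contributions arising from the element-wise integration by parts of the bending term. Your writeup is in fact more explicit than the paper's two-sentence proof about this last cancellation, which is the only genuinely non-trivial point, and your sign bookkeeping there is right.
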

\begin{proof}
	The consistency statement \eqref{eq:cgdg_t_form_exact} results from the same reasoning as in Proposition \ref{prop:consistency}, noting that the terms involving $ \jump{\nabla \eta\otimes\nL} $  appearing in equation \eqref{eq:bilinear_modified_h_cgdg} vanish since the solution gradients are continuous across element boundaries not belonging to $ \Lj $.
\end{proof}

The fully discrete C/DG bilinear form in the time domain, again omitting the super-index related to the time step $ (\cdot)^{n+1} $, reads
\begin{align}\label{eq:fully_discrete_b_cgdg}
	\hat{B}_h([\phi_h,\kappa_h,\eta_h],[w_h,v_h,u_h])\eqdef
	&(\nabla\phi_h,\nabla w_h)_{\Omega_h} 
	- (\delta_t\kappa_{h},w_h)_{\Gfh}
	+\beta\left(\delta_t\phi_{h}+g\kappa_h,\alpha_fw_h+v_h\right)_{\Gfh}\\\nonumber
	-&(\delta_t\eta_{h},w_h)_{\Gsh}
	+\left(\delta_{tt}d_0\eta_{h} + \delta_t\phi_{h}+g\eta_h,u_h\right)_{\Gsh}  
	+\left( \C_\rho:\nabla^2\eta_h,\nabla^2 u_h\right)_{\Gsh}\\\nonumber
	+&\left(k_\rho\jump{\nabla\eta_h\otimes\nL},\jump{\nabla u_h\otimes\nL}\right)_{\Lj}
	-\left( \average{\C_\rho:\nabla^2\eta_h},\jump{\nabla u_h\otimes\nL}\right)_{\Lsh} \\\nonumber
	-&\left(\jump{\nabla\eta_h\otimes\n_{\Lsh}},\average{\C_\rho:\nabla^2 u_h}\right)_{\Lsh}
	+\frac{\gamma\hat{D}_{\rho}}{h}\left(\jump{\nabla\eta_h\otimes\nL},\jump{\nabla u_h\otimes\nL}\right)_{\Lsh}.
\end{align}

\begin{proposition}[Coercivity of the C/DG formulation]\label{prop:coercivity_cdg}
	If  $ \alpha_f $ is given by \eqref{eq:alpha_f} and $ \gamma>2C_I $, the bilinear form~\eqref{eq:fully_discrete_b_cgdg} is coercive for any $ \beta $ such that $ 0<\beta<1$. That is, there exists a constant $ \hat{C}_c>0 $ such that 
	\begin{equation}\label{eq:coercivity_cdg}
		\hat{B}_h([w_h,v_h,u_h],[w_h,v_h,u_h])\geq \hat{C}_c \vertiii{[w_h,v_h,u_h]}^2_{\scriptsize\mbox{CDG}},\qquad\forall[w_h,v_h,u_h]\in\hatV_h\times\hatV_{\Gf,h}\times\hatV_{\Gs,h},
	\end{equation}
	with 
	\begin{align}\label{eq:norm_cdg}
		\vertiii{[w_h,v_h,u_h]}^2_{\scriptsize\mbox{CDG}} \eqdef& \|w_h\|^2_{H^1(\Omega_h)} 
		+\left\|\left(\beta g\right)^{1/2}v_h\right\|^2_{\Gfh}
		+\left\|\left(\delta_{tt}d_0+g\right)^{1/2}u_h\right\|^2_{\Gsh} 
		\\\nonumber &
		+\left\|\C^{1/2}_\rho:\nabla^2u_h\right\|^2_{\tGsh}
		+\left\|k_\rho^{1/2}\jump{\nabla u_h\otimes\nL}\right\|^2_{\Lj}
		+\left\|\left(\frac{\hat{D}_{\rho}}{h}\right)^{1/2}\jump{\nabla\u_h\otimes\nL}\right\|^2_{\Lsh}.
	\end{align}
\end{proposition}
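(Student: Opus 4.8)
The plan is to build on the coercivity proof for the continuous (CG) case, Proposition~\ref{prop:coercivity}, and control only the additional terms coming from the C/DG penalization of gradient jumps across $\Lsh$. Concretely, I would evaluate $\hat{B}_h([w_h,v_h,u_h],[w_h,v_h,u_h])$ using \eqref{eq:fully_discrete_b_cgdg}: the first group of terms is exactly $B_h([w_h,v_h,u_h],[w_h,v_h,u_h])$, which by Proposition~\ref{prop:coercivity} is bounded below by $C_c\vertiii{[w_h,v_h,u_h]}^2$. What remains is
\[
-2\left(\average{\C_\rho:\nabla^2 u_h},\jump{\nabla u_h\otimes\nL}\right)_{\Lsh}
+\frac{\gamma\hat{D}_{\rho}}{h}\left\|\jump{\nabla u_h\otimes\nL}\right\|^2_{\Lsh},
\]
and the task is to show this is bounded below by a positive multiple of $\|(\hat D_\rho/h)^{1/2}\jump{\nabla u_h\otimes\nL}\|^2_{\Lsh}$, possibly at the price of absorbing a small fraction of the volumetric elastic energy $\|\C_\rho^{1/2}:\nabla^2 u_h\|^2_{\tGsh}$.

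The key estimates are the following. First, apply Cauchy--Schwarz and Young's inequality with a parameter $\epsilon>0$ to the consistency/dual-consistency cross term:
\[
2\left|\left(\average{\C_\rho:\nabla^2 u_h},\jump{\nabla u_h\otimes\nL}\right)_{\Lsh}\right|
\leq \epsilon\,\frac{h}{\hat D_\rho}\left\|\average{\C_\rho:\nabla^2 u_h}\right\|^2_{\Lsh}
+\frac{1}{\epsilon}\,\frac{\hat D_\rho}{h}\left\|\jump{\nabla u_h\otimes\nL}\right\|^2_{\Lsh}.
\]
Second, control the average of the second derivatives on the skeleton by the element-wise bulk norm using the discrete trace inequality, Theorem~\ref{theorem:trace_inequality}: on each structural element $\Omega_e$ of size $h$ one has $\|\C_\rho:\nabla^2 u_h\|^2_{\partial\Omega_e}\leq C_I h^{-1}\|\C_\rho:\nabla^2 u_h\|^2_{\Omega_e}$ (the gradient term in the trace inequality vanishes or is absorbed since $\nabla^2 u_h$ is a polynomial and an inverse inequality converts it back), so that $\frac{h}{\hat D_\rho}\|\average{\C_\rho:\nabla^2 u_h}\|^2_{\Lsh}\leq C_I \hat D_\rho^{-1}\|\C_\rho:\nabla^2 u_h\|^2_{\tGsh}$. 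Here I would use that $\hat D_\rho=EI/(1-\nu^2)$ is chosen precisely so that $\C_\rho:\nabla^2 u_h$ is controlled by $\hat D_\rho^{1/2}$ times the energy norm $\|\C_\rho^{1/2}:\nabla^2 u_h\|$ (a spectral bound on the elastic tensor), so the constant works out cleanly. Combining, the leftover terms are bounded below by
\[
\left(\gamma-\frac{1}{\epsilon}\right)\frac{\hat D_\rho}{h}\left\|\jump{\nabla u_h\otimes\nL}\right\|^2_{\Lsh}
-\epsilon C_I\left\|\C_\rho^{1/2}:\nabla^2 u_h\right\|^2_{\tGsh}.
\]

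Choosing $\epsilon$ so that $\epsilon C_I<C_c$ (so the negative elastic contribution is absorbed into the $C_c\|\C_\rho^{1/2}:\nabla^2 u_h\|^2$ coming from Proposition~\ref{prop:coercivity}), and then requiring $\gamma>1/\epsilon$, yields a strictly positive coefficient in front of the skeleton penalty term. The condition $\gamma>2C_I$ stated in the proposition corresponds to the natural choice $\epsilon=1/C_I$ up to the factor $2$ that appears because the cross term carries a factor $2$ and one wants to keep at least half of the penalty; with that choice the absorbed elastic fraction is a controlled fraction of $C_c$. Collecting all contributions gives $\hat B_h([w_h,v_h,u_h],[w_h,v_h,u_h])\geq \hat C_c\vertiii{[w_h,v_h,u_h]}^2_{\scriptsize\mbox{CDG}}$ with $\hat C_c$ a positive constant depending on $C_c$, $C_I$, and $\gamma$.

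The main obstacle I anticipate is the careful bookkeeping in the trace/inverse inequality step: one must make sure the constant $C_I$ from Theorem~\ref{theorem:trace_inequality}, applied to the polynomial $\nabla^2 u_h$ on structural elements, is exactly the one appearing in the hypothesis $\gamma>2C_I$, and that the material constant $\hat D_\rho$ genuinely dominates the componentwise action of $\C_\rho$ on any symmetric matrix (a Poisson-ratio-dependent spectral estimate on \eqref{eq:elastic_tensor}), so that no extra mesh- or material-dependent factor sneaks into the final constant. A secondary subtlety is that the average $\average{\C_\rho:\nabla^2 u_h}$ on an interior facet of $\Lsh$ involves contributions from both adjacent elements, so a factor arising from $\average{\cdot}=\tfrac12(\cdot^++\cdot^-)$ together with $(a+b)^2\le 2(a^2+b^2)$ must be tracked; this is precisely what produces the factor $2$ in the threshold $\gamma>2C_I$.
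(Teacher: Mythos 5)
Your proposal is correct and follows essentially the same route as the paper: split off the CG part (Proposition~\ref{prop:coercivity}), bound the cross term $-2\left(\average{\C_\rho:\nabla^2 u_h},\jump{\nabla u_h\otimes\nL}\right)_{\Lsh}$ via Young's $\epsilon$-inequality plus the discrete trace inequality of Theorem~\ref{theorem:trace_inequality}, absorb the resulting bulk elastic contribution, and keep a positive coefficient $\gamma-2C_I$ on the penalty term. The only bookkeeping difference is that the paper absorbs exactly half of the \emph{unit-coefficient} elastic term in the intermediate bound (choosing $\epsilon=\tfrac{1}{2C_I\hat D_\rho}$, which is precisely where the threshold $2C_I$ originates, rather than the averaging factor $\tfrac12$ or an absorption into $C_c\vertiii{\cdot}^2$ as you suggest), and only afterwards takes $\hat C_c=\min(C_\Omega,1,\gamma-2C_I)$.
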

\begin{proof}
	Starting from the second step in equation \eqref{eq:coercivity1} we have that
	\begin{align}\label{eq:coercivity1_cdg}
		\hat{B}_h([w_h,v_h,u_h],[w_h,v_h,u_h])\geq&\ C_\Omega\|w_h\|^2_{H^1(\Omega_h)}
		+\left\|\left(\beta g\right)^{1/2}v_h\right\|^2_{\Gfh}
		+\left\|\left(\delta_{tt}d_0+g\right)^{1/2}u_h\right\|^2_{\Gsh} \\\nonumber
		&+\left\| \C^{1/2}_\rho:\nabla^2u_h\right\|^2_{\Gsh}
		+\left\|k_\rho^{1/2}\jump{\nabla u_h\otimes\nL}\right\|^2_{\Lj}\\\nonumber
		&-2\left(\jump{\nabla u_h\otimes\n_{\Lsh}},\average{\C_\rho:\nabla^2 u_h}\right)_{\Lsh}\\\nonumber
		&+\left\|\left(\frac{\gamma\hat{D}_{\rho}}{h}\right)^{1/2}\jump{\nabla\u_h\otimes\nL}\right\|^2_{\Lsh}.
	\end{align}
	Using Young's $\epsilon$-inequality and the inverse inequality stated in Theorem \ref{theorem:trace_inequality}, one can bound the sixth term in equation \eqref{eq:coercivity1_cdg} as follows
	\begin{align}\label{eq:coercivity2_cdg}
		2\left|\left( \average{\C_\rho:\nabla^2v_h},\jump{\nabla v_h\otimes\nL}\right)_{\Lsh}\right|\leq&
		\ \epsilon\left\|h^{1/2}\average{\C^{1/2}_\rho:\nabla^2v_h}\right\|^2_{\Lsh}+\frac{1}{\epsilon}\left\|h^{-1/2}\jump{\nabla v_h\otimes\nL}\right\|^2_{\Lsh}\\\nonumber
		\leq& \sum_{E\in\Gsh} \epsilon C_I\hat{D}_\rho\left\|\C^{1/2}_\rho:\nabla^2v_h\right\|^2_{E}+\frac{1}{\epsilon}\left\|h^{-1/2}\jump{\nabla v_h\otimes\nL}\right\|^2_{\Lsh}\\\nonumber
		\underset{\epsilon=\frac{1}{2C_{I}\hat{D}_\rho}}{=}& \sum_{E\in\Gsh} \frac{1}{2}\left\|\C^{1/2}_\rho:\nabla^2v_h\right\|^2_{E}+\left(\frac{2C_{I}\hat{D}_\rho}{h}\right)\left\|\jump{\nabla v_h\otimes\nL}\right\|^2_{\Lsh}.
	\end{align}
	Introducing \eqref{eq:coercivity2_cdg} into \eqref{eq:coercivity1_cdg} we have that
		\begin{align}\label{eq:coercivity3_cdg}
		\hat{B}_h([w_h,v_h,u_h],[w_h,v_h,u_h])\geq&\ C_\Omega\|w_h\|^2_{H^1(\Omega_h)}
		+\left\|\left(\beta g\right)^{1/2}v_h\right\|^2_{\Gfh}
		+\left\|\left(\delta_{tt}d_0+g\right)^{1/2}u_h\right\|^2_{\Gsh} \\\nonumber
		&+\frac{1}{2}\left\| \C^{1/2}_\rho:\nabla^2u_h\right\|^2_{\Gsh}
		+\left\|k_\rho^{1/2}\jump{\nabla u_h\otimes\nL}\right\|^2_{\Lj}\\\nonumber
		&+(\gamma-2C_I)\left\|\left(\frac{\hat{D}_{\rho}}{h}\right)^{1/2}\jump{\nabla\u_h\otimes\nL}\right\|^2_{\Lsh}\\\nonumber
		\geq &\ \hat{C}_c\vertiii{[w_h,v_h,u_h]}^2_{\scriptsize\mbox{CDG}}.
	\end{align}
Defining the coercivity constant as $ \hat{C}_c\eqdef\min(C_\Omega,1,(\gamma-2C_I)) $, which is greater than zero provided that $ \gamma>2C_I $, we prove Proposition \ref{prop:coercivity_cdg}.
\end{proof}

\begin{proposition}[Boundedness of the C/DG formulation]\label{prop:bound_cdg}
	If  $ \alpha_f $ is given by \eqref{eq:alpha_f}, the bilinear form~\eqref{eq:fully_discrete_b_cgdg} is bounded for any $ \beta $ such that $ 0<\beta<1$. That is, there exists a constant $ \hat{C}_b>0 $ such that 
	\begin{align}\label{eq:bound_cgdg}
		\hat{B}_h([\phi_h,\kappa_h,\eta_h],[w_h,v_h,u_h])&\leq \hat{C}_b\vertiii{[\phi_h,\kappa_h,\eta_h]}_{\scriptsize\mbox{CDG}} \vertiii{[w_h,v_h,u_h]}_{\scriptsize\mbox{CDG}} ,\\\nonumber
		&\forall\ [\phi_h,\kappa_h,\eta_h],[w_h,v_h,u_h]\in\hatV_h\times\hatV_{\Gf,h}\times\hatV_{\Gs,h},
	\end{align}
	with $\vertiii{\cdot}_{\scriptsize\mbox{CDG}}$ defined in eq \eqref{eq:norm_cdg}.
\end{proposition}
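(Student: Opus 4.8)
The plan is to bootstrap from the boundedness proof of the continuous formulation, Proposition~\ref{prop:bound}. The fully discrete C/DG form~\eqref{eq:fully_discrete_b_cgdg} differs from the CG form~\eqref{eq:fully_discrete_b_cg} only by the three facet integrals on $\Lsh$, and the C/DG norm~\eqref{eq:norm_cdg} is obtained from the CG norm~\eqref{eq:norm} by adjoining exactly one extra quantity, $\left\|(\hat{D}_{\rho}/h)^{1/2}\jump{\nabla u_h\otimes\nL}\right\|_{\Lsh}$. So first I would invoke~\eqref{eq:boundedness3}--\eqref{eq:boundedness4} to bound the CG part of $\hat{B}_h$ by the quadratic form of the $5\times5$ matrix $\mathbf{A}$ acting on the vectors whose entries are the five CG norm components of $[\phi_h,\kappa_h,\eta_h]$ and $[w_h,v_h,u_h]$; it then only remains to estimate the three $\Lsh$-terms and to enlarge $\mathbf{A}$ to a $6\times6$ matrix.

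For the interior penalty term a plain Cauchy--Schwarz inequality on $\Lsh$ suffices:
\[
\frac{\gamma\hat{D}_{\rho}}{h}\left|\left(\jump{\nabla\eta_h\otimes\nL},\jump{\nabla u_h\otimes\nL}\right)_{\Lsh}\right|\le\gamma\left\|(\hat{D}_{\rho}/h)^{1/2}\jump{\nabla\eta_h\otimes\nL}\right\|_{\Lsh}\left\|(\hat{D}_{\rho}/h)^{1/2}\jump{\nabla u_h\otimes\nL}\right\|_{\Lsh},
\]
a clean product of the new norm components with weight $\gamma$. For the consistency and dual-consistency terms I would reuse the scaling device from the coercivity proof~\eqref{eq:coercivity2_cdg}: write the integrand on $\Lsh$ as the pairing of $h^{1/2}\average{\C_\rho:\nabla^2(\cdot)}$ with $h^{-1/2}\jump{\nabla(\cdot)\otimes\nL}$, apply Cauchy--Schwarz, and then use the inverse/trace inequality of Theorem~\ref{theorem:trace_inequality} together with the boundedness of the elastic tensor (which is what the constant $\hat{D}_{\rho}$ encodes) to obtain
\[
\left\|h^{1/2}\average{\C_\rho:\nabla^2 v_h}\right\|_{\Lsh}\le C_I^{1/2}\,\hat{D}_{\rho}^{1/2}\left\|\C^{1/2}_\rho:\nabla^2 v_h\right\|_{\Gsh}.
\]
The factor $\hat{D}_{\rho}^{1/2}$ cancels against the $\hat{D}_{\rho}^{-1/2}$ in $h^{-1/2}\jump{\nabla(\cdot)\otimes\nL}=\hat{D}_{\rho}^{-1/2}(\hat{D}_{\rho}/h)^{1/2}\jump{\nabla(\cdot)\otimes\nL}$, so each of the two terms is bounded by a multiple of $C_I^{1/2}$ times the product of the structural-energy component $\|\C^{1/2}_\rho:\nabla^2\cdot\|_{\tGsh}$ of one argument and the new $\Lsh$-jump component of the other.

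Finally I would assemble all of these estimates into a symmetric $6\times6$ matrix $\hat{\mathbf{A}}$ whose leading block is $\mathbf{A}$ of~\eqref{eq:boundedness4}, whose new diagonal entry is $\gamma$, and whose only new off-diagonal entries, coupling the fourth and sixth norm components, are a multiple of $C_I^{1/2}$; setting $\hat{C}_b:=\max_i\sum_j|\hat{\mathbf{A}}_{ij}|$, which is finite since every entry is, the same Gershgorin-type argument as in Proposition~\ref{prop:bound} yields~\eqref{eq:bound_cgdg}. I expect the only step that is not a mechanical repetition of Proposition~\ref{prop:bound} to be the second one: transferring the facet norm on $\Lsh$ back to the element-wise structural energy on $\Gsh$ through Theorem~\ref{theorem:trace_inequality}, while keeping track of the powers of the mesh size $h$ and of the constants $C_I$ and $\hat{D}_{\rho}$ so that the cross terms land precisely on the intended pair of norm components; once that is in place the argument closes by bookkeeping of a finite matrix.
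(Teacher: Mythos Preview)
Your proposal is correct and follows essentially the same approach as the paper: the paper's proof simply states that the result follows from the same arguments as Proposition~\ref{prop:bound} together with the facet-to-element estimate~\eqref{eq:coercivity2_cdg}, referring to \cite{engel2002continuous} for details. Your write-up is in fact more explicit than the paper's own proof, correctly isolating the three extra $\Lsh$-terms, bounding them via Cauchy--Schwarz and the trace/inverse inequality of Theorem~\ref{theorem:trace_inequality}, and then closing with the enlarged Gershgorin matrix argument.
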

\begin{proof}
	Statement \eqref{eq:bound_cgdg} follows from the same arguments used in Proposition \ref{prop:bound} incorporating the inequality derived in equation \eqref{eq:coercivity2_cdg}. The reader is referred to \cite{engel2002continuous} for a step-by-step proof.
\end{proof}

Again, using the Lax-Milgram theorem together with Proposition \ref{prop:coercivity_cdg} and Proposition \ref{prop:bound_cdg}, we can conclude that the semi-discrete problem \eqref{eq:cgdg_form} has a unique solution.



\section{Numerical results}\label{sec:numerical_results}
In this section we assess the behavior of the proposed formulation for a variety of two and three-dimensional tests, analysing the accuracy, convergence and conservation properties, as well as comparing with analytical and experimental solutions that can be found in existing literature.

We start with a problem with analytical solution for the 2-dimensional case, that is the evolution of a floating infinite beam subject to an initial harmonic condition. After, we assess the performance of the method for finite beams with elastic joints in the frequency and time domains, followed by the study of a finite floating beam over irregular sea bed. Finally, we assess the behavior of a floating plate in a 3-dimensional domain and we show that the proposed approach is suitable to solve problems with structures with arbitrary shape.

\subsection{Implementation remarks}

The monolithic FE formulation and other algorithms used in the experiments below have been implemented using the Julia programming language \cite{Bezanson2017} version 1.7 and the Gridap finite element library \cite{Badia2020} version 0.17. Gridap is a free and open-source finite element library fully implemented in Julia. One of its main distinctive features is its user interface, which has a high-level syntax that resembles the notation used to define weak forms mathematically. Internally, Gridap leverages the Julia JIT compiler to generate an efficient finite assembly loop from the user input automatically \cite{Verdugo2021}, which results in efficient and easy to write user code. The formulations presented in this paper can be easily implemented using the high-level interface of Gridap in a convenient way. See, e.g., Figure \ref{fig:code_sample} that contains the implementation of the numerical example in Section \ref{sec:inf_beam_time}. Note that, even though the proposed monolithic formulation is rather complex, its implementation in Gridap can be done in few lines of code. In particular, the definition of the weak form is very compact and has a clear connection with the corresponding mathematical notation. We have taken advantage of the Gridap support for multi-field PDEs and the possibility to combine interpolation spaces defined on geometries with different spatial dimensions. A crucial feature for the implementation of the monolithic formulation is the capacity of Gridap to integrate weak forms on domains different from the ones used to define the interpolation spaces. In particular, this makes possible to integrate the jump terms on $\Lambda$. This computation is particularly challenging from an implementation point of view since it involves geometries with three different spatial dimensions. E.g, in 2D, $\Lambda$ is a 0-dimensional domain, the elevation $\eta$  is defined on the 1-dimensional domain $\Gamma_\mathrm{str}$, and the code that implements the interpolation space for $\eta$ is aware that $\Gamma_\mathrm{str}$ is on the boundary of the 2-dimensional domain $\Omega$. This last implementation ingredient is used internally to compute the terms that involve operations  between the elevation $\eta$ and the potential $\phi$ since the latter is defined on $\Omega$. To our best knowledge, other general-purpose finite element libraries are not able to handle this particular case, at least via such a compact high-level user interface. E.g., at the time of writing FEniCS is able to integrate weak forms on geometries with $d$ and $d-1$ space dimensions at most \cite{DaversinCatty2021} and, thus, cannot be used to implement the formulation of this paper easily, which requires geometries with $d$, $d-1$, and also $d-2$ space dimensions. The numerical results below have been computed on a laptop with an Intel(R) Core(TM) i7-8665U CPU at 1.90GHz with approximately 16GiB of RAM and should be reproducible on a machine with similar characteristics. The software used to generate the results presented in this section is available at the registered Julia package \textit{MonolithicFEMVLFS.jl} \cite{Colomes_MonolithicFEMVLFS_2022}.

\begin{figure}[pos=h!]
	\centering
	\includegraphics[width=\textwidth]{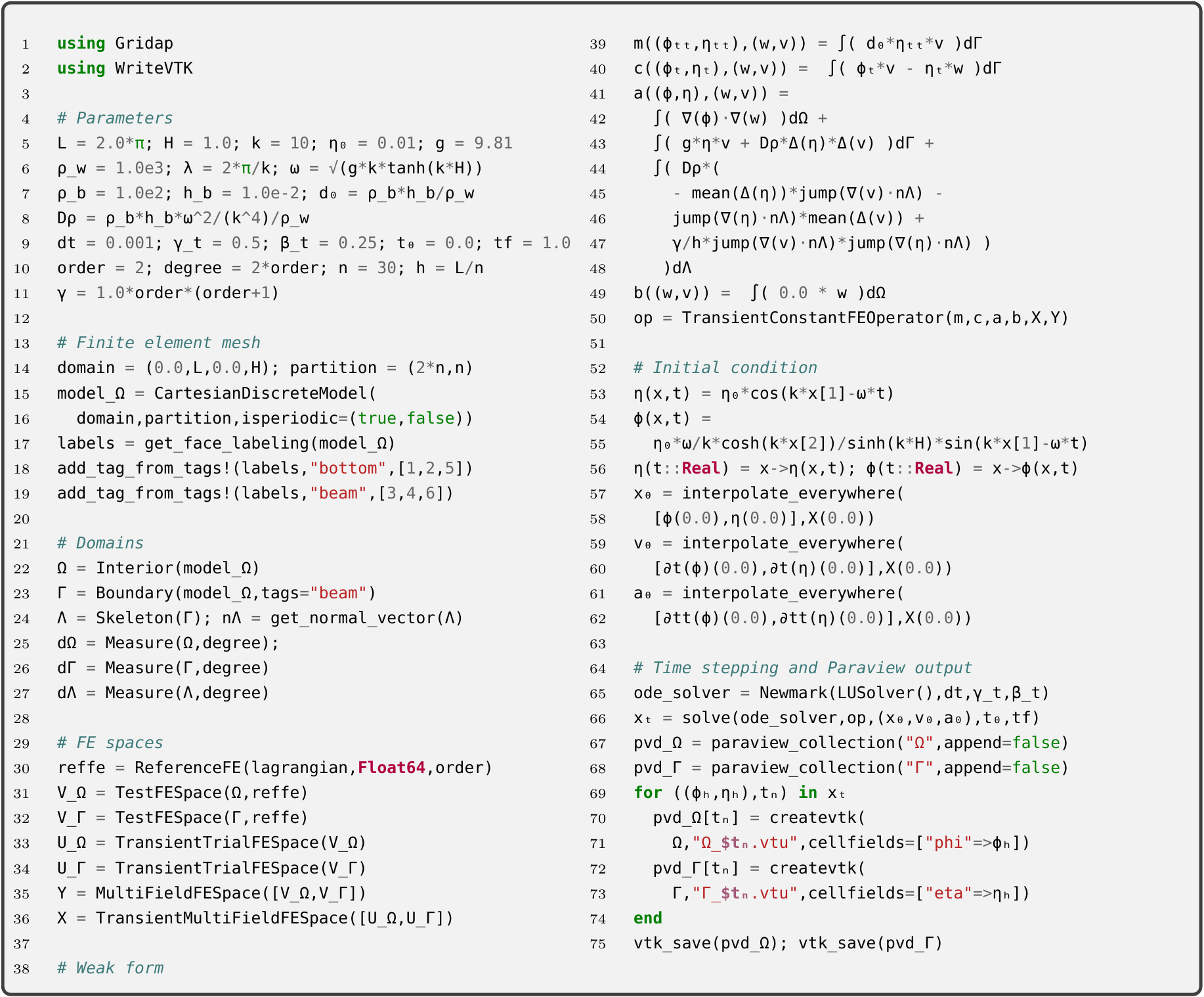}
	\caption{Implementation of the numerical example in Section \ref{sec:inf_beam_time} using Julia and the high-level user interface of Gridap.}
	\label{fig:code_sample}
\end{figure}

\subsection{Infinite beam in time domain} \label{sec:inf_beam_time}
Let us consider an infinite beam floating on top of an infinite 2-dimensional potential flow domain. This setting is achieved by considering periodic boundary conditions in the vertical boundaries, $ \Gamma_{\scriptsize periodic} $, see Figure \ref{fig:infinite_beam}. Here we set $ L=2\pi\ \mbox{m} $ and $ H=1.0\ \mbox{m} $. 
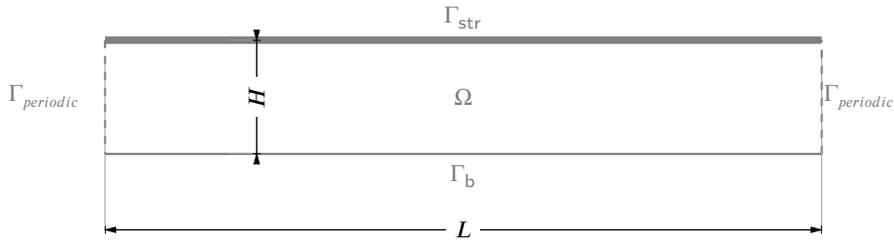
\begin{figure}[pos=h!]
	\centering
	\begin{tikzpicture}
	\draw [dashed, line width=0.03cm] (0,1.5) -- (0,0);
	\draw [line width=0.03cm] (0,0) -- (1.5*6.3,0);
	\draw [dashed, line width=0.03cm]  (1.5*6.3,0) -- (1.5*6.3,1.5);
	\draw[line width=0.1cm](1.5*6.3,1.5) -- (0,1.5);
	\draw (0.75*6.3,-0.3) node{$ \Gb $};
	\draw (1.5*6.3+0.5,0.75) node{$  \Gamma_{\scriptsize periodic} $};
	\draw (-0.8,0.75) node{$  \Gamma_{\scriptsize periodic} $};
	\draw (0.75*6.3,1.5+0.3) node{$\Gs$};
	\draw (0.75*6.3,0.75) node{$\Omega$};
	\node at (0,0) (na) {};
	\node at (1.5*6.3,0) (nb) {};
	\node at (2,0) (nc) {};
	\node at (2,1.5) (nd) {};
	\dimline[line style = {line width=0.5},
	extension start length=-0.24,
	extension end length=-0.24] {($ (na) + (0,-1) $)}{($ (nb) + (0,-1) $)}{$L$};
	\dimline[line style = {line width=0.5,
	arrows=dimline reverse-dimline reverse},
	extension start length=-0.24,
	extension end length=-0.24] {(nc)}{(nd)}{$H$};
	\end{tikzpicture}
	\caption{Infinite beam geometry definition.}
	\label{fig:infinite_beam}
\end{figure}
Let us also consider a traveling wave given by the following surface elevation and potential flow expressions:
\begin{subequations}\label{eq:exact}
	\begin{align}
		\label{eq:exact_phi}
		\phi((x,y),t) =& -\frac{\eta_0\omega}{k_\lambda}\frac{\cosh(k_\lambda y)}{\sinh(k_\lambda  H)}\sin(k_\lambda  x-\omega t),\\
		\label{eq:exact_eta}
		\eta((x,y),t)=&\eta_0\cos(k_\lambda x-\omega t).
	\end{align}
\end{subequations}
Where $ k_\lambda $ is the wavenumber, $ \omega $ the wave frequency.
With these definitions, it can be seen that the equations \Eq{eq:exact_phi}-\Eq{eq:exact_eta} satisfy equations \Eq{eq:pflow} and \Eq{eq:dynamic_linear_bc_s} when $ D=\frac{\rho_bh_b\omega^2}{k_\lambda^4} $. Moreover, if $ \omega=\sqrt{gk_\lambda\tanh(k_\lambda H)} $, the kinematic boundary conditions \Eq{eq:kinematic_bc} are also satisfied. Therefore, in this section we use the previous definitions for $ D $ and $ \omega $, with the remaining parameter values as given in Table~\ref{tab:exact_solution_properties}. In Figure~\ref{fig:periodic_beam} we show the velocity potential and surface elevation fields, $ \phi_{h} $ and $ \eta_{h} $, at four different times, $ t={0.0,\frac{T}{4},\frac{T}{2},\frac{3T}{4}} $, with $ T\eqdef\frac{2\pi}{\omega} $ the wave period.
 \begin{table}[pos=h]
 	\centering
 	\label{tab:exact_solution_properties}
 	\caption{Infinite beam test parameters.}
 	\begin{tabular}{lccc}
 		Parameter&Symbol&Value&Units\\ \hline
 		Water density&$\rho_w$&$1.0e3$&$\mbox{kg}/\mbox{m}^3$\\
 		Structure density&$\rho_b$&$1.0e2$&$\mbox{kg}/\mbox{m}^3$\\
 		Structure thickness&$h_b$&$1.0e\text{-}2$&$\mbox{m}$\\
 		Gravity acceleration&$g$&9.81&$\mbox{m}/\mbox{s}^2$\\
 		Surface elevation&$\eta_0$&0.01&$\mbox{m}$\\
 	\end{tabular}
 \end{table}
\begin{figure}[pos=h!]
\centering
\begin{tabular}{rcl}
	\multirow{ 4}{*}{\includegraphics[clip=true,trim=0 0 50cm 0,width=0.15\textwidth]{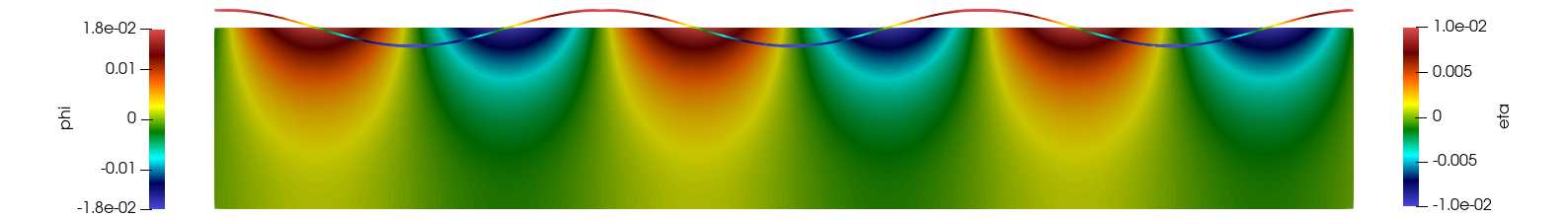}}
	&\includegraphics[clip=true,trim=8cm 0 8cm 0,width=0.6\textwidth]{PeriodicBeam_0.0.png} &\multirow{ 4}{*}{\includegraphics[clip=true,trim=50cm 0 0 0,width=0.15\textwidth]{PeriodicBeam_0.0.png}}\\
	&\includegraphics[clip=true,trim=8cm 0 8cm 0,width=0.6\textwidth]{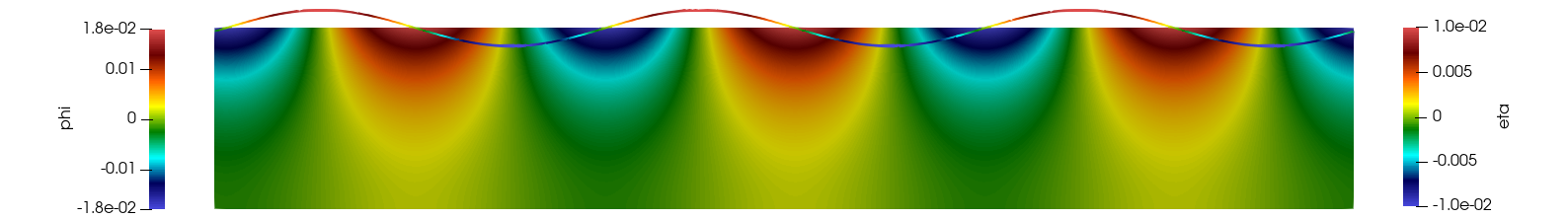}&\\
	&\includegraphics[clip=true,trim=8cm 0 8cm 0,width=0.6\textwidth]{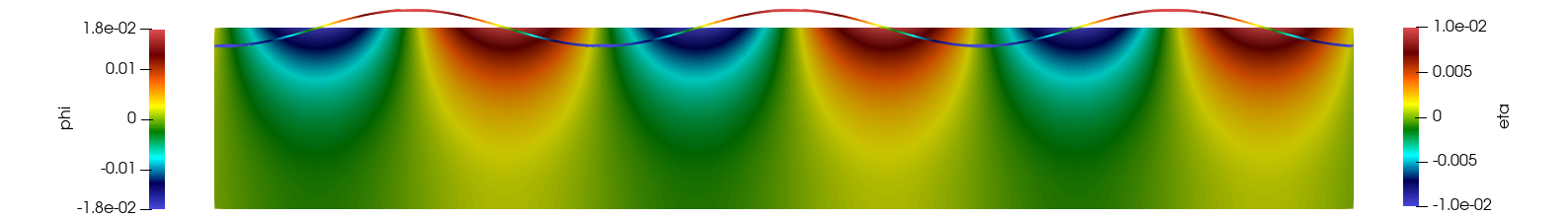}&\\
	&\includegraphics[clip=true,trim=8cm 0 8cm 0,width=0.6\textwidth]{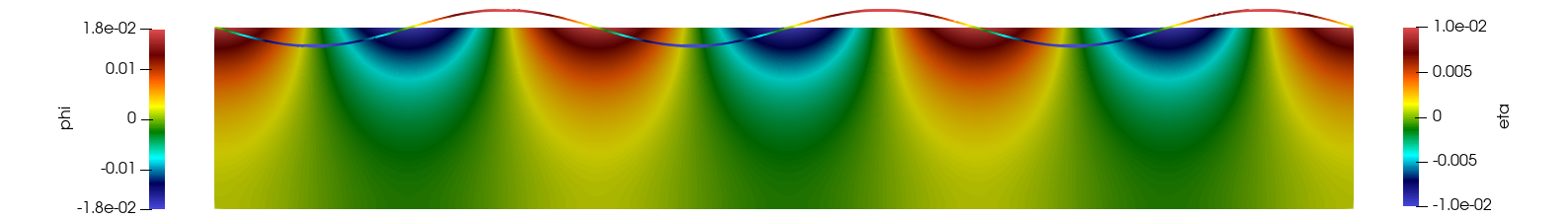}&
\end{tabular}
\caption{From top to bottom: velocity potential and surface elevation (magnified by a factor of 10), $ \phi_{h} $ and $ \eta_{h} $, at  $ t=0.0 $, $t=\frac{T}{4}$, $t=\frac{T}{2}$ and $t=\frac{3T}{4} $.}
\label{fig:periodic_beam}	
\end{figure}
The results shown in Figure~\ref{fig:periodic_beam} have been computed using order $ r=4 $, $ n_x=20 $ elements in the horizontal direction, $ n_y=10 $ elements in the vertical direction and a time step size of $ \Delta t=\frac{T}{50} $. 

\subsubsection{Convergence in space}
\label{subsec:convergence_space}
We first assess the convergence rate of the method by evaluating the $ L^2 $-norm of the solution error for different mesh sizes and polynomial orders. Given the solution to problem \Eq{eq:cgdg_t_form} at time $ t^{n+1}$, $ [\phi_h^{n+1},\eta_h^{n+1}] $, the error of the potential and surface elevation are
\begin{align}
\label{eq:l2norm_phi}
e_\phi^{n+1}&\eqdef\|\phi(t^{n+1})-\phi_h^{n+1}\|_\Omega,\\
\label{eq:l2norm_eta}
e_\eta^{n+1}&\eqdef\|\eta(t^{n+1})-\eta_h^{n+1}\|_{\Gs}.
\end{align}

In Figure \ref{fig:convergence_hp} we plot the potential and surface elevation errors, $ e_\phi^{n+1}  $ and $ e_\eta^{n+1} $, respectively, with respect to the number of elements in the horizontal direction. Here we use a uniform mesh with twice the number of elements in horizontal direction than in the vertical direction. The wavenumber  is set to $ k_\lambda=15 $ and we select a very small time step size, $ \Delta t = 1.0e\text{-}6 $, with a final time $ t=1.0e\text{-}4 $. The choice of such a small time step is to avoid pollution of the error by the time discretization, especially for the finest mesh and higher polynomial degree.

As expected, in Figure \ref{fig:convergence_hp}, we see that both errors, $ e_\phi $ and $ e_\eta $, converge with the expected order of convergence, i.e. $ \mathcal{O}(h^{k+1}) $ for reference FE of order $ k $.
\begin{figure}[pos=h!]
	\centering
	\includegraphics[width=0.45\textwidth]{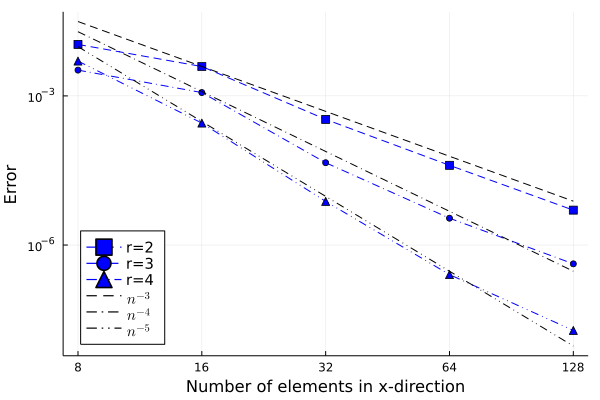}
	\includegraphics[width=0.45\textwidth]{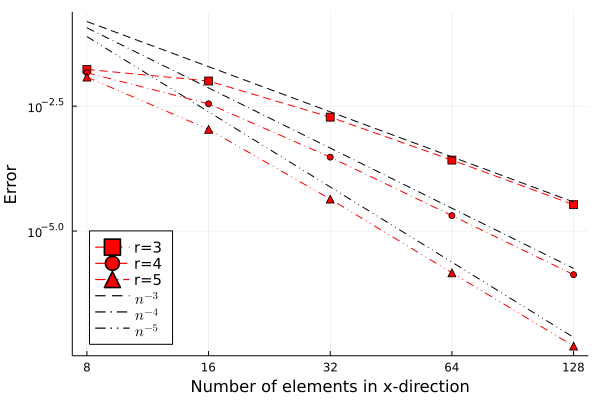}
	\caption{Evolution of the L$2$-norm of the potential error $ e_\phi $ (left), and surface elevation error $ e_\eta $ (right) for different element sizes and element orders, $r=\{2,3,4\} $.}	
	\label{fig:convergence_hp}
\end{figure}

In some cases, to reduce the computational burden, one might be tempted to reduce the order of the FE space for the velocity potential at the interior of the fluid domain and keep a higher order FE space for the surface elevation. The formulation proposed in this work enables different order of interpolation for the different spaces, as long as the trace of the FE space of the velocity potential belongs to the FE space of the free surface elevation, as noted in the proof of proposition~\ref{prop:energy}. Here we stress this case by selecting a 2nd order piece-wise polynomial for $ \hat{\V}_h $ and varying polynomial order for $ \hatV_{\Gf,h} $ and $ \hatV_{\Gs,h} $. In Figure~\ref{fig:convergence_hp_fixed_order} we see that, even when keeping fixed the polynomial order for the velocity potential to 2nd order, the order of convergence for the surface elevation is not affected.

\begin{figure}[pos=h!]
	\centering
	\includegraphics[width=0.45\textwidth]{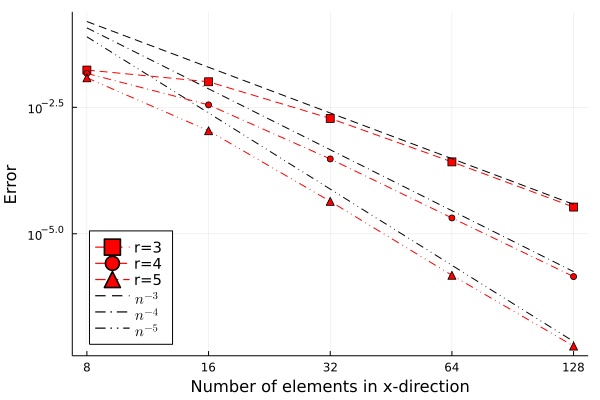}
	\caption{Evolution of the L$2$-norm of surface elevation error $ e_\eta $ (right) for different element sizes and element orders, $r=\{2,3,4\} $, keeping the order of $ \hatV_h $ fixed to $ r=2 $.}	
	\label{fig:convergence_hp_fixed_order}
\end{figure}

\subsubsection{Convergence in time}
We use the same setting as defined in sub-section \ref{subsec:convergence_space} to assess the convergence properties of the time discretization. Here we use a 4th order polynomial space with a mesh of 128 elements in the horizontal direction and 64 in the vertical direction. Since we want to minimize the spatial error, here we use a wave number of $ k_\lambda=1 $. In Figure \ref{fig:convergence_dt} we depict the potential and surface elevation errors, $ e_\phi $ and $ e_\eta$ at $ t=1.0 $ using different time step sizes.
\begin{figure}[pos=h!]
	\centering
	\includegraphics[width=0.45\textwidth]{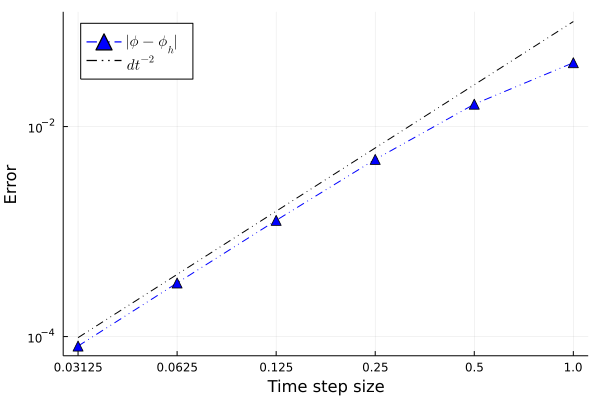}
	\includegraphics[width=0.45\textwidth]{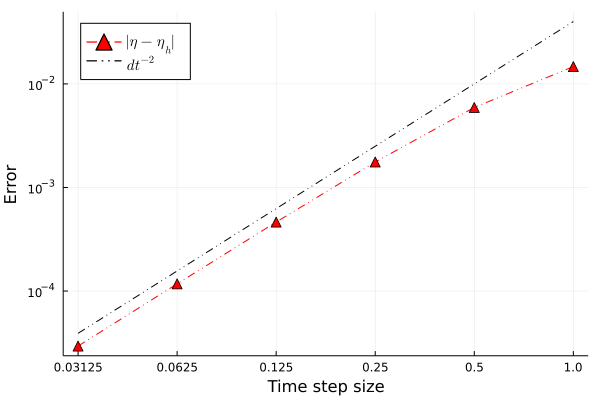}
	\caption{Evolution of the L$2$-norm of the potential error $ e_\phi $ (left), and surface elevation error $ e_\eta $ (right) for different time step sizes.}	
	\label{fig:convergence_dt}
\end{figure}
Again, Figure \ref{fig:convergence_dt} shows that the convergence rate of the solution is $ \mathcal{O}(\Delta t^2) $, as prescribed by the Newmark-beta method with the pair of parameters $ \gamma=0.5 $ and $ \beta=0.25 $.

\subsubsection{Energy conservation}
We also assess the energy conservation properties of the proposed approach. Here we evaluate the relative energy error, $ e_E\eqdef|\Et-\Eth|/\Et $ with $ \Eth $ the total energy computed using the discrete solution, during $ 10 $ wave periods. That is,a final time $ t=10T=5.18 $s. The total energy of the initial condition can be computed from the kinetic, potential and elastic contributions from the velocity potential, free surface elevation and beam deflection, \textit{i.e.}
\begin{align*}
	\Ekf=&\frac{1}{2}\|\nabla\phi\|_{\Omega}^2 = \frac{1}{4}g\eta_0^2L,\\
	\Eks=&\frac{1}{2}d_0\|\eta_t\|_{\Gs}^2=\frac{1}{4}d_0\omega^2\eta_0^2L,\\
	\Epf=&\frac{1}{2}g\|\eta\|_{\Gf\cup\Gs}^2=\frac{1}{4}g\eta_0^2L,\\
	\Ees=&\frac{1}{2}\Drho\|\Delta\eta\|_{\Gs}^2=\frac{1}{4}\Drho k^4\omega^2\eta_0^2L,\\
	\Et\eqdef&\Ekf+\Eks+\Epf+\Ees=\frac{1}{2}g\eta_0^2L+\frac{1}{4}\eta_0^2L\left(d_0\omega^2+\Drho k^4\right) = \frac{1}{2}(g+d_0\omega^2)\eta_0^2L.
\end{align*}

We use the same setting as Section \ref{subsec:convergence_space} and we select two different cases modifying the time step size and the mesh size, with: 
\begin{itemize}
	\item case 1: $ n_x=\left\{16,32,64,128\right\} $, $ r=4 $ and $ \Delta t=1.0e\text{-}3  $;
	\item case 2: $ n_x=128 $, $ r=4 $, $ \Delta t=\left\{10T/4, 10T/8, 10T/16, 10T/32, 10T/64\right\} $ with a final time of $ t=T $.
	\end{itemize}
\begin{figure}[pos=h!]
	\centering
	\includegraphics[width=0.5\textwidth]{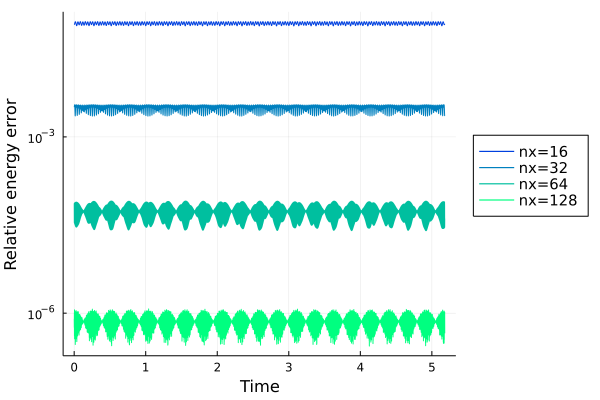}
	\includegraphics[width=0.45\textwidth]{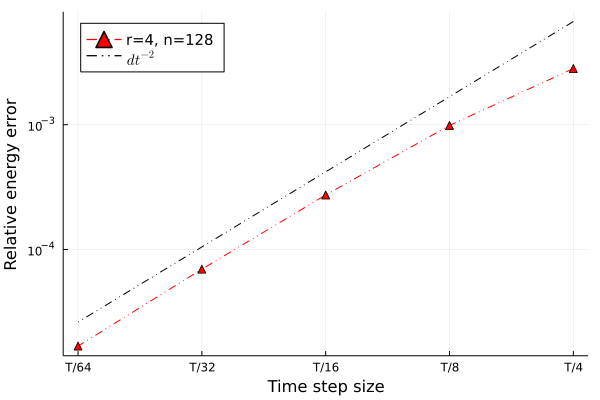}
	\caption{Relative error, $e_E$, evolution in time for case 1 (left) and error convergence with time step size for case 2 (right). }	
	\label{fig:energy_error}
\end{figure} 
In Figure~\ref{fig:energy_error} we depict the evolution of the relative energy error in time for the case 1 (Figure~\ref{fig:energy_error} left) and the convergence of the error with respect to the time step size (Figure~\ref{fig:energy_error} right). We see that when refining the mesh the energy error decreases. We also see that the relative error does not increase as time evolves, denoting that energy is conserved with the proposed formulation. We note that the oscillatory behaviour in time seen in the figure on the left is caused by the numerical error introduced by the gradient jump terms appearing in equation~\eqref{eq:cgdg_form}. This oscillations can be reduced by increasing the penalty parameter $ \gamma $, which is set to $ \gamma=10.0r(r+1) $ in this test. We also see that when refining the time step size, increasing the order and increasing the number of elements, the total energy error is reduced. 

\subsubsection{Energy conservation in a finite beam}
In the previous subsection we have assessed the energy conservation properties of an infinite floating beam. Here we extend this analysis to the case where we have a boundary composed by a finite beam and a free surface. To this end, we use the same periodic setting as in the previous section, but with a beam of size $ L_b=\pi $ located at the center of the domain.
\begin{figure}[pos=h!]
	\centering
	\includegraphics[width=0.5\textwidth]{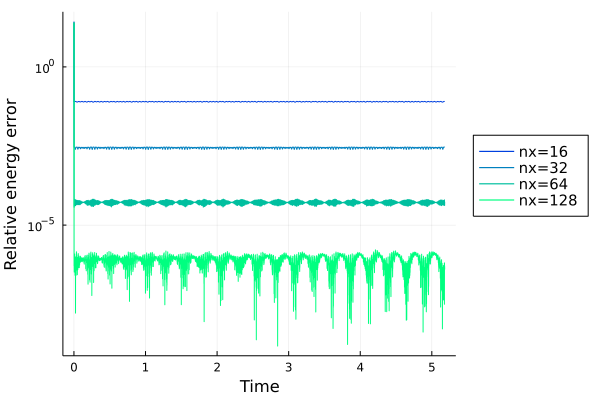}
	\includegraphics[width=0.45\textwidth]{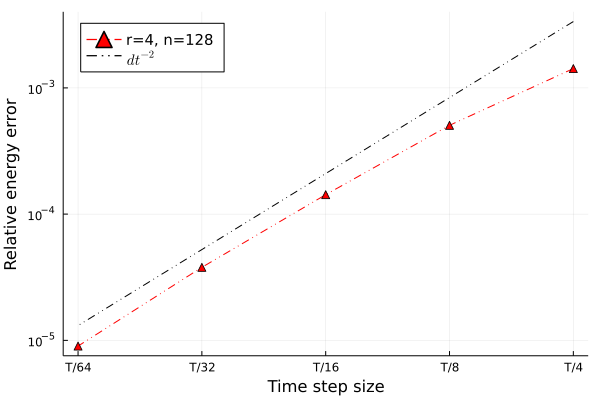}
	\caption{Relative error, $e_E$, evolution in time for case 1 (left) and error convergence with time step size for case 2 (right). }	
	\label{fig:energy_error_finite_beam}
\end{figure} 
Again, in Figure~\ref{fig:energy_error_finite_beam}  we see that error is not increasing in time and decreases when we refine the mesh (left), while it also converges with the expected rate as the time step is decreased (right).

\subsection{Floating beam with elastic joint} \label{sec:elastic_joint}
Once analysed the behaviour of the proposed formulation in time domain for infinite and finite beams, we now proceed to assess the formulation for the case of a floating beam with a joint and varying stiffness. Here we will solve the setting proposed by \cite{khabakhpasheva2002hydroelastic} and also tested by \cite{riyansyah2010connection} in the frequency domain. The geometry of this test is given in Figure \ref{fig:khabakhpasheva_geometry} and the input parameters are defined in Table \ref{tab:khabakhpasheva_properties}.
\begin{figure}[pos=h!]
	\centering
	\includegraphics[width=0.6\textwidth]{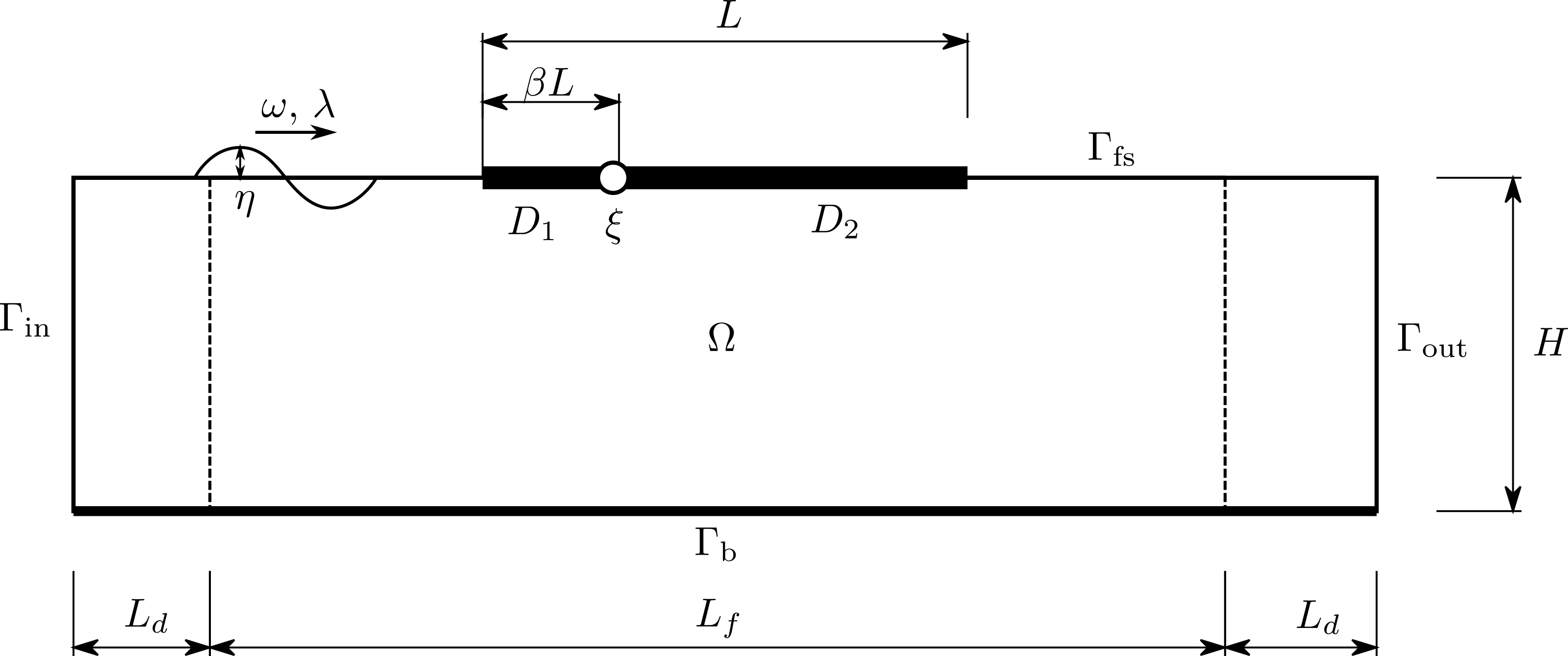}
	\caption{Sketch with the definition of the geometry used in \cite{khabakhpasheva2002hydroelastic}.}	
	\label{fig:khabakhpasheva_geometry}
\end{figure}
\begin{table}[pos=h]
	\centering
	\caption{Khabakhpasheva \textit{et al.} test parameters.}
	\label{tab:khabakhpasheva_properties}
	\begin{tabular}{lccc}
		Parameter&Symbol&Value&Units\\ \hline
		Draft&$d_0$&$8.1561e\text{-}3$&$\mbox{m}$\\
		Structure length&$L$&$12.5$&$\mbox{m}$\\
		Fluid domain length&$L_f$&$25$&$\mbox{m}$\\
		Tank depth&$H$&$1.1$&$\mbox{m}$\\
		Connection location parameter&$ \beta $&$ 0.2 $& - \\
		Connection rotational stiffness parameter&$ \xi $& $ 0 $ and $ 625 $& - \\
		Structure 1 rigidity&$D_1$&$47100$&$\mbox{N m}$\\
		Structure 2 rigidity&$D_2$&$471$&$\mbox{N m}$\\
		Gravity acceleration&$g$&9.81&$\mbox{m}/\mbox{s}^2$\\
		Wavelength-to-beam length ratio&$\alpha$&0.249&-\\
	\end{tabular}
\end{table}
In this test we use the formulation for floating beams defined in~\eqref{eq:bilinear_modified_h_cgdg_beam}, with the following kinematic boundary conditions:
\begin{subequations}
	\label{eq:kinematic_bc_khabakhpasheva}
	\begin{align}
		\label{eq:kinematic_bc_b_khabakhpasheva}
		\n\cdot\nphi&=0\quad\mbox{on }\Gb,\\
		\label{eq:kinematic_bc_i_khabakhpasheva}
		\n\cdot\nphi&=-\omega\eta_0\frac{\cosh(k_\lambda y)}{\sinh(k_\lambda H)}\cos(k_\lambda x-\omega t)\quad\mbox{on }\Gi,\\
		\label{eq:kinematic_bc_o_khabakhpasheva}
		\n\cdot\nphi&=0\quad\mbox{on }\Go.
	\end{align}
\end{subequations}

Condition \eqref{eq:kinematic_bc_i_khabakhpasheva} enforces an incoming wave, as defined by \eqref{eq:exact}, on $ \Gi $. For this problem an incoming wave length of $ \lambda\eqdef\alpha L $ is defined, with $ \alpha=0.249 $, and a wave frequency of $ \omega=\sqrt{gk_\lambda\tanh(k_\lambda H)} $. In addition, we define a damping zone at the inlet of the tank and at the outlet of the tank of length $ L_d=L\approx4\lambda $, where damping terms are added to the free surface dynamic and  kinematic boundary conditions according to \cite{kim2014numerical}, resulting in 
\begin{subequations}\label{eq:damping_bcs}
	\begin{align}\label{eq:kinematic_bc_fs_damping}
		&\n\cdot\nabla\phi=\eta_t+\mu_2(\eta-\eta^*)\qquad\mbox{on }\Gf,\\
		&\phi_t+g\eta+\mu_1(\nabla\phi\cdot\n-\nabla\phi^*\cdot\n)=0\qquad\mbox{on }\Gf.
	\end{align}
\end{subequations}
With 
\begin{align*}
\mu_1(x) &=\begin{cases}
\mu_0\left[1-\sin\left(\frac{\pi}{2}\frac{x}{L_d}\right)\right] & \mbox{if }x_{d,in}<x,\\
\mu_0\left[1-\cos\left(\frac{\pi}{2}\frac{x-x_d}{L_d}\right)\right] & \mbox{if }x>x_{d,out},\\
0&\mbox{otherwise},
\end{cases} \\
\mu_2(x)&=k_\lambda\mu_1(x).
\end{align*}
We select $ \mu_0 = 2.5 $. The variables $ \phi^* $ and $ \eta^* $ are the values that we want to enforce at each damping zone, these are given by equation~\eqref{eq:exact} at the inlet and zero at the outlet.

The joint rotational stiffness is parametrized by an adimensional parameter, $ \xi $, such that $ k_\rho=\xi\Drho/L $. We consider two cases : a first case with a hinge, \textit{i.e.}  $ \xi=0 $, and a second case with a stiff elastic joint with $ \xi=625 $. 
In this test we use a mesh with elements of 4th order. We define 20 elements through the beam, \textit{i.e} 80 elements in the horizontal direction, and 5 elements in the vertical direction with exponential refinement close to the free surface.

\subsubsection{Results in frequency domain}
The problem sketched in \figurename~\ref{fig:khabakhpasheva_geometry} is first solved using the frequency domain approach, as defined in Section~\ref{subsubsec:freq_domain}. In Figure \ref{fig:khabakhpasheva_experiment} we plot the relative surface elevation, $ \eta/\eta_0 $ along the beam for the two cases, comparing with the results given by \cite{khabakhpasheva2002hydroelastic} and \cite{riyansyah2010connection}. We can see that for both cases, $ \xi=0 $ and $ \xi=635 $, the results that we obtain with the proposed monolithic scheme are in very good agreement with the other two works.
\begin{figure}[pos=h!]
	\centering
	\includegraphics[width=0.49\textwidth]{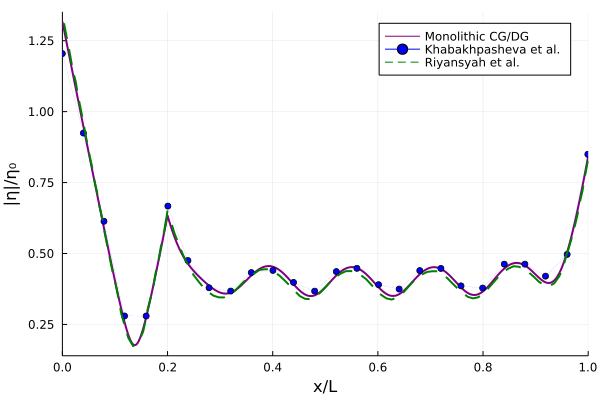}
	\includegraphics[width=0.49\textwidth]{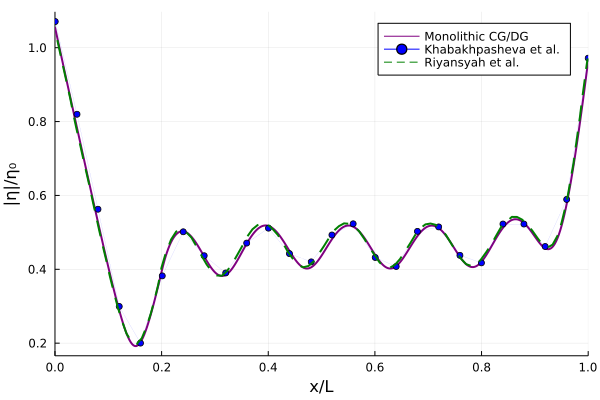}
	\caption{Relative surface elevation at the beam for the hinged case, $ \xi=0 $, (left) and the stiff joint, $ \xi=625 $, (right).}	
	\label{fig:khabakhpasheva_experiment}
\end{figure}

\begin{figure}[pos=h!]
\centering
\includegraphics[width=0.7\textwidth]{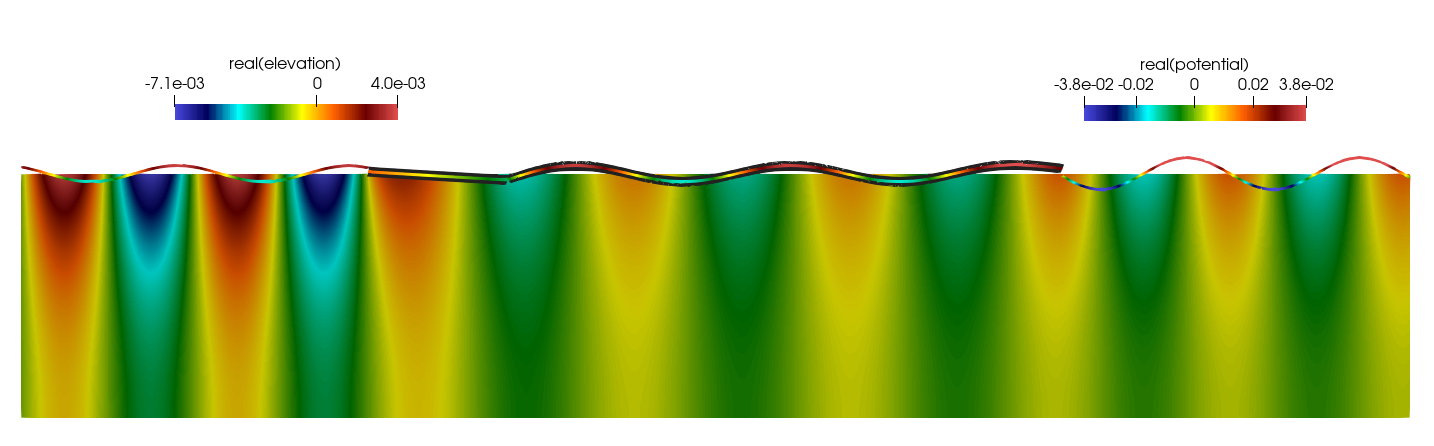}\\
\includegraphics[width=0.7\textwidth]{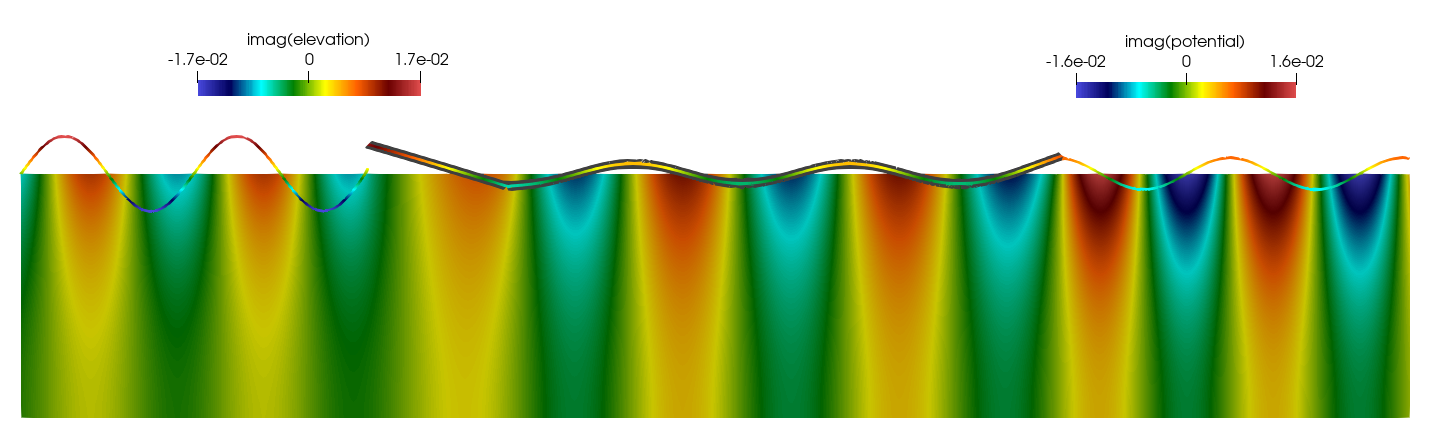}\\
\includegraphics[width=0.7\textwidth]{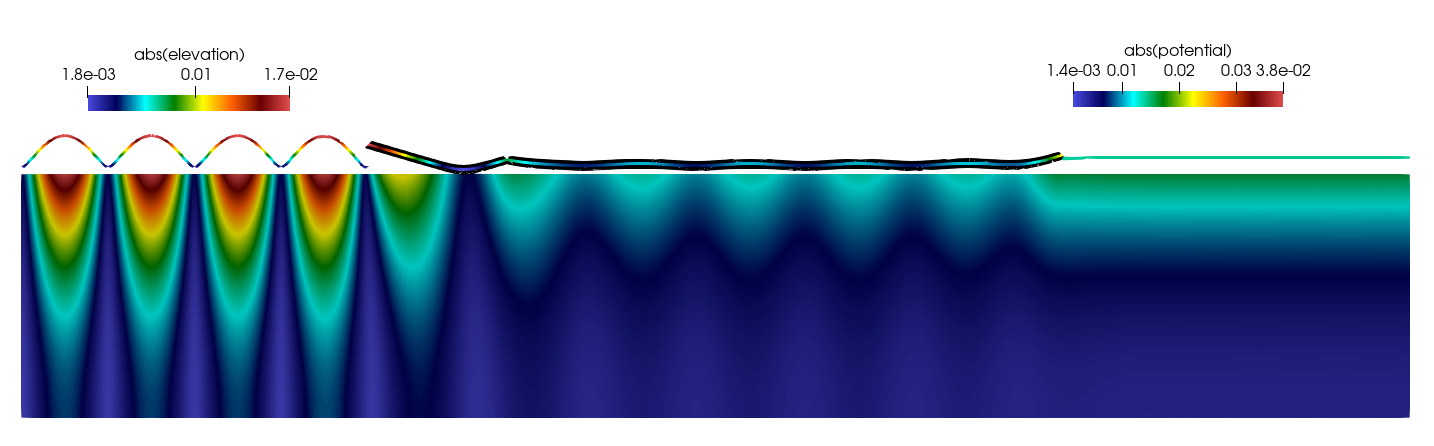}
\caption{Velocity potential, $ \phi_{h} $, and surface elevation, $ \eta_h $, for the case $ \xi=0 $. Real part (top), imaginary part (center) and absolute values (bottom). The vertical direction of the domain is scaled 4:1 and the surface elevation is scaled by 40. The beam region is shadowed in black.}	
\label{fig:khabakhpasheva_experiment_solution_xi0}
\end{figure} 
\begin{figure}[pos=h!]
\centering
\includegraphics[width=0.7\textwidth]{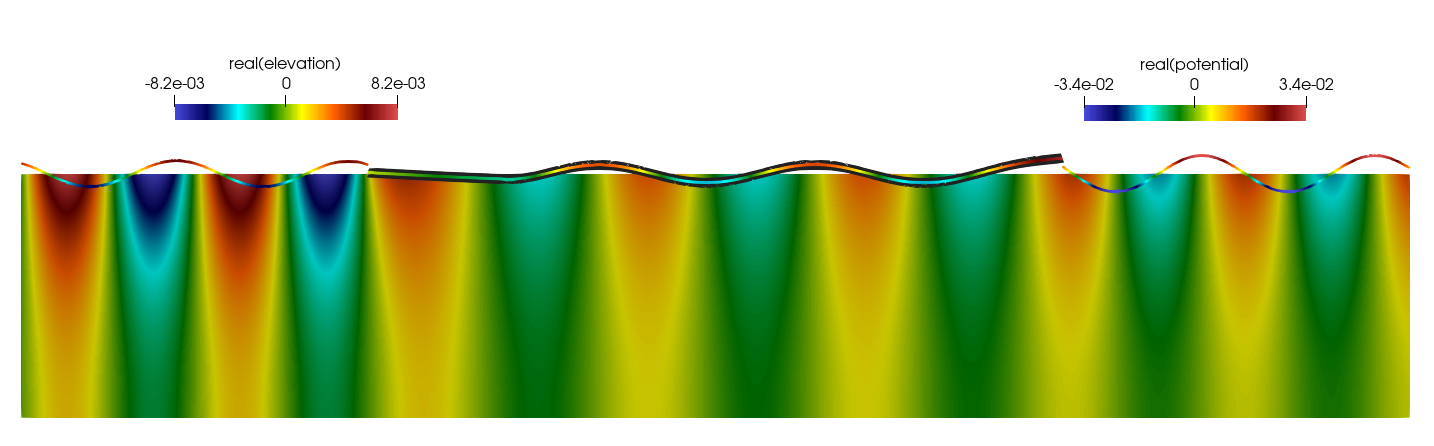}\\
\includegraphics[width=0.7\textwidth]{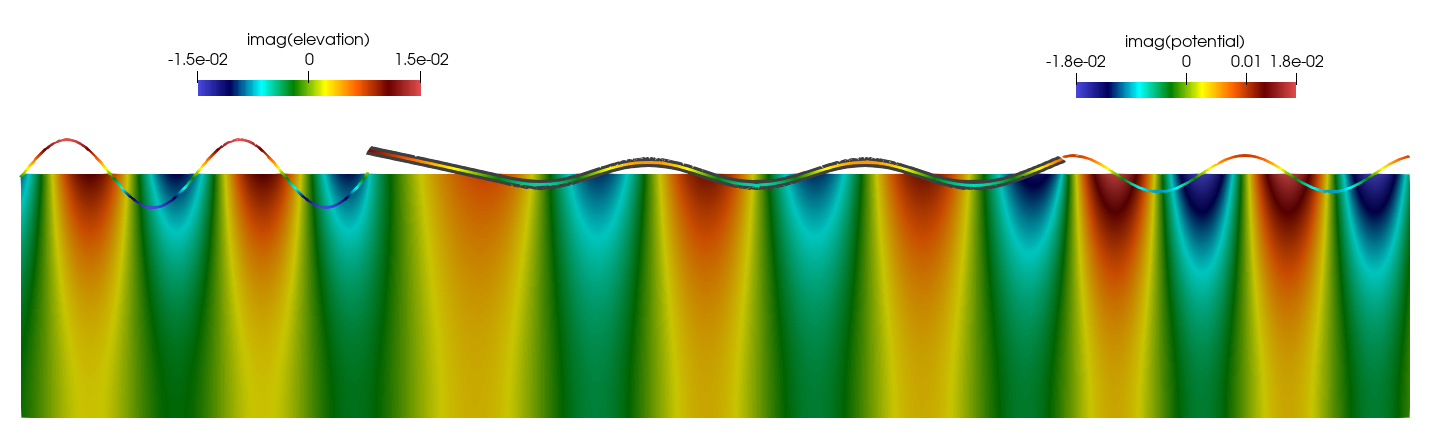}\\
\includegraphics[width=0.7\textwidth]{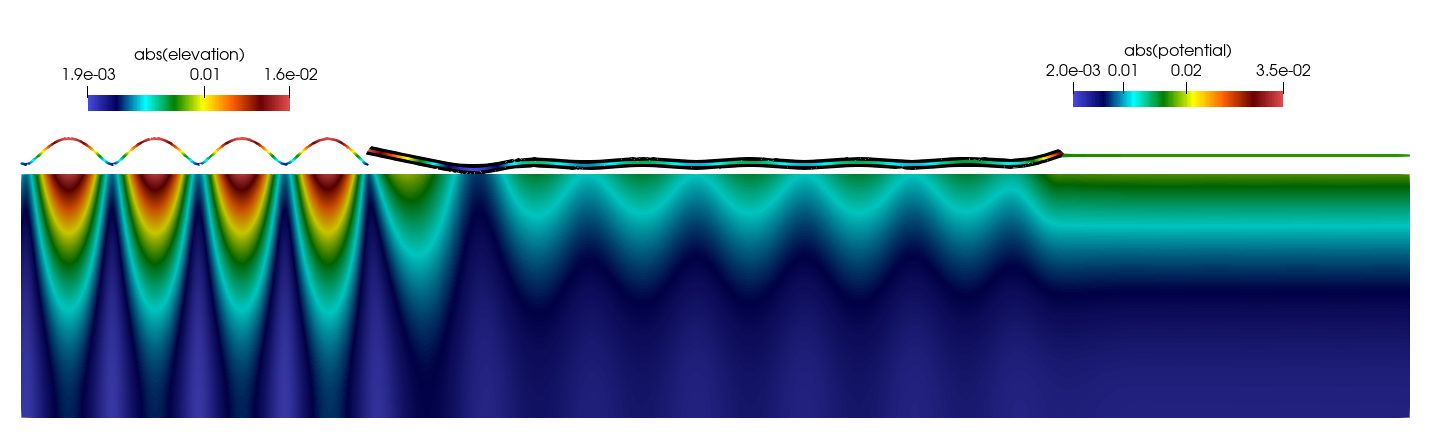}
\caption{Velocity potential, $ \phi_{h} $, and surface elevation, $ \eta_h $, for the case $ \xi=625 $. Real part (top), imaginary part (center) and absolute values (bottom). The vertical direction of the domain is scaled 4:1 and the surface elevation is scaled by 40. The beam region is shadowed in black.}
\label{fig:khabakhpasheva_experiment_solution_xi635}	
\end{figure} 
In Figures~\ref{fig:khabakhpasheva_experiment_solution_xi0} and \ref{fig:khabakhpasheva_experiment_solution_xi635} we show the real, imaginary and absolute values of the velocity potential and surface elevation fields for the case $ \xi=0 $ and $ \xi=625 $, respectively, excluding the damping zones. It is seen that the wave is stretched under the structure, specially in the most stiffer zone at the left of the joint. We also see that the hinged case, Figure~\ref{fig:khabakhpasheva_experiment_solution_xi0}, results in higher reflection at the left of the structure and a smaller transmitted wave amplitude at the right of the structure compared to the elastic joint, Figure~\ref{fig:khabakhpasheva_experiment_solution_xi635}.

\subsubsection{Results in time domain}
In this section we solve the floating beam with elastic joint test in the time domain. Here we will assess the behavior of the proposed formulation, as given in~\eqref{eq:bilinear_modified_h_t_cgdg_beam}. We use the same setting as defined in the frequency domain case. The problem is solved for $ t=[0,50T] $, where $ T=2\pi/\omega $,  with a time step size of $ \Delta t=T/40 $.

In \figurename~\ref{fig:khabakhpasheva_time_experiment} we plot the envelope of the normalized absolute value of the beam deflection for the two cases, $ \xi=0 $ and $ \xi=624 $, comparing with results from literature, \textit{i.e.} \cite{khabakhpasheva2002hydroelastic,riyansyah2010connection}. The envelope is computed accounting only for the results from $ t=[25T,50T] $, to avoid the transient effects from the initial stages of the simulation. In the same figure, we also depict the absolute value of the normalized beam deflection at different times, $ t=\{35.716,35.895,36.074,36.252\} $, to visualize the beam deformation along time.
\begin{figure}[pos=h!]
	\centering
	\includegraphics[width=0.49\textwidth]{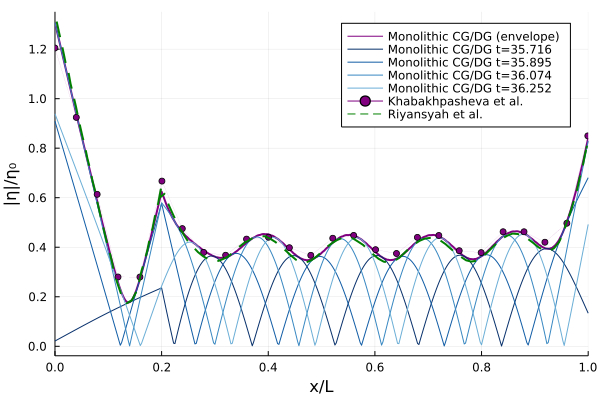}
	\includegraphics[width=0.49\textwidth]{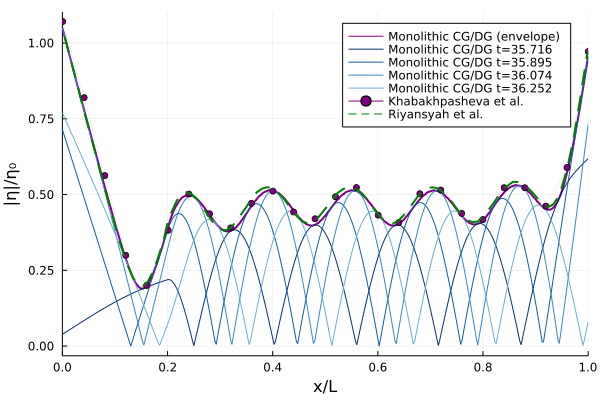}
	\caption{Relative surface elevation at the beam for the hinged case, $ \xi=0 $, (left) and the stiff joint, $ \xi=625 $, (right).}	
	\label{fig:khabakhpasheva_time_experiment}
\end{figure} 

Again, in \figurename~\ref{fig:khabakhpasheva_time_experiment} we see that the results obtained with the proposed monolithic formulation are in very good agreement with the results appearing in the literature. We can also clearly observe the effect of the joint and different beam rigidities. 

In Figures~\ref{fig:khabakhpasheva_experiment_solution_xi0_time} and \ref{fig:khabakhpasheva_experiment_solution_xi625_time} we plot the velocity potential and surface elevation fields at different times, $ t={25.2\mbox{s},25.6\mbox{s},26.0\mbox{s}} $. Looking at the color scales of both figures, which are bounded by the overall simulation time maxima and minima, we can see that the hinged case, $ \xi=0 $ results in larger wave elevations at the front of the platform. This is caused by higher reflected wave amplitude, resulting in a smaller transmitted wave. Another phenomena that can be observed in Figures~\ref{fig:khabakhpasheva_experiment_solution_xi0_time} and \ref{fig:khabakhpasheva_experiment_solution_xi625_time} is the wave stretching under the platform, where we see that the wavelength is greater than the incoming wavelength.
\begin{figure}[pos=h!]
	\centering
	\includegraphics[width=0.7\textwidth]{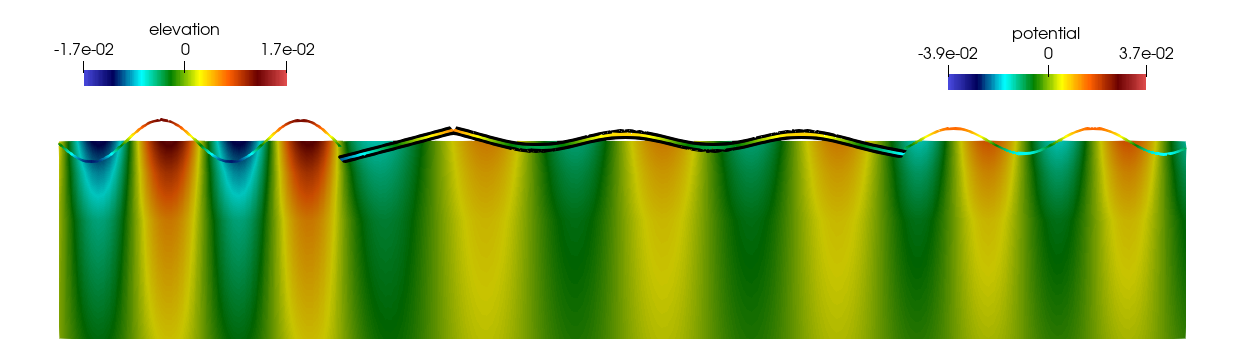}\\
	\includegraphics[clip=true,trim=0 0 0 4cm,width=0.7\textwidth]{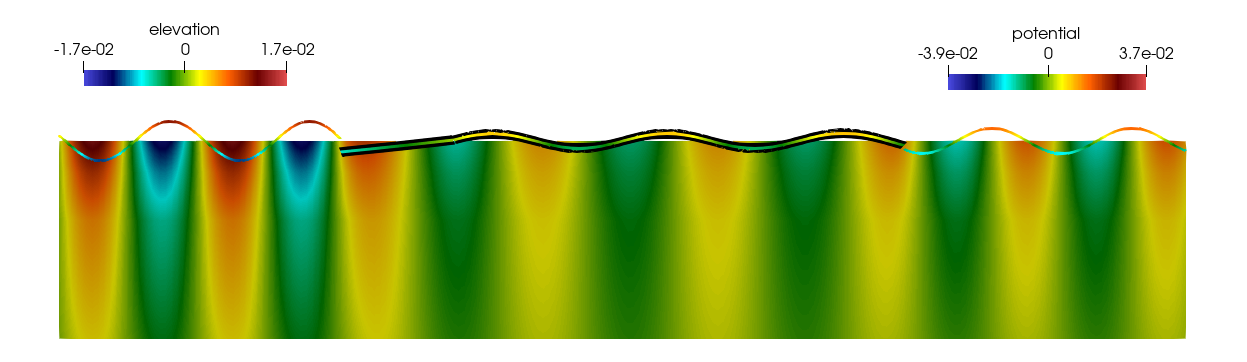}\\
	\includegraphics[clip=true,trim=0 0 0 4cm,width=0.7\textwidth]{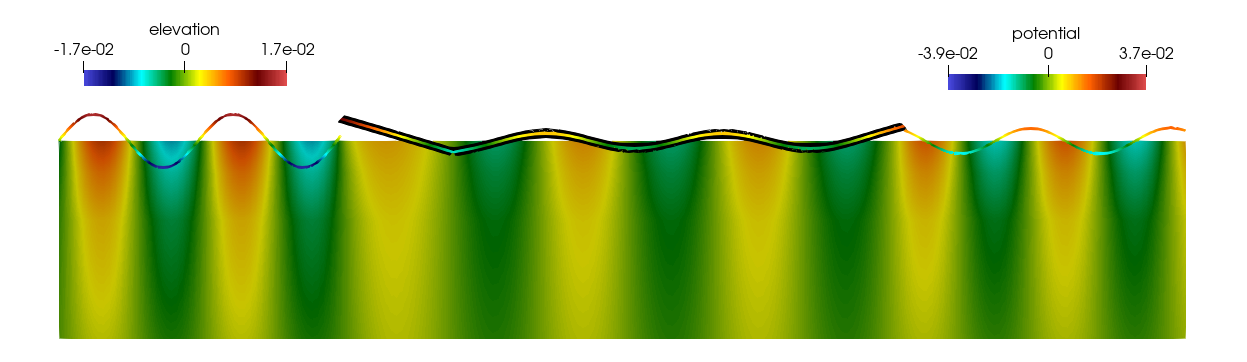}
	\caption{Velocity potential, $ \phi_{h} $, and surface elevation, $ \eta_h $, for the case $ \xi=0 $ at $ t=25.2 $s (top), $ t=25.6 $s (center) and $ t=26.0 $s (bottom). The vertical direction of the domain is scaled 4:1 and the surface elevation is scaled by 40. The beam region is shadowed in black.}	
	\label{fig:khabakhpasheva_experiment_solution_xi0_time}
\end{figure} 
\begin{figure}[pos=h!]
	\centering
	\includegraphics[width=0.7\textwidth]{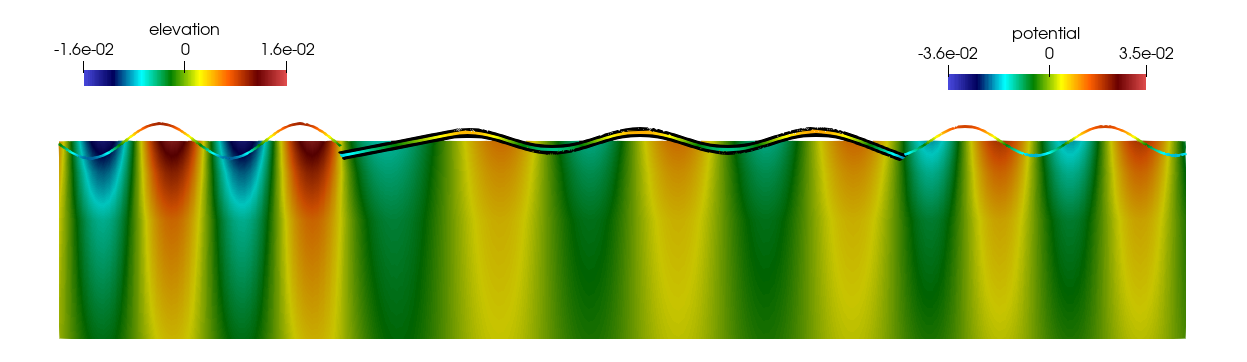}\\
	\includegraphics[clip=true,trim=0 0 0 4cm,width=0.7\textwidth]{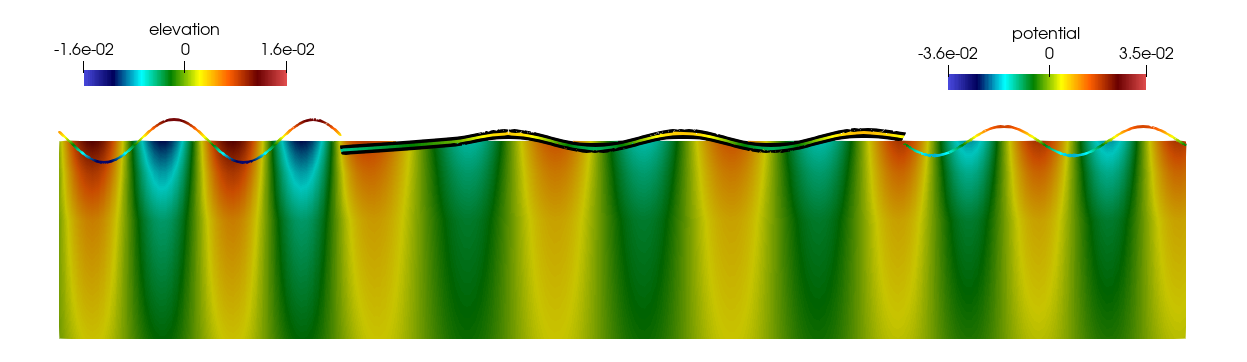}\\
	\includegraphics[clip=true,trim=0 0 0 4cm,width=0.7\textwidth]{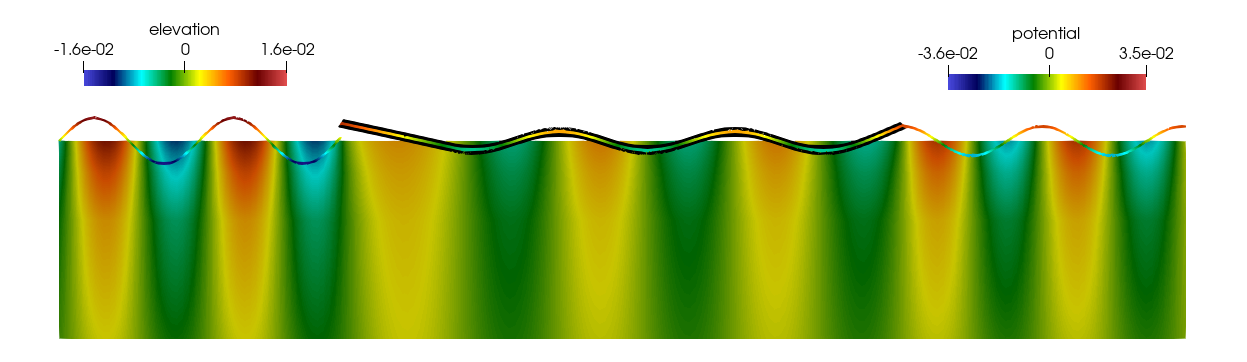}
	\caption{Velocity potential, $ \phi_{h} $, and surface elevation, $ \eta_h $, for the case $ \xi=625 $ at $ t=25.2 $s (top), $ t=25.6 $s (center) and $ t=26.0 $s (bottom). The vertical direction of the domain is scaled 4:1 and the surface elevation is scaled by 40. The beam region is shadowed in black.}	
	\label{fig:khabakhpasheva_experiment_solution_xi625_time}
\end{figure} 

\subsection{Floating beam in irregular sea bed}
In this section we assess the behavior of the proposed approach for a case with non-flat sea bed. We show that the formulation defined in Section~\ref{sec:formulation} is not limited to the case of constant bathymetry by analysing the test proposed in~\cite{liu2020dmm}. In addition, we also demonstrate that the proposed formulation is suitable for domains discretized using unstructured grids. This is specially relevant for the case of non-constant bathymetry and/or structures with arbitrary shape.

In particular, here we will solve the test case for a floating beam over a sloping seabed. In this case, the bathymetry is constant over the domain, except for the region where the floating beam is located, where a linearly varying depth with constant slope of $ \beta $ is considered. Hence, the water depth is given by
\begin{equation}\label{eq:liu_water_depth}
	H_x(x)=\begin{cases}
		H_l&\mbox{if }x\leq x_{\scriptsize\mbox{b,l}},\\
		H_l-\frac{x-x_{\scriptsize\mbox{b,0}}}{L}(H_l-H_r)&\mbox{if } x_{\scriptsize\mbox{b,l}} < x < x_{\scriptsize\mbox{b,r}}, \\
		H_r&\mbox{if }x_{\scriptsize\mbox{b,r}\leq x }.
	\end{cases}
\end{equation}
Where $ x_{\scriptsize\mbox{b,l}} $ is the most left coordinate of the beam and $ x_{\scriptsize\mbox{b,r}} $ the beam end point on the right. For clarity, in Figure~\ref{fig:liu_geometry} we plot an sketch of the geometry used in the test. 
\begin{figure}[pos=h!]
	\centering
	\includegraphics[width=0.6\textwidth]{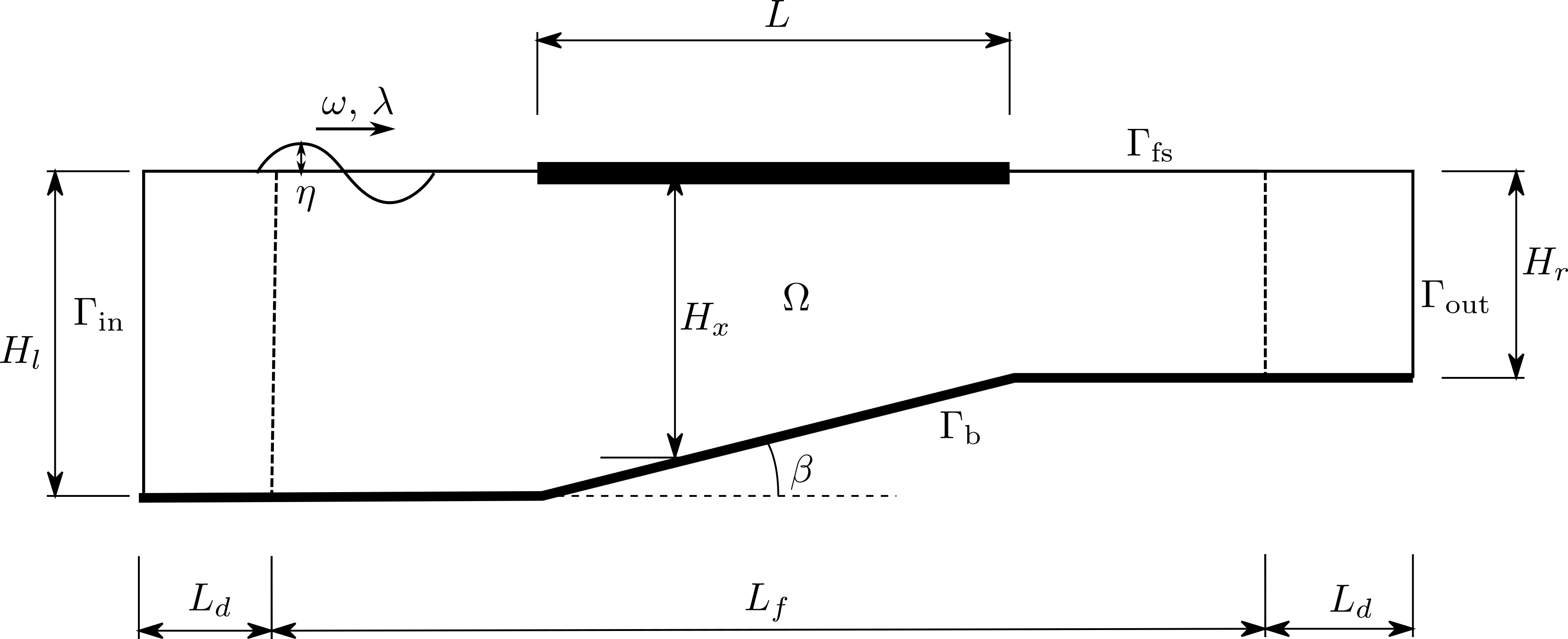}
	\caption{Sketch with the definition of the geometry used in \cite{liu2020dmm}.}	
	\label{fig:liu_geometry}
\end{figure}

The same incoming wave conditions as defined in equations~\eqref{eq:exact}, together with the damping zones defined in equations~\eqref{eq:damping_bcs} are used in this test, with the only difference of the choice of $ \mu_0=10 $. The damping regions at the inlet and outlet are taken as four times the structure length, $ L_d=4L $. The final set of parameters used for this test are summarized in Table~\ref{tab:Liu_properties}.
\begin{table}[pos=h]
	\centering
	\caption{Liu \textit{et al.} test parameters.}
	\label{tab:Liu_properties}
	\begin{tabular}{lccc}
		Parameter&Symbol&Value&Units\\ \hline
		Draft&$d_0$&$0.4878$&$\mbox{m}$\\
		Structure length&$L$&$300$&$\mbox{m}$\\
		Fluid domain length&$L_f$&$1500$&$\mbox{m}$\\
		Tank depth left side&$H_l$&$60$&$\mbox{m}$\\
		Tank depth right side&$H_r$&$30$&$\mbox{m}$\\
		Structure rigidity&$D$&$1.0e10$&$\mbox{N m}$\\
		Gravity acceleration&$g$&9.81&$\mbox{m}/\mbox{s}^2$\\
		Wave frequency&$\omega$&$ 0.4 $ and $ 0.8 $&rad/s\\
	\end{tabular}
\end{table}

As previously mentioned, the geometry is discretized with an unstructured grid. We define an element size of $ h=\frac{L}{50}=6 $m at the free surface and a characteristic element size of $ h=\frac{L}{25}=12 $m at the sea bed. A close-up view of the mesh used around the floating beam is depicted in Figure~\ref{fig:liu_mesh}.
\begin{figure}[pos=h!]
	\centering
	\includegraphics[width=0.8\textwidth,clip=true,trim=0 0 0 3.9cm]{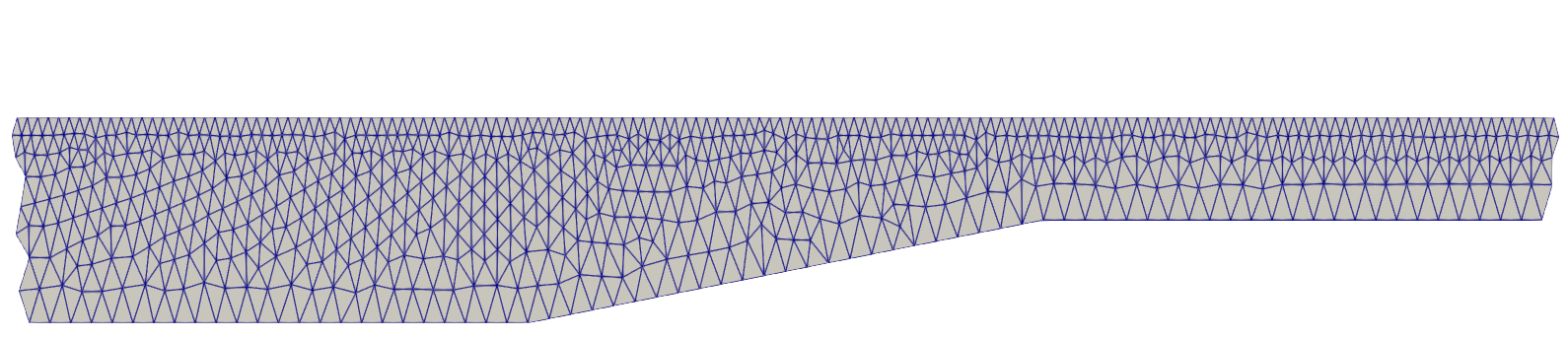}
	\caption{Close-up view of the mesh used to solve the Liu \textit{et al.} test. The vertical direction is scaled 4:1.}	
	\label{fig:liu_mesh}
\end{figure}

In Figure~\ref{fig:Liu_experiment} we plot the normalized surface elevation for the cases $ \omega=0.4 $ and $ \omega=0.8 $. We compare the results with those reported in \cite{liu2020dmm} for the same cases. It is observed that the results of the proposed monolithic formulation match very well the results from literature.
\begin{figure}[pos=h!]
	\centering
	\includegraphics[width=0.49\textwidth]{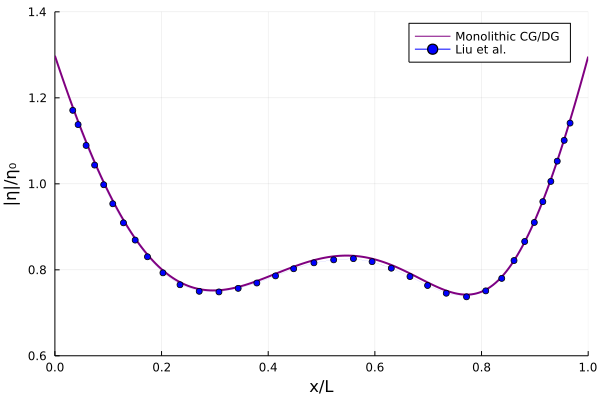}
	\includegraphics[width=0.49\textwidth]{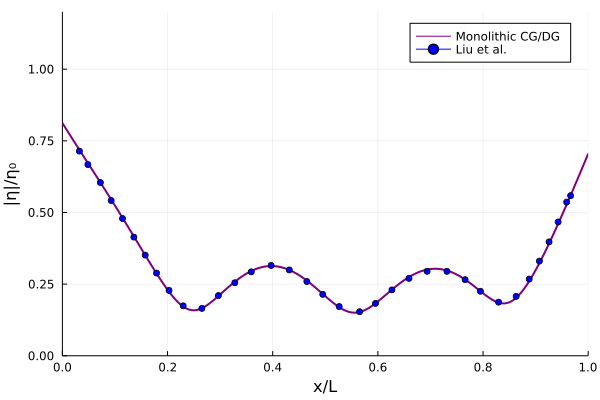}
	\caption{Relative surface elevation at the beam for the case $ \omega=0.4 $ (left) and $ \omega=0.8 $ (right).}	
	\label{fig:Liu_experiment}
\end{figure} 

To have a better understanding of the behavior of the floating beam, we also plot the real, imaginary and absolute values of the velocity potential and the surface elevation fields, see Figure~\ref{fig:liu_experiment_solution_omega04} for the case $ \omega=0.4 $ and Figure~\ref{fig:liu_experiment_solution_omega08} for the case $ \omega=0.8 $.
\begin{figure}[pos=h!]
	\centering
	\includegraphics[width=0.7\textwidth]{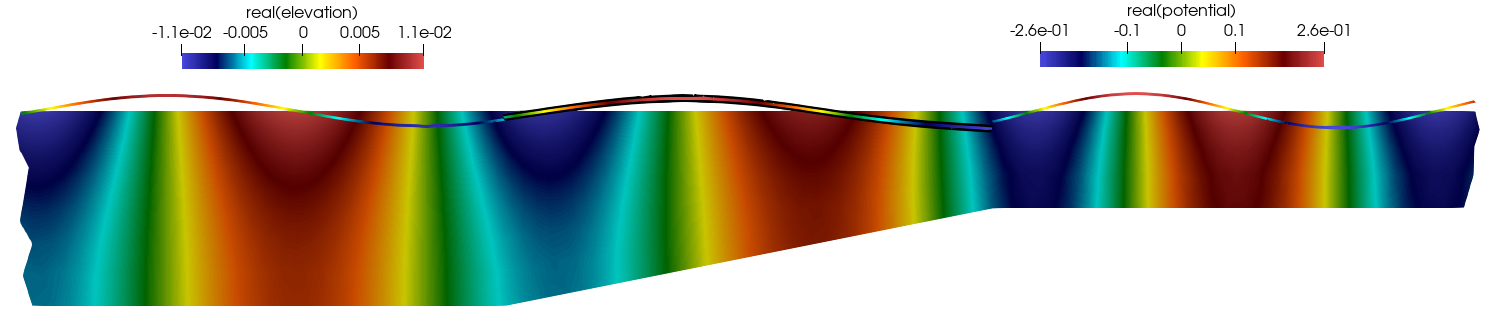}\\
	\includegraphics[width=0.7\textwidth]{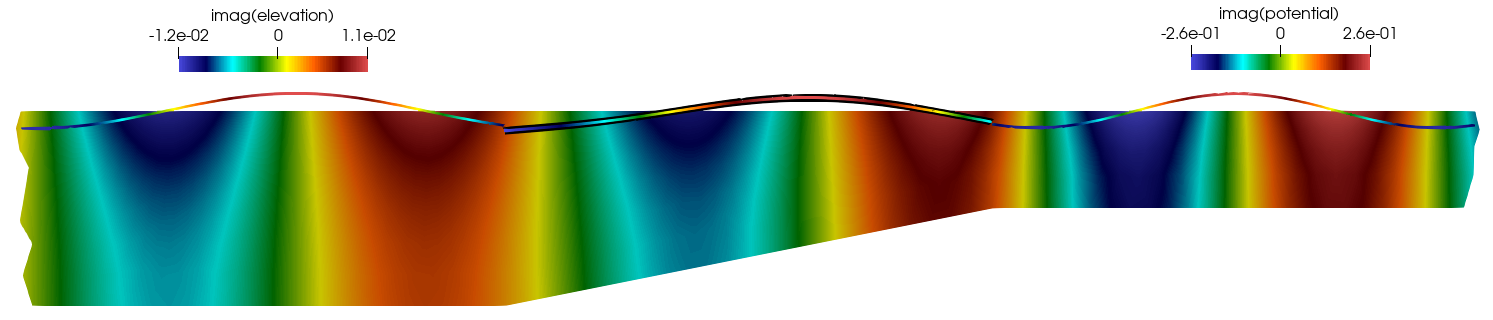}\\
	\includegraphics[width=0.7\textwidth]{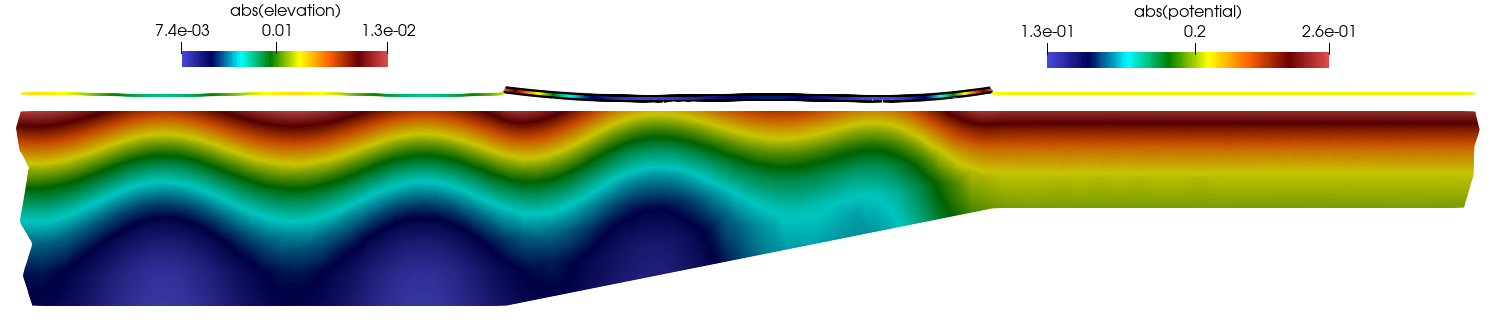}
	\caption{Close-up view of the velocity potential, $ \phi_{h} $, and surface elevation, $ \eta_h $, for the case $ \omega=0.4 $. Real part (top), imaginary part (center) and absolute values (bottom). The vertical direction of the domain is scaled 4:1 and the surface elevation is scaled by 1000. The beam region is shadowed in black.}
	\label{fig:liu_experiment_solution_omega04}	
\end{figure} 
\begin{figure}[pos=h!]
	\centering
	\includegraphics[width=0.7\textwidth]{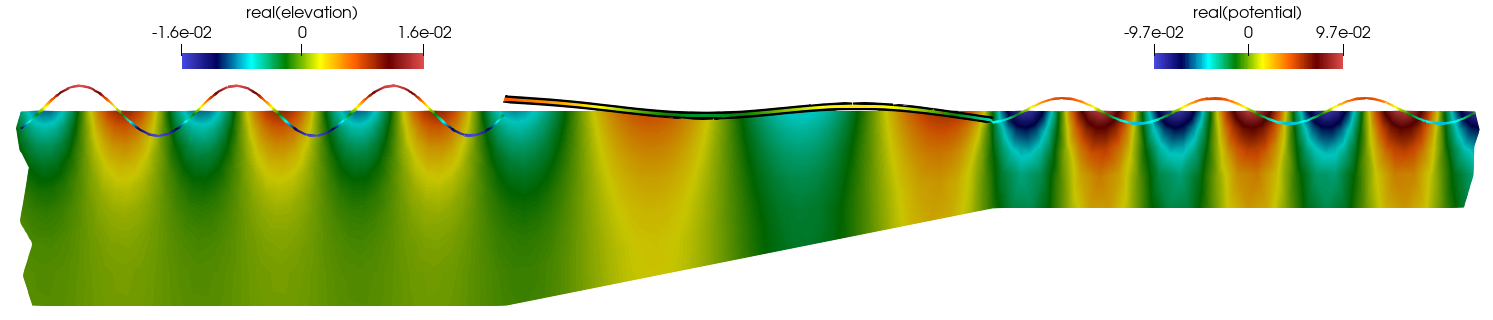}\\
	\includegraphics[width=0.7\textwidth]{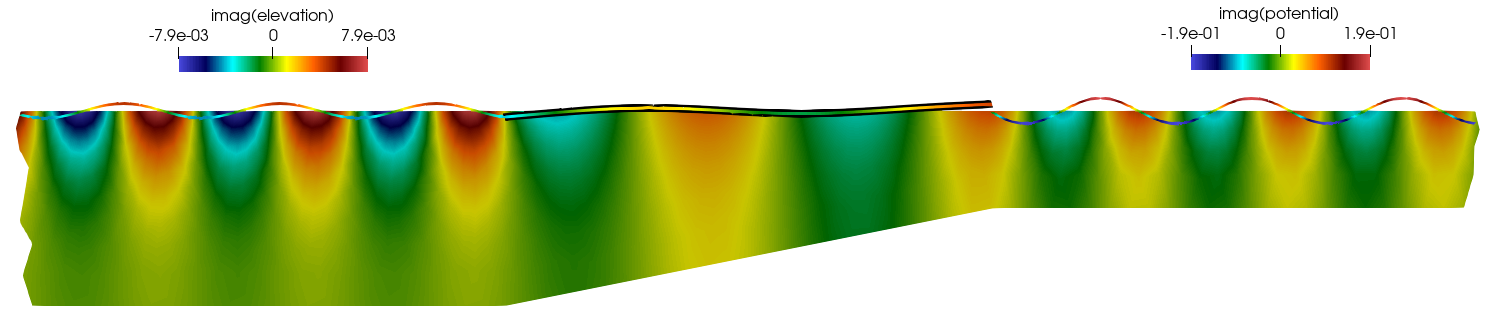}\\
	\includegraphics[width=0.7\textwidth]{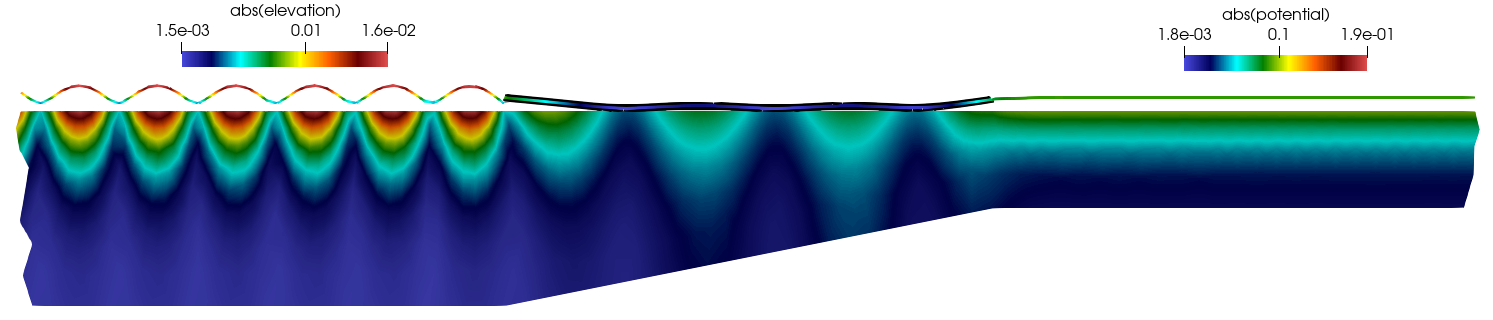}
	\caption{Close-up view of the velocity potential, $ \phi_{h} $, and surface elevation, $ \eta_h $, for the case $ \omega=0.8 $. Real part (top), imaginary part (center) and absolute values (bottom). The vertical direction of the domain is scaled 4:1 and the surface elevation is scaled by 1000. The beam region is shadowed in black.}
	\label{fig:liu_experiment_solution_omega08}	
\end{figure} 

\subsection{Floating plate}
Once analysed the formulation for an infinite-dimensional floating beam, we assess the behaviour of the novel method for the simulation of finite floating thin 2-dimensional structures in  3-dimensional domains. Here we use the setting used in the test reported in the numerical study of Fu \textit{et al.}~\cite{fu2007hydroelastic}, based on the experimental study of Yago \textit{et al.}~\cite{yago1996hydoroelastic} were a scaled mat-like structure model is assessed in a  wave tank. In Table \ref{tab:yago_properties} we summarize the parameter values used in this test.
 \begin{table}[pos=h]
	\centering
	\caption{Yago \textit{et al.} test parameters.}
	\label{tab:yago_properties}
	\begin{tabular}{lccc}
		Parameter&Symbol&Value&Units\\ \hline
		Structure length&$L$&$300$&$\mbox{m}$\\
		Structure width&$B$&$60$&$\mbox{m}$\\
		Structure height&$h_b$&$2$&$\mbox{m}$\\
		Structure density&$\rho_b$&$256.25$&$\mbox{kg}/\mbox{m}^3$\\
		Structure Draft&$d_0$&$0.5$&$\mbox{m}$\\
		Structure Young modulus&$E$&$1.19e10$&$\mbox{Pa}$\\
		Structure Poisson coefficient&$\nu$&$0.13$&$-$\\
		Tank length&$2L_d+L_f$&$3000$&$\mbox{m}$\\
		Tank width&$B_f$&$840$&$\mbox{m}$\\
		Tank depth &$H$&$58.5$&$\mbox{m}$\\
		Gravity acceleration&$g$&9.81&$\mbox{m}/\mbox{s}^2$\\
		Wavelength&$\lambda$&$ 0.4L $, $ 0.6L $ and $ 0.8L $&$ \mbox{m} $\\
	\end{tabular}
\end{table}

The computational domain, sketched in Figure \ref{fig:yago_geometry}, has size $ (2L_d+L_f)\times B_f\times H $, with the front edge of the structure located at a distance $ 4.5L $ from the inlet, and the plate side edges at a distance $ 6.5B $ from the left wall of the tank. At the inlet and outlet of the tank we define a damping region of length $ L_d=4\lambda $ with the same damping terms as described in equation~\eqref{eq:damping_bcs}. In this case, we select $ \mu_0=6.0 $ and the variables $ \phi^* $ and $ \eta^* $ are given by equation~\eqref{eq:exact_3d} at the inlet and zero at the outlet.
\begin{figure}[pos=h!]
	\centering
	\includegraphics[width=0.6\textwidth]{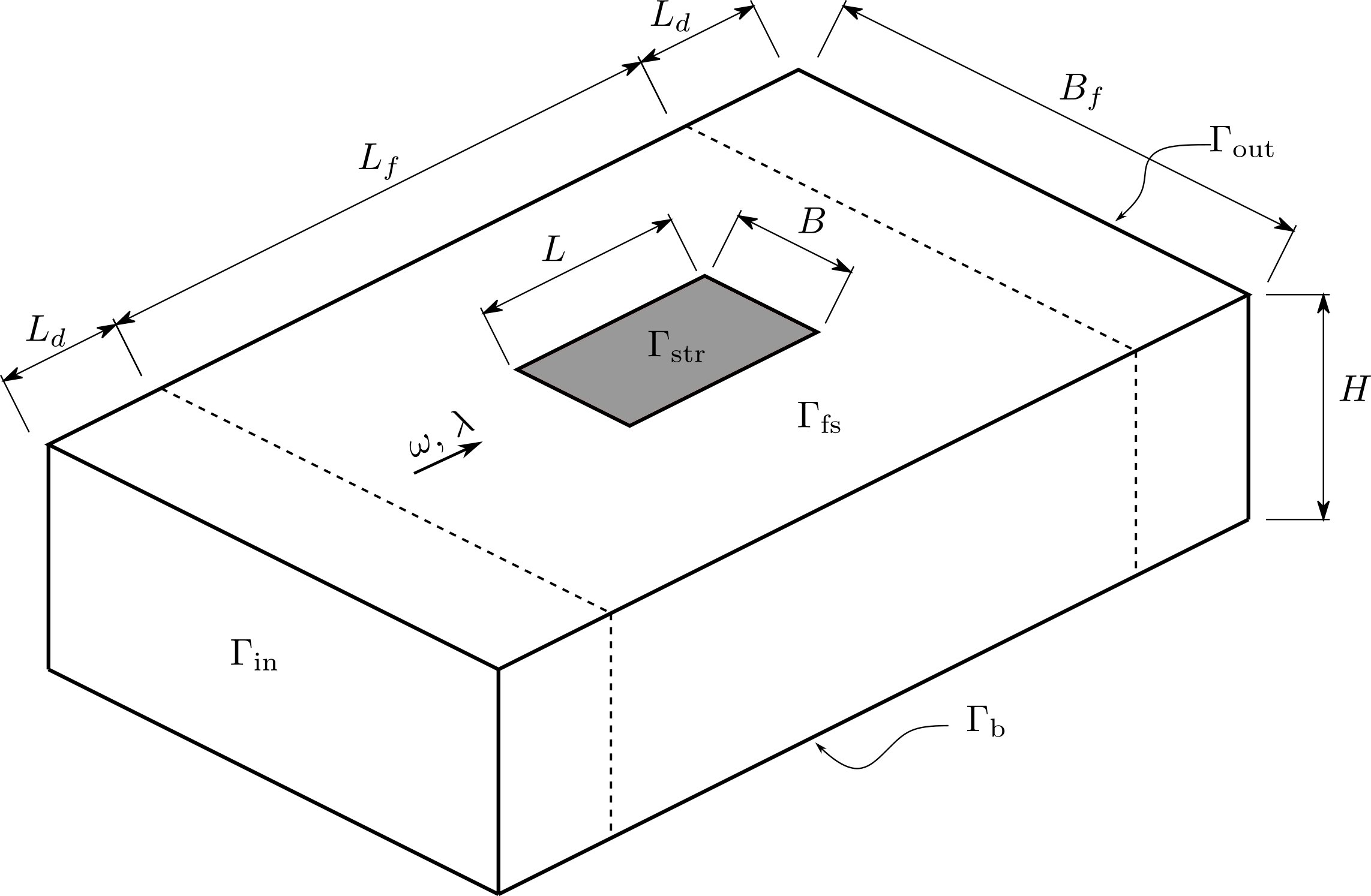}
	\caption{Sketch with the definition of the geometry used in the floating plate test.}	
	\label{fig:yago_geometry}
\end{figure}
\begin{subequations}\label{eq:exact_3d}
	\begin{align}
		\label{eq:exact_phi_3d}
		\phi((x,y,z),t) =& -\frac{\eta_0\omega}{k_\lambda}\frac{\cosh(k_\lambda z)}{\sinh(k_\lambda  H)}\sin(k_\lambda  x-\omega t),\\
		\label{eq:exact_eta_3d}
		\eta((x,y,z),t)=&\eta_0\cos(k_\lambda x-\omega t).
	\end{align}
\end{subequations}

In this test we use a different composition of the FE spaces for the potential and surface elevation. In order to reduce the overall computational time, we use Lagrange polynomials of order 2 for the potential FE space, $ \hat{\V}_{h} $, and 4th order Lagrange polynomials for the surface elevation FE space, $ \hat{\V}_{\Gamma,h} $. We use 32 elements through the $ x $-direction of the plate, \textit{i.e.} 320 elements in total in the $ x $-direction of the domain, 4 elements in the $ y $-direction of the plate, resulting in 56 elements in the $ y $-direction of the domain, and 4 elements in the vertical direction with exponential refinement close to the free surface.
\begin{figure}[pos=h!]
	\centering
	\includegraphics[width=0.49\textwidth]{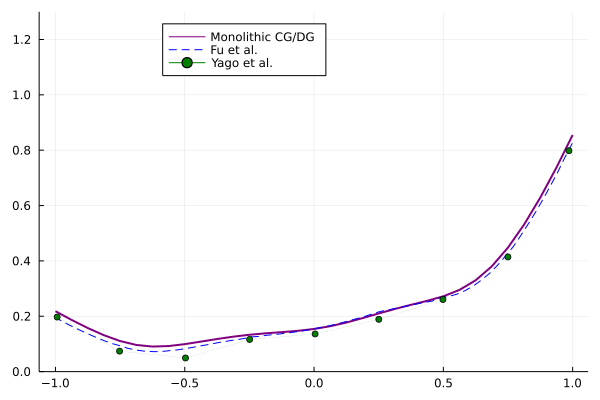}
	\includegraphics[width=0.49\textwidth]{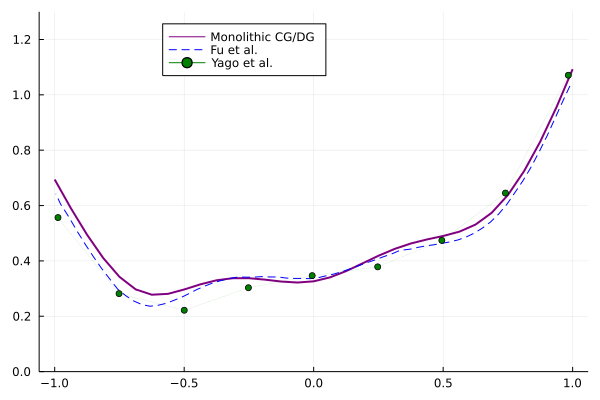}\\
	\includegraphics[width=0.49\textwidth]{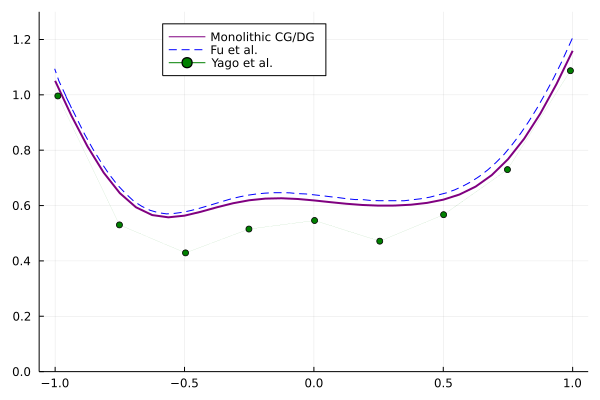}
	\caption{Relative surface elevation at the plate centerline for the case $ \lambda=0.4L $ (top left), $ \lambda=0.6L $ (top right) and $ \lambda=0.8L $ (bottom).}	
	\label{fig:Yago_experiment}
\end{figure} 

Looking at the relative surface elevation at the centerline of the plate shown in Figure~\ref{fig:Yago_experiment}, we see that the proposed approach is in good agreement with the numerical results of  Fu \textit{et al.}~\cite{fu2007hydroelastic}. This is the case for the three wave settings $ \lambda=0.4L $, $ \lambda = 0.6L $ and $ \lambda=0.8L $. In Figure~\ref{fig:yago_omega04} we depict the real, imaginary and absolute value of the surface elevation for the case $ \lambda=0.4L $.
\begin{figure}[p]
	\centering
	\includegraphics[clip=true,trim=13cm 0 15cm 0,width=0.65\textwidth]{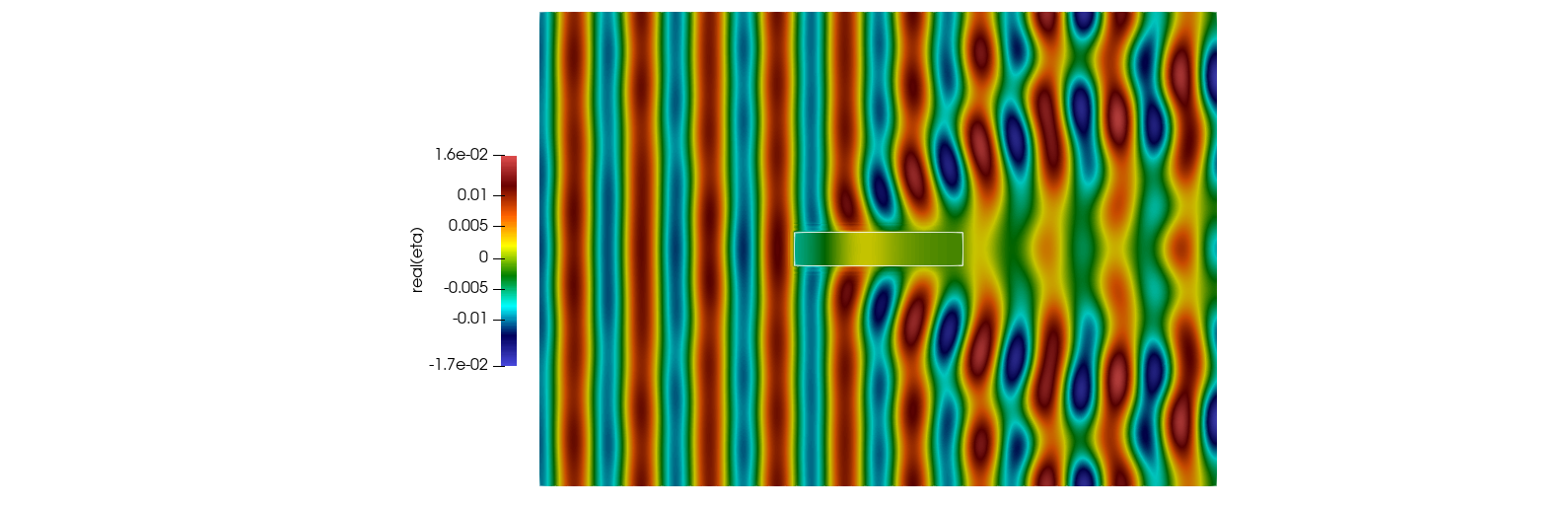}\\
	\includegraphics[clip=true,trim=13cm 0 15cm 0,width=0.65\textwidth]{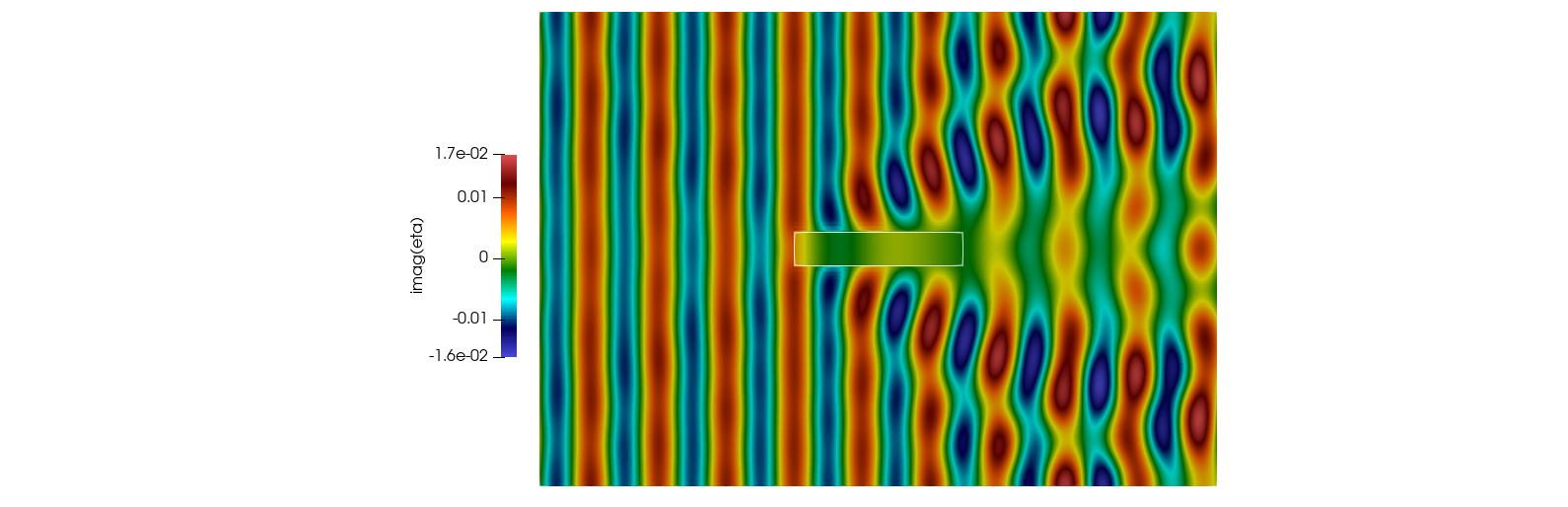}\\
	\includegraphics[clip=true,trim=13cm 0 15cm 0,width=0.65\textwidth]{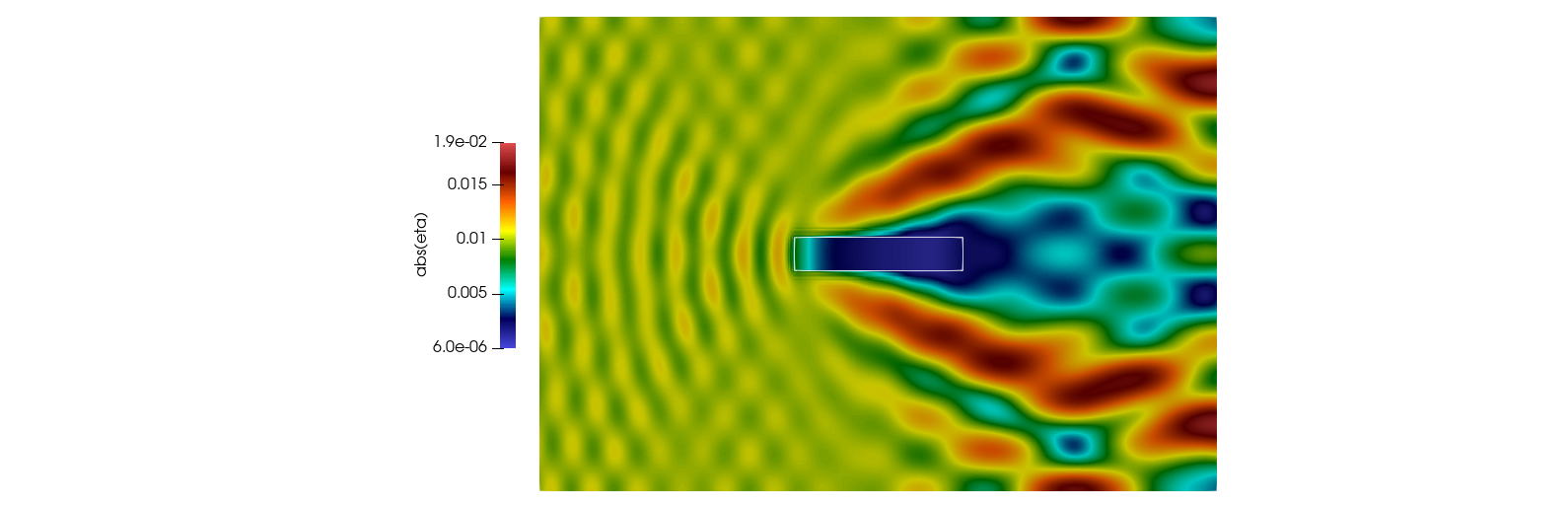}
	\caption{Surface elevation, $ \eta_h $ and $ \kappa_{h} $, for the case $ \lambda=0.4L $. Real part (top), imaginary part (center) and absolute values (bottom). The plate boundaries are marked in white.}
	\label{fig:yago_omega04}	
\end{figure}

\section{Conclusions}\label{sec:conclusions}
In this manuscript we present a novel monolithic FE formulation for the hydroelastic analysis of thin floating structures that can be described by the Euler-Bernoulli beam theory, for the 2-dimensional case, or by the Poisson-Kirchhoff plate theory. We define the new formulation for both, $C^1$ and $C^0$ FE spaces. The later uses a continuous/discontinuous Galerkin approach for the displacements and rotations, respectively, resulting in consistent, stable and energy-conserving formulations. We show a practical implementation of the monolithic formulation in the pure Julia library Gridap.jl. We have shown that the statements proven in the numerical analysis section are supported by the numerical results. We also see that the method proposed in this manuscripts leads to results that are in good agreement with other experimental and numerical works in the literature. We test the method for a wide variety of cases, including 2 and 3-dimensional geometries, structures with elastic joints or domains with variable bathymetry.

\section*{Acknowledgment}

F. Verdugo acknowledges support from the “Severo Ochoa Program for Centers of Excellence in
R\&D (2019-2023)” under the grant CEX2018-000797-S funded by MCIN/AEI/10.13039/501100011033.

\printcredits

\section*{Appendix A. Formulation for the floating Euler-Bernoulli beam}\label{sec:appendix_A}
\subsection*{Weak form}
Following the same steps as described in Section~\ref{subsec:weak_form}, one can derive the weak form for the 2-dimensional case, where the floating structure is modeled as a 1-dimensional Euler-Bernoulli beam. In that case, the equivalent bilinear form to \Eq{eq:bilinear_modified} would read
\begin{align}\label{eq:bilinear_modified_EB}
	B_{\scriptsize\mbox{EB}}([\phi,\eta],[w,v])\eqdef&(\nphi,\nabla w)_\Omega - (\eta_t,w)_{\Gf\cup\Gs}\\\nonumber
	+&\beta\left(\phi_t+g\eta,\alpha_fw+v\right)_{\Gf} +\left(d_0\eta_{tt} + \phi_t+g\eta,v\right)_{\Gs} + \left( \Drho\Delta\eta,\Delta v\right)_{\Gs}.
\end{align}

Equivalently, the bilinear form for a structure with joints is given by	
\begin{align}\label{eq:bilinear_modified_joints_beam}
	B_{\scriptsize\mbox{j,EB}}([\phi,\eta],[w,v])\eqdef&B_{\scriptsize\mbox{EB}}([\phi,\eta],[w,v])+\left(k_\rho\jump{\nabla\eta\cdot\nL},\jump{\nabla v\cdot\nL}\right)_{\Lj}.
\end{align}

\subsection*{Spatial discretization}
The equivalent semi-discrete problem using a C/DG formulation for the case of a floating Euler-Bernoulli beam reads: find $ [\phi_h,\eta_h] \in\hatV_h\times\hatV_{\Gamma,h}$ such that
\begin{equation}\label{eq:cgdg_form_beam}
	\hat{B}_{h,{\scriptsize\mbox{EB}}}([\phi_h,\eta_h],[w_h,v_h])=L_h([w_h,v_h])\quad\forall[w_h,v_h]\in\hatV_h\times\hatV_{\Gamma,h}.
\end{equation}
Where 
\begin{align}\label{eq:bilinear_modified_h_cgdg_beam}
	\hat{B}_{h,{\scriptsize\mbox{EB}}}([\phi_h,\eta_h],[w_h,v_h])\eqdef
	&(\nabla\phi_h,\nabla w_h)_{\Omega_h} - (\eta_{h,t},w_h)_{\Gfh\cup\Gsh}\\\nonumber
	+&\beta\left(\phi_{h,t}+g\eta_h,\alpha_fw_h+v_h\right)_{\Gfh} +\left(d_0\eta_{h,tt} 
	+ \phi_{h,t}+g\eta_h,v_h\right)_{\Gsh} \\\nonumber
	+&\left( \Drho\Delta\eta_h,\Delta v_h\right)_{\Gsh} +\left(k_\rho\jump{\nabla\eta_h\cdot\nL},\jump{\nabla v_h\cdot\nL}\right)_{\Lj} \\\nonumber
	-&\left( \average{\Drho\Delta\eta_h},\jump{\nabla v_h\cdot\nL}\right)_{\Lsh} 
	- \left(\jump{\nabla\eta_h\cdot\nL},\average{\Drho\Delta v_h}\right)_{\Lsh}\\\nonumber
	+&\frac{\gamma}{h}\left(\jump{\Drho\nabla\eta_h\cdot\nL},\jump{\nabla v_h\cdot\nL}\right)_{\Lsh}.
\end{align}

\subsection*{Time discretization}
The equivalent fully discrete problem in the frequency domain, as it has been defined in Section~\ref{subsubsec:freq_domain}, for the case of a 1-dimensional floating Euler-Bernoulli beam in a 2-dimensional domain would be given by: find $ [\phi_h,\eta_h] \in\hatV^\omega_h\times\hatV^\omega_{\Gamma,h}$ such that
\begin{equation}\label{eq:cgdg_form_freq_domain_beam}
	\hat{B}^\omega_{h,{\scriptsize\mbox{EB}}}([\phi_h,\eta_h],[w_h,v_h])=L^\omega_h([w_h,v_h])\quad\forall[w_h,v_h]\in\hatV^\omega_h\times\hatV^\omega_{\Gamma,h}.
\end{equation}
With
\begin{align}\label{eq:bilinear_modified_h_cgdg_freq_domain_beam}
	\hat{B}^\omega_{h,{\scriptsize\mbox{EB}}}([\phi_h,\eta_h],[w_h,v_h])\eqdef&(\nabla\phi_h,\nabla w_h)_{\Omega_h} 
	+ (i\omega\eta_{h},w_h)_{\Gfh\cup\Gsh}\\\nonumber
	+&\beta\left(g\eta_h-i\omega\phi_{h},\alpha^\omega_fw_h+v_h\right)_{\Gfh} +\left((g-\omega^2d_0)\eta_{h} - i\omega\phi_{h},v_h\right)_{\Gsh}\\\nonumber
	+&\left( \Drho\Delta\eta_h,\Delta v_h\right)_{\Gsh} 
	+ \left(k_\rho\jump{\nabla\eta_h\cdot\nL},\jump{\nabla v_h\cdot\nL}\right)_{\Lj} \\\nonumber
	-&\left( \average{\Drho\Delta\eta_h},\jump{\nabla v_h\cdot\n_{\Lsh}}\right)_{\Lsh} 
	- \left(\jump{\Drho\nabla\eta_h\cdot\n_{\Lsh}},\average{\Delta v}\right)_{\Lsh}\\\nonumber
	+&\frac{\gamma}{h}\left(\jump{\Drho\nabla\eta_h\cdot\n_{\Lsh}},\jump{\nabla v_h\cdot\n_{\Lsh}}\right)_{\Lsh}.
\end{align}

Alternatively, the fully discrete problem in the time domain, as it has been defined in Section~\ref{subsubsec:time_domain}, for the case of a 1-dimensional floating Euler-Bernoulli beam in a 2-dimensional domain would be given by: find $ [\phi_h^{n+1},\eta_h^{n+1}] \in\hatV_h\times\hatV_{\Gamma,h}$ such that
\begin{equation}\label{eq:cgdg_t_form_beam}
	\hat{B}_{h,{\scriptsize\mbox{EB}}}^{n+1}([\phi_h^{n+1},\eta_h^{n+1}],[w_h,v_h])=L_h^{n+1}([w_h,v_h])\quad\forall[w_h,v_h]\in\hatV_h\times\hatV_{\Gamma,h}.
\end{equation}
Where
\begin{align}\label{eq:bilinear_modified_h_t_cgdg_beam}
	\hat{B}_{h,{\scriptsize\mbox{EB}}}^{n+1}([\phi_h^{n+1},\eta_h^{n+1}],[w_h,v_h])\eqdef
	&(\nabla\phi_h^{n+1},\nabla w_h)_{\Omega_h} 
	- (\delta_t\eta_{h}^{n+1},w_h)_{\Gfh\cup\Gsh}\\\nonumber
	+&\beta\left(\delta_t\phi_{h}^{n+1}+g\eta_h^{n+1},\alpha_fw_h+v_h\right)_{\Gfh}
	+\left(\delta_{tt}d_0\eta_{h}^{n+1} + \delta_t\phi_{h}^{n+1}+g\eta_h^{n+1},v_h\right)_{\Gsh}\\\nonumber 
	+&\left( \Drho\Delta\eta_h^{n+1},\Delta v_h\right)_{\Gsh} 
	+\left(k_\rho\jump{\nabla\eta_h^{n+1}\cdot\nL},\jump{\nabla v_h\cdot\nL}\right)_{\Lj}\\\nonumber
	-&\left( \average{\Drho\Delta\eta_h^{n+1}},\jump{\nabla v_h\cdot\nL}\right)_{\Lsh} 
	- \left(\jump{\Drho\nabla\eta_h^{n+1}\cdot\nL},\average{\Delta v}\right)_{\Lsh}\\\nonumber
	+&\frac{\gamma}{h}\left(\jump{\Drho\nabla\eta_h^{n+1}\cdot\nL},\jump{\nabla v_h\cdot\nL}\right)_{\Lsh} .
\end{align}

\bibliographystyle{cas-model2-names}

\bibliography{refs.bib}

\end{document}